\renewcommand{\paragraph}[1]{\medskip\noindent\textbf{#1}\ }
\newcommand{\mysubparagraph}[1]{\noindent\textit{#1}\ }
\newcommand{\Def}[1]{{\em #1\/}}
\newcommand{\Rem}[1]{{\em #1\/}}
\newcommand{\eg}{{\it e.g.}}
\newcommand{\ie}{{\it i.e.}}
\newcommand{\cf}{{\it cf.}}
\newcommand{\viz}{{\it viz.}}
\newcommand{\qed}{\hfill$\Box$}
\newcommand{\myiff}{\mathrel{\Leftarrow\!\!\!\!\!\!\Rightarrow}}
\newcommand{\Set}{\mathbf{Set}}
\newcommand{\TypeSet}{\mathcal{T}}
\newcommand{\BaseTypeSet}{\mathrm{T}}
\newcommand{\abasetype}{\theta}
\newcommand{\atype}{\tau}
\newcommand{\antype}{\sigma}
\newcommand{\TypeClosure}[1]{\widetilde{#1}}
\newcommand{\semfun}{\mbox{\boldmath$s$}}
\newcommand{\trivsemfun}{\mbox{\boldmath$t$}}
\newcommand{\freesemfun}{\mbox{\boldmath$f$}}
\newcommand{\gsemfun}{\ext\semfun}
\newcommand{\yon}{\mathbf{y}}
\newcommand{\commayon}{\ext\yon}
\newcommand{\ctxtemb}[1]{\langle #1 \rangle}
\newcommand{\ext}[1]{\overline{#1}}
\newcommand{\clone}[1]{\langle #1 \rangle}
\newcommand{\nt}[1]{\underline{#1}}
\newcommand{\VarPSh}[1]{\mathrm{V}_{#1}}
\newcommand{\pair}[1]{\langle #1 \rangle}
\newcommand{\setof}[1]{\{\, #1 \,\}}
\newcommand{\suchthat}{\mid}
\newcommand{\dom}{\mathrm{dom}}
\newcommand{\im}{\mathrm{im}}
\newcommand{\union}{\cup}
\newcommand{\Lan}{\mathrm{Lan}}
\newcommand{\fstarg}{\xy(0,-1)\ar@{-}(2,-1)\endxy}
\newcommand{\simplefstarg}{\_\!\_}
\newcommand{\sndarg}{\xy(0,-1)\ar@{=}(2,-1)\endxy}
\newcommand{\ev}{\varepsilon}
\newcommand{\proj}{\pi}
\newcommand{\projop}{\pi}
\newcommand{\unitop}{\pair{ }}
\newcommand{\pairop}{\pair}
\newcommand{\lambdaop}[2]{\lambda\, #1.\, #2}
\newcommand{\semlambdaop}[2]
{\lambda\:\!\!\!\!\lambdaop{#1}{#2}}
\newcommand{\F}[1]{\mathbb{F}_{#1}}
\newcommand{\FreeCCC}{\mathcal{F}_{\mathrm{ccc}}}
\newcommand{\cat}[1]{\mathcal{#1}}
\newcommand{\Semcat}{\cat S}
\newcommand{\catc}{\scat C}
\newcommand{\scat}[1]{\mathbb{#1}}
\newcommand{\op}{\mathrm{op}}
\newcommand{\iso}{\cong}
\newcommand{\id}{\mathrm{id}}
\newcommand{\comp}{\circ}
\newcommand{\obj}[1]{\mbox{$\mid\!#1\!\mid$}}
\newcommand{\scriptobj}[1]{\mid\!#1\!\mid}
\newcommand{\comma}{\!\downarrow\!}
\newcommand{\terminalobj}{1}
\newcommand{\Hom}[1]{\mbox{\Large$[$}#1\mbox{\Large$]$}}
\newcommand{\arHom}[1]{[#1]}
\newcommand{\puncture}{-^\shortmid_\shortmid-}
\newcommand{\SSSTerm}{\Sigma}
\newcommand{\SSSNE}{\Sigma_1}
\newcommand{\SSSNF}{\Sigma_2}
\newcommand{\EEE}[1]{\mathrm{E}_{#1}}
\newcommand{\VVVar}{{\mathscr{X}}}
\newcommand{\VVVarr}{{\mathscr{Y}}}
\newcommand{\arrowtype}{\mathtt{=}\!\!\mathtt{>}}
\newcommand{\unittype}{\mathtt{1}}
\newcommand{\producttype}{*}
\newcommand{\TermSet}{{\mathscr{L}}}
\newcommand{\TermPSh}[1]{\TermSet_{#1}}
\newcommand{\TerminalPSh}{1}
\newcommand{\Termsem}[1]{\ell_{#1}}
\newcommand{\extTermsem}[1]{\ext\ell_{#1}}
\newcommand{\NETerm}{{\mathscr{M}}}
\newcommand{\NEPSh}[1]{{\NETerm_{#1}}}
\newcommand{\SubNEPSh}[1]{{\mathscr{P}_{#1}}}
\newcommand{\SubNEmap}[1]{\imath_{#1}}
\newcommand{\NEsem}[1]{m_{#1}}
\newcommand{\extNEsem}[1]{\ext{m}_{#1}}
\newcommand{\NEobj}[1]{\mu_{#1}}
\newcommand{\NFTerm}{{\mathscr{N}}}
\newcommand{\NFPSh}[1]{{\NFTerm_{#1}}}
\newcommand{\SubNFPSh}[1]{{\mathscr{Q}_{#1}}}
\newcommand{\SubNFmap}[1]{\jmath_{#1}}
\newcommand{\NFsem}[1]{n_{#1}}
\newcommand{\extNFsem}[1]{\ext{n}_{#1}}
\newcommand{\NFobj}[1]{\eta_{#1}}
\newcommand{\SemPSh}[1]{{\mathscr{C}_{#1}}}
\newcommand{\NSSemPSh}[1]{{\mathscr{S}_{#1}}}
\newcommand{\NSsem}[1]{\sigma_{#1}}
\newcommand{\Varobj}[1]{\nu_{#1}}
\newcommand{\norm}{\mathsf{norm}}
\newcommand{\nf}[2]{#2\mbox{-}\mathsf{nf}_{#1}}
\newcommand{\simplenf}[1]{\mathsf{nf}_{#1}}
\newcommand{\Gluecat}{\mbox{$\Set^{\F{}\comma\TypeClosure\BaseTypeSet}\comma\clone{\semfun\sembracket\fstarg}$}}
\newcommand{\Sproj}{\pi}
\newcommand{\Pproj}{\varpi}
\newcommand{\unquotemap}[1]{\mathrm{u}_{#1}}
\newcommand{\quotemap}[1]{\mathrm{q}_{#1}}
\newcommand{\vmap}{\mathrm{v}}
\newcommand{\varmap}{\mathsf{var}}
\newcommand{\appmap}{\mathsf{app}}
\newcommand{\fstmap}{\mathsf{fst}}
\newcommand{\sndmap}{\mathsf{snd}}
\newcommand{\absmap}{\mathsf{abs}}
\newcommand{\unitmap}{\mathsf{unit}}
\newcommand{\pairmap}{\mathsf{pair}}
\newcommand{\subst}[2]{^{#1}\!/_{#2}}
\newcommand{\entails}{\vdash}
\newcommand{\NEentails}{\vdash_\mathscr{M}}
\newcommand{\NFentails}{\vdash_\mathscr{N}}
\newcommand{\llbracket}{[\![}
\newcommand{\rrbracket}{]\!]}
\newcommand{\sembracket}[1]{\llbracket #1 \rrbracket}
\newcommand{\TerminalLogRel}{\top}
\newcommand{\ProductLogRel}{\wedge}
\newcommand{\ArrowLogRel}{\supset}
\newcommand{\semLogRel}{\mathscr{R}}
\newcommand{\defLogRel}{\mathscr{D}}
\newcommand{\lowerLogRel}{\mathscr{A}}
\newcommand{\upperLogRel}{\mathscr{B}}
\newcommand{\mLogRel}{\mathscr{DM}}
\newcommand{\nLogRel}{\mathscr{DN}}
\newcommand{\K}[1]{\mbox{\underline{$\mathrm{K}$}}\langle #1\rangle}
\newcommand{\Int}{\mathcal{I}}
\newcommand{\TerminalObj}{1}
\newcommand{\InitialObj}{0}
\newcommand{\CatProduct}{\times}
\newcommand{\CatCoproduct}{+}
\newcommand{\CatArrow}{\shortRightarrow}
\newcommand{\TerminalType}{\unittype}
\newcommand{\UnitType}{\unittype}
\newcommand{\ProductType}{\producttype}
\newcommand{\ArrowType}{\arrowtype}
\newcommand{\eval}{\varepsilon}
\newcommand{\arity}{\mbox{\boldmath$\varsigma$}}
\newcommand{\betaetaeq}{=_{\beta\eta}}
\newtheorem{theorem}{Theorem}
\newtheorem{corollary}[theorem]{Corollary}
\newtheorem{lemma}[theorem]{Lemma}
\newtheorem{proposition}[theorem]{Proposition}
\newtheorem{definition}[theorem]{Definition}
\newtheorem{example}[theorem]{Example}
\newenvironment{proof}
{\begin{trivlist} \item[] {\sc Proof:}}{\qed \end{trivlist}}
\newenvironment{proof*}
{\begin{trivlist} \item[] {\sc Proof:}}{\end{trivlist}}
\newenvironment{remark}
{\begin{trivlist} \item[] {\it Remark.}}{\end{trivlist}}
\newcommand{\diagram}{\xymatrix} 
\newcommand{\arrowlength}{5} 
\newcommand{\longarrowlength}{10} 
\newcommand{\longlongarrowlength}{15} 
\newcommand{\medarrowlength}{7} 
\newcommand{\myarrow}[3] 
   {\mathrel{\xy\ar(#1,0)^-{#2}_-{#3}\endxy}}
\renewcommand{\rightarrow}
   {\myarrow{\arrowlength}{}{}}
\newcommand{\stackrightarrow}[1]
   {\myarrow{\arrowlength}{#1}{}}
\newcommand{\longstackrightarrow}[1]
   {\myarrow{\longarrowlength}{#1}{}}
\newcommand{\stackleftarrow}[1]
  {\myarrow{-\arrowlength}{}{#1}}
\newcommand{\medstackrightarrow}[1]
   {\myarrow{\medarrowlength}{#1}{}}
\newcommand{\longlongstackrightarrow}[1]
   {\myarrow{\longlongarrowlength}{#1}{}}
\newcommand{\myArrow}[3] 
   {\mathrel{\xy\ar@{=>}(#1,0)^-{#2}_-{#3}\endxy}}
\renewcommand{\Rightarrow}
   {\myArrow{\arrowlength}{}{}}
\renewcommand{\Leftarrow}
   {\myArrow{-\arrowlength}{}{}}
\newcommand{\mymono}[2] 
   {\mathrel{\,\;\xy\ar@{>->}(#1,0)^-{#2}\endxy}}
\newcommand{\rightmono}
   {\mymono{\arrowlength}{}}
\newcommand{\stackrightmono}[1]
   {\mymono{\arrowlength}{#1}}
\newcommand{\longstackrightmono}[1]
   {\mymono{\longarrowlength}{#1}}
\newcommand{\longlongstackrightmono}[1]
   {\mymono{\longlongarrowlength}{#1}}
\newcommand{\doublestackrightmono}[2] 
   {\mathrel{\,\;\xy\ar@{>->}(\arrowlength,0)^-{#1}_-{#2}\endxy}}
\newcommand{\myepi}[2] 
   {\mathrel{\,\;\xy\ar@{->>}(#1,0)^-{#2}\endxy}}
\newcommand{\myembedding}[2] 
   {\mathrel{\:\xy\ar@{^(->}(#1,0)^-{#2}\endxy}}
\newcommand{\rightembedding}
   {\myembedding{\arrowlength}{}}
\newcommand{\longstackrightembedding}[1]
   {\myembedding{\longarrowlength}{#1}}
\newcommand{\verticalarrow}[3] 
{\xy (0,4)*+{\mbox{\scriptsize$#1$}}
     \ar(0,-4)*+{\mbox{\scriptsize$#3$}} _-{\mbox{$#2$}} 
 \endxy}
\newcommand{\mymapsto}[2] 
  {\mathrel{\xy\ar@{|->}(#1,0)^-{#2}\endxy}}
\renewcommand{\mapsto}
  {\mymapsto{\arrowlength}{}}
\newcommand{\mydasharrow}[3] 
   {\mathrel{\xy\ar@{.>}(#1,0)^-{#2}_-{#3}\endxy}}
\newcommand{\adjunction}[2] 
{\diagram{\ar[r]<-1ex>_-{#1} \ar@0[r]|-\top& \ar[l]<-1ex>_-{#2}}}
\newcommand{\reflection}[2] 
{\diagram{\ar[r]<-1ex>_-{#1} \ar@0[r]|-\top& \ar@{_(->}[l]<-1ex>_-{#2}}}
\newcommand{\coreflection}[2] 
{\diagram{\ar@{^(->}[r]<-1ex>_-{#1} \ar@0[r]|-\top& \ar[l]<-1ex>_-{#2}}}
\newcommand{\fib}[3] 
{\xy (0,4)*+{\mbox{\scriptsize$#1$}} 
     \ar(0,-4)*+{\mbox{\scriptsize$#3$}} _-{\mbox{$#2$}} 
 \endxy}
\newcommand{\fibmapsto}[3] 
{\xy (0,4)*+{\mbox{\scriptsize$#1$}} 
 \ar@{|.>}(0,-4)*+{\mbox{\scriptsize$#3$}} _-{\mbox{$#2$}} 
 \endxy}
\newcommand{\leftparallelarrow} 
  {\diagram@C=15pt{& \ar[l]<-1ex> \ar[l]<1ex>}}
\newcommand{\stackleftparallelarrow}[2]              
  {\diagram{& \ar[l]<-1ex>_-{#1} \ar[l]<1ex>^-{#2}}} 
\newcommand{\stackleftrightarrow}[2]                 
  {\diagram{\ar[r]<0.5ex>^-{#1} & \ar[l]<0.5ex>^-{#2}}} 
\newcommand{\shortarrowlength}{4} 
\newcommand{\shortRightarrow}{\myArrow{\shortarrowlength}{}{}}
\begin{document}


\title{Semantic Analysis of Normalisation by Evaluation\\ 
       for Typed Lambda Calculus\footnote{This is 
         a slight revision, with an implementation, of the full
         version, with proofs, of February~2003 for the extended
         abstract~\cite{FioreNBE} published in October~2002.}}


\author{Marcelo Fiore\thanks{This research was supported by an EPSRC Advanced
                             Research Fellowship~(2000--2005) and partially
                             supported by EPSRC grant
                             EP/V002309/1~(2021--2024).}\\ 
  Department of Computer Science and Technology\\
  University of Cambridge}

\date{August 2022}

\maketitle

\begin{abstract} 
This paper studies normalisation by evaluation for typed lambda calculus from
a categorical and algebraic viewpoint.  
The first part of the paper analyses the lambda definability result of Jung
and Tiuryn via Kripke logical relations and shows how it can be adapted to
unify definability and normalisation, yielding an extensional normalisation
result.  
In the second part of the paper the analysis is refined further by considering
intensional Kripke relations (in the form of Artin-Wraith glueing) and shown
to provide a function for normalising terms, casting normalisation by
evaluation in the context of categorical glueing. 
The technical development includes an algebraic treatment of the syntax and
semantics of the typed lambda calculus that allows the definition of the
normalisation function to be given within a simply typed metatheory.
A normalisation-by-evaluation program in a dependently-typed functional
programming language is synthesised.
\end{abstract}

\section*{\Large Introduction}

Normalisation by evaluation for typed lambda calculus was first considered by
Berger and Schwichtenberg~\cite{BS} from a type and proof theoretic viewpoint,
and later investigated from the point of view of logic~\cite{Berger}, type
theory~\cite{CCoquand}, category theory~\cite{AHS,CDS,Reynolds}, and partial
evaluation~\cite{Danvy,Filinski99}.  This work gives a new categorical and
algebraic perspective on the topic.  

\paragraph{Outline.}
Normalisation by evaluation will be broadly viewed as the technique of giving
semantics in (metalanguages for) non-standard models from which normalisation
information can be extracted~(\cf~\cite{MartinLof}). 
In this light, we will investigate the following problems.
\begin{enumerate}
\item[I.]\Rem{Extensional normalisation problem}: To define normal terms and
  establish that every term \mbox{$\beta\eta$-equals} one in normal form.

  That is, writing $\TermPSh\tau(\Gamma)$ for the set of terms of type~$\tau$
  in context $\Gamma$, to identify a set of normal terms 
  $\NFPSh\tau(\Gamma) \subseteq \TermPSh\tau(\Gamma)$ and show that for every 
  term $t\in\TermPSh\tau(\Gamma)$ there exists a normal term 
  $N\in\NFPSh\tau(\Gamma)$ such that $t \betaetaeq N$.

\item[II.]\Rem{Intensional normalisation problem}: To define, and prove the
  correctness of, a normalisation function associating normal forms to terms.

  More precisely, to construct functions
  $$
  \simplenf{\tau,\Gamma}:
     \TermPSh\tau(\Gamma) \rightarrow \NFPSh\tau(\Gamma)
  $$
  satisfying the following three properties.
  \begin{enumerate}
  \item[(1)]For all normal terms $N\in\NFPSh\tau(\Gamma)$, the syntactic
    equality $\simplenf{\tau,\Gamma}(N)$ $= N$ holds. 

  \item[(2)]For all terms $t\in\TermPSh\tau(\Gamma)$, the semantic equality
    $\simplenf{\tau,\Gamma}(t) \betaetaeq t$ holds. 

  \item[(3)]For all pair of 
    terms $t,t'\in\TermPSh\tau(\Gamma)$, if $t\betaetaeq t'$ then 
    $\simplenf{\tau,\Gamma}(t) = \simplenf{\tau,\Gamma}(t')$.
  \end{enumerate}
  
  In the context of normalisation by evaluation, the correctness
  condition~(1) has seldom been considered ---the exception
  being~\cite{Reynolds}.  However, it is both natural and interesting.  
  For instance, together with the correctness condition~(3) it implies that
  $\beta\eta$-equal normal terms are syntactically equal, which in turn,
  together with the correctness condition~(2), entails the stronger version
  of extensional normalisation that every term \mbox{$\beta\eta$-equals} a
  unique normal term.  
\end{enumerate}

These problems will be respectively dealt with in Parts~I and~II of the paper.  
Part~I provides a unifying view of definability and normalisation leading to
an extensional normalisation result.  This analysis, besides unifying the two
hitherto unrelated problems of definability and normalisation, motivates and
elucidates the notions of neutral and normal terms, which are here distilled
from semantic considerations.  
Part~II shows that an intensional view of Part~I amounts to the traditional
technique of normalisation by evaluation.  This development leads to a
treatment of normalisation by evaluation via the Artin-Wraith glueing
construction, finally formalising the observation that normalisation by
evaluation is \Rem{closely related} to categorical
glueing~\cite{CoquandDybjer}. 

More in detail, the paper is organised as follows.
Section~I.1 briefly recalls the syntax and categorical semantics of the
typed lambda calculus.  
Section~I.2 presents an analysis of the lambda definability result of Jung
and Tiuryn via Kripke logical relations~\cite{JungTiuryn} leading to an
extensional normalisation result.
Section~II.1 describes the rudiments of a theory of typed abstract syntax
with variable binding which is used to put the typed lambda calculus in an
algebraic framework.
This algebraic view is exploited in Section~II.2 to structure the development
of an intensional version of Section~I.2 culminating in the technique of
normalisation by evaluation.

\paragraph{Related work.}
The treatment of extensional normalisation presented here is similar to Tait's
approach to strong normalisation via computability
predicates~\cite{Tait,Girard} for the typed lambda calculus, and also to
Krivine's approach to normalisation~\cite[Chapter~III]{Krivine} for the
untyped lambda calculus.  The precise relationships need to be investigated.  

The analysis of normalisation by evaluation pursued here is categorical and,
as such, is related to~\cite{AHS,CDS,Reynolds,ADHS}.   

The approach of \u{C}ubri\'c, Dybjer, and Scott~\cite{CDS} is in the context
of so-called $\mathcal{P}$-category theory; which is, roughly, a version of
category theory equipped with an intensional notion of equality formalised by
partial equivalence relations.  The intensional information needed for the
purpose of normalisation will be captured here in the context of traditional
category theory via Artin-Wraith glueing.

In Altenkirch, Hofmann, and Streicher~\cite{AHS}, normalisation by evaluation
is reconstructed categorically in a model obtained via an ad~hoc 
\Rem{twisted-glueing} construction.  This model embodies objects with both
syntactic and semantic components, and translations between them essentially
encoding a correctness predicate.  In contrast, we adopt a purely semantic
view, working with intensional logical relations in models given by the
traditional categorical-glueing construction~\cite{Wraith}.

Another important point of departure between this work and the other
categorical ones is the algebraic treatment of the subject, which led to a
deeper understanding of the normalisation function.  

\section*{\Large Part~I}

\subsection*{I.1\quad Typed lambda calculus}

For the purpose of establishing notation, we briefly recall the syntax and
semantics of the typed lambda calculus.  For details
see,~\eg,~\cite{LambekScott,Crole,Taylor}.

\paragraph{Syntax.} 
The types of the simply typed lambda calculus are given by the grammar
\begin{equation}\label{SimpleTypes}
\begin{array}{rcl}
\tau 
& ::= & \abasetype \mid \UnitType \mid \tau_1 \ProductType \tau_2
        \mid \tau_1 \ArrowType \tau_2
\end{array}
\end{equation}
where $\abasetype$ ranges over base types.  We write
$\TypeClosure\BaseTypeSet$ for the set of simple types generated by a set of
base types $\BaseTypeSet$.  

The grammar for the terms is 
$$\begin{array}{rcl}
t 
& ::= & x \mid \unitop \mid \projop_1(t) \mid \projop_2(t) \mid
        \pairop{t_1,t_2} \mid t(t') \mid \lambdaop{x:\tau}t
\end{array}$$ 
where $x$ ranges over (a countably infinite set of) variables.  The notion of
free and bound variables are standard.  As usual, we will identify terms up
to the renaming of bound variables.  

Typing contexts, with types in a set $\TypeSet$, are defined as functions $V
\rightarrow \TypeSet$ where the domain of the context, $V$, is a finite subset
of the set of variables.  Under this view, for a variable~$x$, a type~$\tau$,
and a context~$\Gamma$, we let \mbox{$(x:\tau) \in \Gamma$} stand for
\mbox{$x\in\dom(\Gamma)$} and $\Gamma(x) = \tau$.  For distinct variables
\mbox{$x_i$~($i=1,n$),} we use the notation $\pair{x_i:\tau_i}_{i=1,n}$ for
the context  \mbox{$\setof{ x_1,\ldots, x_n } \rightarrow \TypeSet$} mapping
$x_i$ to~$\tau_i$.  For a context~$\Gamma$, a variable $x$, and a type $\tau$,
the notation $\Gamma, x:\tau$ presupposes $x \not\in \dom(\Gamma)$ and denotes
the context $\dom(\Gamma)\union\setof{x} \rightarrow \TypeSet$ mapping every
$y \in \dom(\Gamma)$ to $\Gamma(y)$, and $x$ to $\tau$.

The well-typed terms $\Gamma \entails t: \tau$ in context (where $\Gamma$ is
a typing context, $t$ is a term, and $\tau$ is a type) are given by the usual
rules; see Figure~\ref{Terms}.
\begin{figure}
\hrulefill
$$\begin{array}{c}
\begin{array}{c}
\\ \hline
\Gamma \vdash x:\tau
\end{array}
\quad(x:\tau) \in \Gamma
\\[1mm]%
\begin{array}{c}
\\ \hline
\Gamma \entails \unitop: \UnitType
\end{array}
\\[5mm]%
\begin{array}{c}
\Gamma \entails t: \tau_1\ProductType\tau_2
\\ \hline
\Gamma \entails \projop_i(t): \tau_i
\end{array}
\quad(i=1,2)
\\[6mm]%
\begin{array}{c}
\Gamma \entails t_i: \tau_i \quad(i=1,2)
\\ \hline
\Gamma \entails \pairop{t_1,t_2}: \tau_1\ProductType\tau_2
\end{array}
\\[6mm]%
\begin{array}{c}
\Gamma \entails t: \tau'\ArrowType\tau
\quad
\Gamma \entails t': \tau'
\\ \hline
\Gamma \entails t(t'): \tau
\end{array}
\\[6mm]%
\begin{array}{c}
\Gamma, x:\tau' \entails t: \tau
\\ \hline
\Gamma \entails \lambdaop{x:\tau'}t: \tau'\ArrowType\tau
\end{array}
\end{array}$$
\caption{Well-typed terms}\label{Terms}
\hrulefill
\end{figure}

\paragraph{Semantics.} 
The appropriate mathematical universes for giving semantics to the typed
lambda calculus are cartesian closed
categories~\cite{LambekScott,Crole,Taylor};~\ie,~categories with terminal
object, binary products, and exponentials (for which we respectively use
the notation~$\TerminalObj$,~$\CatProduct$, and~$\CatArrow$).  

For an interpretation $\semfun: \BaseTypeSet \rightarrow \Semcat$
of base types in a cartesian closed category, we let 
$\semfun\sembracket{\fstarg}: 
   \TypeClosure\BaseTypeSet \rightarrow \Semcat$ 
be the extension to simple types as prescribed by a chosen cartesian closed
structure.  That is,
$\semfun\sembracket\abasetype = \semfun(\abasetype)$ (for 
$\abasetype \in \BaseTypeSet$), 
$\semfun\sembracket\TerminalType = \TerminalObj$, and
$\semfun\sembracket{\tau\ProductType\tau'} 
   = \semfun\sembracket\tau \CatProduct \semfun\sembracket{\tau'}$
and
$\semfun\sembracket{\tau\ArrowType\tau'} 
   = \semfun\sembracket\tau \CatArrow \semfun\sembracket{\tau'}$
(for $\atype, \atype' \in \TypeClosure\BaseTypeSet$).  
As usual, the interpretation of types is extended to contexts by
setting  
$\semfun\sembracket\Gamma
   = \prod_{x\in\dom(\Gamma)} \semfun\sembracket{\Gamma(x)}$ 
for all contexts~$\Gamma$.  
Finally, the semantics of a term $\Gamma \entails t: \tau$ as a morphism  
$\semfun\sembracket\Gamma \rightarrow \semfun\sembracket\tau$ in
$\Semcat$ is denoted $\semfun\sembracket{\Gamma \entails t: \tau}$. 

\section*{I.2\quad From definability to normalisation}

Kripke relations were introduced by Jung and Tiuryn in~\cite{JungTiuryn} for
the purpose of characterising lambda definability.  We will analyse this
result and provide a corresponding extensional normalisation result. 

\paragraph{Kripke relations.} 
For a functor $\arity: \catc \rightarrow \Semcat$, a 
\Def{$\catc$-Kripke relation} $R$ of arity $\arity$ over an object $A$ of
$\Semcat$ is a family 
$\setof{ R(c) \subseteq \Semcat(\arity(c),A) }_{c \in \scriptobj{\,\catc\,}}$
satisfying the following condition.
\begin{description}
\item[](Monotonicity)
  For every $\rho: c' \rightarrow c$ in $\catc$ and every 
  \mbox{$a: \arity(c) \rightarrow A$} in $R(c)$, the map 
  $a \comp \arity(\rho): \arity(c') \rightarrow A$ is in $R(c')$. 
\end{description}
In other words, a $\catc$-Kripke relation $R$ of arity~$\arity$ over an object
$A$ is a unary predicate ${R\rightembedding\Semcat(\arity(\fstarg),A)}$ over
the $\catc^\op$-variable set of $A$-valued morphisms
$\Semcat(\arity(\fstarg),A): \catc^\op \rightarrow \Set$ in the functor
category $\Set^{\catc^\op}$ of $\catc^\op$-variable sets, referred to as
presheaves.

The category of Kripke relations $\K{\arity}$ of arity 
$\arity: \catc \rightarrow \Semcat$ has objects given by pairs $(R,A)$
consisting of an object $A$ of $\Semcat$ and a 
$\catc$-Kripke relation of arity $\arity$ over $A$, and morphisms 
\mbox{$f: (R,A) \rightarrow (R',A')$} given by maps $f: A \rightarrow A'$ in
$\Semcat$ such that, for all \mbox{$a: \arity(c) \rightarrow A$} in $R(c)$,
the composite \mbox{$f \comp a: \arity(c) \rightarrow A'$} is in $R'(c)$.
Composition and identities are as in $\Semcat$.

\begin{example}
The category of $\catc$-Kripke relations of arity the unique functor to the
terminal category is (isomorphic to) the complete Heyting algebra of
subterminal objects of the presheaf topos $\Set^{\catc^\op}$.  
\end{example}

The following proposition is
well-known~(see,~\eg,~\cite{Alimohamed,MaReynolds}).   
\begin{proposition} \label{KripkeCCC}
Let $\catc$ be a small category and let $\Semcat$ be a cartesian closed
category.  For a functor \mbox{$\arity: \catc \rightarrow \Semcat$}, the
category of Kripke relations $\K{\arity}$ is cartesian closed and the
forgetful functor \mbox{$\K{\arity} \rightarrow \Semcat: (R,A) \mapsto A$}
preserves the cartesian closed structure strictly.
\end{proposition}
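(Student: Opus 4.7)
The plan is to lift the cartesian closed structure of $\Semcat$ pointwise to $\K{\arity}$, defining each piece of structure on the underlying object to be exactly what $\Semcat$ provides, and equipping it with a canonical Kripke relation. Strict preservation by the forgetful functor will then be automatic, and the universal properties will follow from those in $\Semcat$ plus routine relational bookkeeping.

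Concretely, I would proceed in three steps. First, for the terminal, take $(R_{\terminalobj},\terminalobj)$ with $R_\terminalobj(c) = \Semcat(\arity(c),\terminalobj)$ (the full relation); monotonicity and universality are immediate since $\Semcat(\arity(c),\terminalobj)$ is a singleton for each~$c$. Second, for binary products of $(R,A)$ and $(R',A')$, take the underlying object to be $A \CatProduct A'$ and define
$$
(R \ProductLogRel R')(c) \;=\; \setof{\, h : \arity(c) \rightarrow A \CatProduct A' \suchthat \proj_1 \comp h \in R(c),\ \proj_2 \comp h \in R'(c) \,}\:.
$$
Monotonicity of $R\ProductLogRel R'$ follows from that of $R$ and $R'$; the projections and pairing from $\Semcat$ restrict to morphisms in $\K{\arity}$, and verifying the product universal property is a short diagram chase. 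Third, for exponentials of $(R,A)$ and $(R',A')$, take the underlying object to be $A \CatArrow A'$ and define
$$
(R \ArrowLogRel R')(c) \;=\; \setof{\, h : \arity(c) \rightarrow A \CatArrow A' \suchthat \forall\,\rho : c' \rightarrow c,\ \forall\, a \in R(c'),\ \eval \comp \pair{h \comp \arity(\rho),\, a} \in R'(c')\,}\:.
$$
Monotonicity here holds precisely because the defining condition already quantifies universally over maps $\rho$ in $\catc$, so reindexing along any $c'' \rightarrow c$ merely restricts to a subfamily of the constraints.

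The main obstacle, and the only nontrivial step, is verifying the exponential universal property: given $g : (R \ProductLogRel S, A \CatProduct B) \rightarrow (R',A')$ in $\K{\arity}$ with $S$ the relation over $B$, one must show that the exponential transpose $\widetilde{g} : B \rightarrow A \CatArrow A'$ in $\Semcat$ lies in the hom of $\K{\arity}$, i.e.\ sends $S$ into $R \ArrowLogRel R'$. Here the universal quantification over $\rho$ in the definition of $R \ArrowLogRel R'$ is exactly what is needed: given $b \in S(c)$, one has to check that for every $\rho : c' \rightarrow c$ and every $a \in R(c')$, the composite $\eval \comp \pair{\widetilde{g} \comp b \comp \arity(\rho),\,a}$ lies in $R'(c')$, and this reduces by the $\beta$-rule for exponentials in $\Semcat$ and monotonicity of $S$ to the fact that $g$ preserves the product relation at stage~$c'$. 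Uniqueness of $\widetilde g$ as a morphism in $\K{\arity}$ is inherited from uniqueness in $\Semcat$, since the forgetful functor is faithful. Because the underlying objects and structural maps in $\K{\arity}$ are chosen identical to those of $\Semcat$, the forgetful functor preserves $\terminalobj$, $\CatProduct$, $\CatArrow$, pairing, projections, evaluation, and currying on the nose, giving strict preservation of the cartesian closed structure.
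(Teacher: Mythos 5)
Your construction coincides exactly with the cartesian closed structure the paper records for $\K{\arity}$ (the full relation on the terminal object, the componentwise relation $\ProductLogRel$ on products, and the clause for $\ArrowLogRel$ quantifying over all $\rho: c' \rightarrow c$ on exponentials), and your verification of the universal properties is the standard argument the paper delegates to the cited literature. The one step you leave implicit --- that $\eval$ itself is a morphism of Kripke relations --- follows immediately by instantiating your exponential clause at $\rho = \id_c$, so the proof is complete and takes essentially the same route as the paper.
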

The cartesian closed structure of $\K{\arity}$ is given as follows. 
\begin{description}
\item[](Products)
  The terminal object is $(\TerminalLogRel,\TerminalObj)$ where $\TerminalObj$
  is terminal in $\Semcat$ and where 
  $\TerminalLogRel(c) = \setof{ \arity(c) \rightarrow \TerminalObj}$ for all
  $c$ in $\catc$.  

  The product $(R,A)\CatProduct(R',A')$ of $(R,A)$ and $(R',A')$ is 
  $$
  (R,A) 
    \stackleftarrow\proj 
  (R \ProductLogRel R' , A \CatProduct A')
    \stackrightarrow{\proj'} 
  (R',A')
  $$ 
  where 
  $A \stackleftarrow\proj A \CatProduct A' \stackrightarrow{\proj'} A'$
  is the product of $A$ and $A'$ in $\Semcat$, and where 
  $a: \arity(c) \rightarrow A \CatProduct A'$ is in 
  $(R \ProductLogRel R')(c)$ iff $\proj \comp a: \arity(c) \rightarrow A$ is
  in $R(c)$ and $\proj' \comp a: \arity(c) \rightarrow A'$ is in $R'(c)$.

\item[](Exponentials)
  The exponential $(R,A)\CatArrow(R',A')$
  of $(R,A)$ and $(R',A')$ is 
  $$
  (R \ArrowLogRel R' , A \CatArrow A') \CatProduct (R,A)
  \stackrightarrow\eval
  (R',A')
  $$ 
  where $(A \CatArrow A') \CatProduct A \stackrightarrow\eval A'$ is the
  exponential of $A$ and $A'$ in $\Semcat$, and where 
  ${f: \arity(c) \rightarrow A \CatArrow A'}$ is in $(R \ArrowLogRel R')(c)$
  iff for every $\rho: c' \rightarrow c$ in $\catc$ and 
  $a: \arity(c') \rightarrow A$ in $R(c')$, the composite 
  $\eval \comp \pair{f \comp \arity(\rho) , a}: \arity(c') \rightarrow A'$ is
  in $R'(c')$.
\end{description}

The Fundamental Lemma of logical relations is a consequence of
Proposition~\ref{KripkeCCC}. 

\begin{lemma}\Rem{(Fundamental Lemma~\cite{Plotkin,Statman})}
For an interpretation of base types
$\Int: \BaseTypeSet \rightarrow \K{\arity}:
       \abasetype \mapsto (\semLogRel_\abasetype,\Int_0(\abasetype))$,
the interpretation 
\begin{center}
$\Int_0\sembracket{\Gamma \entails t: \tau}:
   \Int_0\sembracket{\Gamma} \rightarrow \Int_0\sembracket{\tau}$
in $\Semcat$
\end{center}
of a term $\Gamma \entails t: \tau$ yields a morphism 
\mbox{$\Int\sembracket{\Gamma} \rightarrow \Int\sembracket{\tau}$} in
$\K{\arity}$; that is, for 
$\Int\sembracket\Gamma = (\semLogRel_\Gamma,\Int_0\sembracket\Gamma)$ 
and $\Int\sembracket\tau = (\semLogRel_\tau,\Int_0\sembracket\tau)$,  
the following diagram
$$\xymatrix@C=5pt{
\raisebox{1.5mm}{$\semLogRel_\Gamma$}
\ar[rrrrrr] 
\ar@{^(->}[d] &&&&&& 
\raisebox{1mm}{$\semLogRel_\tau$} \ar@{^(->}[d]  
\\
\Semcat(\arity(\fstarg),\Int_0\sembracket\Gamma) 
\ar[rrrrrr]_-{\Int_0\sembracket{\Gamma \entails t: \tau}\comp\fstarg}
&&&&&&
\Semcat(\arity(\fstarg),\Int_0\sembracket\tau) 
}$$
commutes in $\Set^{\catc^\op}\!\!$ (for a necessarily unique natural map 
\mbox{$\semLogRel_\Gamma \rightarrow \semLogRel_\tau$}).
\end{lemma}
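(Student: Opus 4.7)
The plan is to leverage Proposition~\ref{KripkeCCC} together with the universal character of the interpretation of the typed lambda calculus in a cartesian closed category. Concretely, given $\Int: \BaseTypeSet \rightarrow \K{\arity}$, I extend it to simple types using the cartesian closed structure of $\K{\arity}$ supplied by Proposition~\ref{KripkeCCC}, obtaining $\Int\sembracket{\fstarg}: \TypeClosure\BaseTypeSet \rightarrow \K{\arity}$, which in turn extends to contexts in the usual way. Then I interpret the term $\Gamma \entails t: \tau$ in $\K{\arity}$, yielding a morphism $\Int\sembracket{\Gamma \entails t:\tau}: \Int\sembracket{\Gamma} \rightarrow \Int\sembracket{\tau}$ in $\K{\arity}$.

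The key point is the strict preservation clause of Proposition~\ref{KripkeCCC}: if $U: \K{\arity} \rightarrow \Semcat$ denotes the forgetful functor, then $U$ preserves the chosen cartesian closed structure on the nose. Since $U \comp \Int = \Int_0$ by the hypothesis on the interpretation of base types, an easy induction on types gives $U\bigl(\Int\sembracket\atype\bigr) = \Int_0\sembracket\atype$ for every $\atype \in \TypeClosure\BaseTypeSet$, and similarly for contexts. By the compositional definition of the semantics of terms from the cartesian closed structure, it follows that $U\bigl(\Int\sembracket{\Gamma \entails t:\tau}\bigr) = \Int_0\sembracket{\Gamma \entails t:\tau}$ as morphisms in $\Semcat$. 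This is exactly the assertion that $\Int_0\sembracket{\Gamma \entails t:\tau}$ underlies a morphism in $\K{\arity}$.

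Unpacking what this morphism condition says in elementary terms yields the commutative square displayed in the statement: post-composition with $\Int_0\sembracket{\Gamma \entails t:\tau}$ carries the subpresheaf $\semLogRel_\Gamma \rightembedding \Semcat(\arity(\fstarg), \Int_0\sembracket\Gamma)$ into $\semLogRel_\tau \rightembedding \Semcat(\arity(\fstarg), \Int_0\sembracket\tau)$. The induced map $\semLogRel_\Gamma \rightarrow \semLogRel_\tau$ is necessarily unique because the vertical inclusions are monomorphisms in $\Set^{\catc^\op}$, and natural because it is the restriction of a natural transformation between presheaves of homs.

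There is no real obstacle: the entire content of the lemma is encapsulated in Proposition~\ref{KripkeCCC}, once one observes that the interpretation of the lambda calculus is natural with respect to strict cartesian closed functors. The only mild care needed is in the inductive verification that $U$ commutes strictly with the interpretation of types, products of contexts, and the clauses defining $\sembracket\fstarg$ on terms (variables, pairing, projections, application, and abstraction); each of these is immediate from the strictness of the preservation, so no computation beyond the inductive bookkeeping is required.
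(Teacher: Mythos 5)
Your proof is correct and follows exactly the route the paper intends: the paper gives no separate argument, merely noting that the Fundamental Lemma ``is a consequence of Proposition~\ref{KripkeCCC}'', and your proposal spells out precisely that consequence (interpret in $\K{\arity}$, use strict preservation by the forgetful functor to identify the underlying $\Semcat$-morphism with $\Int_0\sembracket{\Gamma\entails t:\tau}$, and unpack the morphism condition as the commuting square). Nothing is missing.
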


\paragraph{Definability.}
The definability result of Jung and Tiuryn~\cite{JungTiuryn} uses Kripke
relations varying over a poset of contexts ordered by context extension.
Here, however, to parallel the development with the one to follow in Part~II,
we will consider Kripke relations varying over a category of contexts and
context renamings. 

\begin{definition} \label{ContextCat}
For a set of types $\TypeSet$, we let $\F{}\comma\TypeSet$ be the category
with objects given by contexts $\Gamma$ with types in~$\TypeSet$, and
with morphisms $\Gamma \rightarrow \Gamma'$ given by type-preserving
context renamings; that is, by functions 
$\rho: \dom(\Gamma) \rightarrow \dom(\Gamma')$ such that for all
variables $x \in \dom(\Gamma)$, the types $\Gamma(x)$ and 
$\Gamma'(\rho x)$ are equal.  We write $\F{}[\TypeSet]$ for 
$(\F{}\comma\TypeSet)^\op$.
\end{definition}

With respect to an interpretation 
\mbox{$\semfun: \BaseTypeSet \rightarrow \Semcat$} 
of base types in a cartesian closed category, we write
$\semfun\sembracket{\fstarg}$ for 
the canonical semantic functor 
$\F{}[\TypeClosure\BaseTypeSet] \rightarrow \Semcat$ interpreting
contexts and their renamings.  This is explicitly given by 
\[
\semfun\sembracket\rho 
\ = \
\pair{ \semfun\sembracket{\Gamma' \entails \rho x: \tau}}_{(x:\tau)\in\Gamma}
\ = \
\pair{\proj_{\rho x} }_{x\in\dom(\Gamma)}
\ : \ \semfun\sembracket{\Gamma'} \rightarrow \semfun\sembracket\Gamma
\]
for all $\rho: \Gamma \rightarrow \Gamma'$ in 
$\F{}\comma\TypeClosure\BaseTypeSet$. 

For every type $\tau \in \TypeClosure\BaseTypeSet$, the
\Def{definability relation} 
$$
\defLogRel_\tau(\Gamma) 
= 
\setof{ \
  \semfun\sembracket{\Gamma \entails t: \tau }
    \suchthat \Gamma \entails t: \tau
  \ }
\subseteq
  \Semcat(\semfun\sembracket\Gamma,\semfun\sembracket\tau)
$$
is an $\F{}[\TypeClosure\BaseTypeSet]$-Kripke relation of arity 
\mbox{$\semfun\sembracket\fstarg: 
   \F{}[\TypeClosure\BaseTypeSet] \rightarrow \Semcat$}
over $\semfun\sembracket\tau$, and the family of definability relations 
$\setof{ \defLogRel_\tau }_{\tau \in \TypeClosure\BaseTypeSet}$ has the
following logical characterisation.
\begin{lemma}\Rem{(Definability Lemma~\cite{JungTiuryn,Alimohamed})} 
\label{Definability_Lemma}
Let $\semfun: \BaseTypeSet \rightarrow \Semcat$ be an interpretation of base
types in a cartesian closed category.
Setting $\semLogRel_\abasetype = \defLogRel_\abasetype$ for all base types
$\abasetype \in \BaseTypeSet$ and letting $\semLogRel_\tau$ be given by the
cartesian closed structure of the category of Kripke relations
$\K{\semfun\sembracket\fstarg: 
      \F{}[\TypeClosure\BaseTypeSet] \rightarrow \Semcat}$
for the other types~$\tau \in \TypeClosure\BaseTypeSet$, it follows that
$\semLogRel_\tau = \defLogRel_\tau$ for all types 
$\tau \in \TypeClosure\BaseTypeSet$.
\end{lemma}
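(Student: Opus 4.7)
The plan is a direct induction on the type $\tau$, proving both inclusions $\defLogRel_\tau \subseteq \semLogRel_\tau$ and $\semLogRel_\tau \subseteq \defLogRel_\tau$ simultaneously so that the inductive hypothesis is available in both directions at all subtypes. The base case $\tau = \abasetype$ holds by definition. For $\tau = \UnitType$, both $\semLogRel_\UnitType(\Gamma) = \TerminalLogRel(\Gamma)$ and $\defLogRel_\UnitType(\Gamma)$ collapse to the singleton containing the unique morphism $\semfun\sembracket\Gamma \rightarrow \TerminalObj$: the former by the definition of $\TerminalLogRel$, the latter because terminality identifies the semantics of every term $\Gamma \entails t : \UnitType$ (and $\unitop$ exhibits at least one such term).

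For $\tau = \tau_1 \ProductType \tau_2$, both relations are characterised by their projections: $f$ lies in $\semLogRel_{\tau_1 \ProductType \tau_2}(\Gamma)$ iff $\proj_i \comp f \in \semLogRel_{\tau_i}(\Gamma)$ for $i=1,2$ by the construction of products in $\K{\semfun\sembracket\fstarg}$, and $f$ lies in $\defLogRel_{\tau_1 \ProductType \tau_2}(\Gamma)$ iff $\proj_i \comp f \in \defLogRel_{\tau_i}(\Gamma)$ for $i=1,2$ because $\proj_i \comp \semfun\sembracket{t} = \semfun\sembracket{\proj_i(t)}$ and, conversely, the cartesian $\eta$-law $f = \pair{\proj_1 \comp f, \proj_2 \comp f}$ lets one package witnesses $t_i$ into $\pair{t_1,t_2}$. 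Equality then follows from the inductive hypothesis at $\tau_1$ and $\tau_2$.

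For $\tau = \tau_1 \ArrowType \tau_2$, the forward inclusion $\defLogRel \subseteq \semLogRel$ is routine: if $f = \semfun\sembracket{\Gamma \entails t : \tau_1 \ArrowType \tau_2}$ and one is given a renaming $\rho : \Gamma \rightarrow \Delta$ in $\F{}\comma\TypeClosure\BaseTypeSet$ together with $a \in \semLogRel_{\tau_1}(\Delta)$, then by the inductive hypothesis $a = \semfun\sembracket{\Delta \entails u : \tau_1}$, and since renaming commutes with semantics, $\eval \comp \pair{f \comp \semfun\sembracket\rho, a}$ is the interpretation of the application $\Delta \entails (t[\rho])(u) : \tau_2$, which lies in $\defLogRel_{\tau_2}(\Delta) = \semLogRel_{\tau_2}(\Delta)$ by the inductive hypothesis.

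The main obstacle is the reverse inclusion at arrow types, which is where an actual term must be produced from abstract semantic data. The technique is a fresh-variable extension. Given $f \in \semLogRel_{\tau_1 \ArrowType \tau_2}(\Gamma)$, pick $x \not\in \dom(\Gamma)$ and consider the weakening $\iota : \Gamma \hookrightarrow \Gamma, x : \tau_1$ in $\F{}\comma\TypeClosure\BaseTypeSet$. The projection $\proj_x = \semfun\sembracket{\Gamma, x : \tau_1 \entails x : \tau_1}$ lies in $\defLogRel_{\tau_1}(\Gamma, x : \tau_1)$, hence in $\semLogRel_{\tau_1}(\Gamma, x : \tau_1)$ by the inductive hypothesis. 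Instantiating the arrow clause of $\semLogRel$ at $\iota$ and $\proj_x$ gives $\eval \comp \pair{f \comp \semfun\sembracket\iota, \proj_x} \in \semLogRel_{\tau_2}(\Gamma, x : \tau_1) = \defLogRel_{\tau_2}(\Gamma, x : \tau_1)$ by the inductive hypothesis, so this composite equals $\semfun\sembracket{\Gamma, x : \tau_1 \entails u : \tau_2}$ for some term $u$. Crucially, under the identification $\semfun\sembracket{\Gamma, x : \tau_1} \iso \semfun\sembracket\Gamma \CatProduct \semfun\sembracket{\tau_1}$ the map $\pair{\semfun\sembracket\iota, \proj_x}$ is the identity, so the composite is $\eval \comp (f \CatProduct \id)$; by the exponential $\eta$-law its transpose is $f$, meaning that $\Gamma \entails \lambdaop{x:\tau_1}u : \tau_1 \ArrowType \tau_2$ has semantics exactly $f$, witnessing $f \in \defLogRel_{\tau_1 \ArrowType \tau_2}(\Gamma)$.
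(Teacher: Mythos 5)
Your proof is correct: it is the direct induction on types, carried out with both inclusions simultaneously, and the key step — manufacturing a defining term at arrow types by applying the arrow clause to the weakening $\Gamma \hookrightarrow \Gamma, x:\tau_1$ and the variable projection $\proj_x$, then recovering $f$ as the transpose via the exponential $\eta$-law — is exactly the Jung--Tiuryn argument. The paper, however, deliberately does \emph{not} present the lemma this way: it explicitly labels the direct induction as "the usual proof" and instead factors it through two pieces, namely (i) the closure equalities $\defLogRel_\TerminalType = \TerminalLogRel$, $\defLogRel_{\tau\ProductType\tau'} = \defLogRel_\tau \ProductLogRel \defLogRel_{\tau'}$, $\defLogRel_{\tau\ArrowType\tau'} = \defLogRel_\tau \ArrowLogRel \defLogRel_{\tau'}$ (whose verification is precisely the content of your unit, product, and arrow cases), and (ii) the Basic Lemma instantiated at $\lowerLogRel_\tau = \defLogRel_\tau = \upperLogRel_\tau$, which supplies the induction once and for all from the monotone/antitone behaviour of $\ProductLogRel$ and $\ArrowLogRel$. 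The calculations are the same — the paper even says the conceptual argument is "in effect, what the usual calculations really amount to" — but the factored version isolates the reusable skeleton: your simultaneous two-inclusion induction works only because $\defLogRel$ bounds $\semLogRel$ on both sides by the \emph{same} family, whereas the Basic Lemma's mixed-variance formulation with distinct $\lowerLogRel$ and $\upperLogRel$ is what the paper reuses immediately afterwards for the normalisation result, where the bounds $\mLogRel_\tau$ and $\nLogRel_\tau$ genuinely differ and your proof pattern would not apply verbatim. In short: your argument is self-contained and elementary; the paper's buys reusability at the cost of stating the closure properties as a separate (and, in the paper, only asserted) step.
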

The usual proof of the Definability Lemma 
is by induction on the structure of types using the explicit description of
the cartesian closed structure in categories of Kripke relations given above;
see~\cite{JungTiuryn,Alimohamed} (and~\cite{FioreSimpson} for the case with
sum types).  However, there is a more conceptual argument based on
establishing that the definability relations satisfy the following closure
properties:
$$\begin{array}{rcl}
\defLogRel_{\TerminalType}
& = & 
\TerminalLogRel
\\[2mm]
\defLogRel_{\mbox{\scriptsize$\tau \ProductType \tau'$}} 
& = & 
\defLogRel_\tau \ProductLogRel \defLogRel_{\tau'} 
\\[2mm]
\defLogRel_{\tau \ArrowType \tau'} 
& = & 
\defLogRel_\tau \ArrowLogRel \defLogRel_{\tau'} 
\end{array}$$
which is, in effect, what the usual calculations really amount to.

The above analysis can be refined further.  Indeed, the fact
that neither of the following inclusions 
\begin{equation} \label{D_subseteq_R_subseteq_D}
\defLogRel_\tau \subseteq \semLogRel_\tau \subseteq \defLogRel_\tau
\end{equation}
in isolation is strong enough to re-establish the inductive hypothesis in the
Definability Lemma, suggests considering a more general situation in which the
Kripke logical relations $\semLogRel_\tau$ are bounded by possibly distinct
Kripke relations (unlike the situation in~(\ref{D_subseteq_R_subseteq_D})).  

We are thus led to the following Basic Lemma.  Notice the mixed-variance
treatment of exponentiation.  This is akin to Krivine's approach to
normalisation for the untyped lambda calculus using \Rem{adapted pairs} of
subsets of lambda terms~\cite[Chapter~III, pages~33--39]{Krivine}.
\begin{lemma}\Rem{(Basic Lemma)} \label{Basic_Lemma}
Consider an interpretation 
\mbox{$\Int_0: \BaseTypeSet \rightarrow \Semcat$}
of base types in a cartesian closed category $\Semcat$. 

With respect to a functor $\arity: \catc \rightarrow \Semcat$, let 
$\pair{ (\lowerLogRel_\tau,\Int_0\sembracket\tau) }
   _{\tau\in\TypeClosure\BaseTypeSet}$ 
and 
$\pair{ ( \upperLogRel_\tau,\Int_0\sembracket\tau ) }
   _{\tau\in\TypeClosure\BaseTypeSet}$ 
be two families of Kripke relations in $\K{\arity}$ indexed by types
such that 
$$\begin{array}{c}
\upperLogRel_\TerminalType \, = \, \TerminalLogRel 
\\[2mm]
\lowerLogRel_{\mbox{\scriptsize$\sigma\ProductType\tau$}}
\, \subseteq \, 
(\lowerLogRel_{\sigma} \ProductLogRel \lowerLogRel_{\tau})
\qquad\qquad 
(\upperLogRel_{\sigma} \ProductLogRel \upperLogRel_{\tau})
\, \subseteq \, 
\upperLogRel_{\mbox{\scriptsize$\sigma\ProductType\tau$}}
\\[2mm]
\lowerLogRel_{\sigma\ArrowType\tau}
\, \subseteq \, 
(\upperLogRel_{\sigma} \ArrowLogRel \lowerLogRel_{\tau})
\qquad\qquad 
(\lowerLogRel_{\sigma} \ArrowLogRel \upperLogRel_{\tau})
\, \subseteq \,
\upperLogRel_{\sigma\ArrowType\tau}
\end{array}$$
For a family of Kripke relations
$\pair{ ( \semLogRel_\abasetype , \Int_0\sembracket\abasetype ) }
   _{\abasetype\in\BaseTypeSet}$ 
in $\K{\arity}$ indexed by base types, let 
$\pair{ ( \semLogRel_\tau , \Int_0\sembracket\tau ) }_
   {\tau\in\TypeClosure\BaseTypeSet}$ 
be the family of Kripke relations indexed by types induced by the
cartesian closed structure of $\K{\arity}$.

If
$\lowerLogRel_\abasetype
   \subseteq \semLogRel_\abasetype
     \subseteq \upperLogRel_\abasetype$
for all base~types~$\abasetype \in \BaseTypeSet$, then
\begin{enumerate}
\item \label{Basic_Lemma_One}
$\lowerLogRel_\tau
   \subseteq \semLogRel_\tau
   \subseteq \upperLogRel_{\tau}$ 
for all types~$\tau \in \TypeClosure\BaseTypeSet$, and thus

\item \label{Basic_Lemma_Two}
for all terms 
$\Gamma\entails t:\tau$~$($with~$\Gamma=\pair{x_i:\tau_i}_{i=1,n})$
and morphisms $a_i: \arity(c) \rightarrow \Int_0\sembracket{\tau_i}$ in 
$\lowerLogRel_{\tau_i}(c)$~$(1 \leq i \leq n, c \in \obj{\catc})$,
we have that
$\Int_0\sembracket{\Gamma \entails t: \tau} \comp \pair{a_1,\ldots,a_n}:
   \arity(c) \rightarrow \Int_0\sembracket{\tau}$
is in $\upperLogRel_\tau(c)$.
\end{enumerate}
\end{lemma}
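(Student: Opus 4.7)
My plan is to establish part~\ref{Basic_Lemma_One} by induction on the type $\tau \in \TypeClosure\BaseTypeSet$, then derive part~\ref{Basic_Lemma_Two} by invoking the Fundamental Lemma applied to the interpretation determined by $\setof{\semLogRel_\abasetype}_\abasetype$.

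For the induction, the base case $\tau = \abasetype$ is the hypothesis. For $\tau = \TerminalType$, observe that $\semLogRel_\TerminalType = \TerminalLogRel$ by the description of the terminal object in $\K\arity$ from Proposition~\ref{KripkeCCC}; the upper inclusion $\semLogRel_\TerminalType \subseteq \upperLogRel_\TerminalType$ is then the hypothesis $\upperLogRel_\TerminalType = \TerminalLogRel$, while the lower inclusion $\lowerLogRel_\TerminalType \subseteq \semLogRel_\TerminalType = \TerminalLogRel$ is automatic because $\TerminalLogRel(c)$ contains every morphism into $\TerminalObj$. For $\tau = \sigma \ProductType \tau'$, use $\semLogRel_{\sigma\ProductType\tau'} = \semLogRel_\sigma \ProductLogRel \semLogRel_{\tau'}$ and the induction hypothesis componentwise together with the evident monotonicity of $\ProductLogRel$ in both arguments: the chain $\lowerLogRel_{\sigma\ProductType\tau'} \subseteq \lowerLogRel_\sigma \ProductLogRel \lowerLogRel_{\tau'} \subseteq \semLogRel_\sigma \ProductLogRel \semLogRel_{\tau'} = \semLogRel_{\sigma\ProductType\tau'} \subseteq \upperLogRel_\sigma \ProductLogRel \upperLogRel_{\tau'} \subseteq \upperLogRel_{\sigma\ProductType\tau'}$ delivers both bounds in one stroke.

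The arrow case is where the mixed-variance setup pays off and is the one step worth writing out carefully. By definition, $\semLogRel_{\sigma\ArrowType\tau'} = \semLogRel_\sigma \ArrowLogRel \semLogRel_{\tau'}$, and inspection of the exponential in $\K\arity$ shows that $\ArrowLogRel$ is contravariant in its first argument and covariant in its second. The induction hypothesis gives $\lowerLogRel_\sigma \subseteq \semLogRel_\sigma \subseteq \upperLogRel_\sigma$ and $\lowerLogRel_{\tau'} \subseteq \semLogRel_{\tau'} \subseteq \upperLogRel_{\tau'}$. So
\[
\lowerLogRel_{\sigma\ArrowType\tau'}
\,\subseteq\, \upperLogRel_\sigma \ArrowLogRel \lowerLogRel_{\tau'}
\,\subseteq\, \semLogRel_\sigma \ArrowLogRel \semLogRel_{\tau'}
\,=\, \semLogRel_{\sigma\ArrowType\tau'}
\,\subseteq\, \lowerLogRel_\sigma \ArrowLogRel \upperLogRel_{\tau'}
\,\subseteq\, \upperLogRel_{\sigma\ArrowType\tau'},
\]
which closes the induction. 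The first and last inclusions are hypotheses, and the two middle ones use the induction hypothesis together with the variances of~$\ArrowLogRel$.

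For part~\ref{Basic_Lemma_Two}, consider the interpretation $\Int: \BaseTypeSet \rightarrow \K\arity$ sending $\abasetype$ to $(\semLogRel_\abasetype, \Int_0\sembracket\abasetype)$. The Fundamental Lemma yields that $\Int_0\sembracket{\Gamma \entails t: \tau}$ is a morphism $\Int\sembracket\Gamma \rightarrow \Int\sembracket\tau$ in~$\K\arity$, and since products in $\K\arity$ are computed by $\ProductLogRel$, we have $\semLogRel_\Gamma = \semLogRel_{\tau_1} \ProductLogRel \cdots \ProductLogRel \semLogRel_{\tau_n}$. Given $a_i \in \lowerLogRel_{\tau_i}(c) \subseteq \semLogRel_{\tau_i}(c)$ by part~\ref{Basic_Lemma_One}, the tuple $\pair{a_1,\ldots,a_n}$ lies in $\semLogRel_\Gamma(c)$; hence $\Int_0\sembracket{\Gamma \entails t: \tau} \comp \pair{a_1,\ldots,a_n}$ lies in $\semLogRel_\tau(c) \subseteq \upperLogRel_\tau(c)$, again by part~\ref{Basic_Lemma_One}. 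The only non-routine point in the whole argument is bookkeeping the variances in the arrow case; everything else is a mechanical application of Proposition~\ref{KripkeCCC} and the Fundamental Lemma.
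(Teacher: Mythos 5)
Your proposal is correct and follows essentially the same route as the paper: part~1 by induction on types using the monotonicity of $\ProductLogRel$ and the mixed variance of $\ArrowLogRel$ (together with the fact that every Kripke relation over $\TerminalObj$ is contained in $\TerminalLogRel$), and part~2 by instantiating the Fundamental Lemma at the interpretation $\abasetype \mapsto (\semLogRel_\abasetype,\Int_0\sembracket\abasetype)$ and sandwiching via part~1. The explicit inclusion chains you write out for the product and arrow cases are exactly the content the paper packages into its two displayed monotonicity facts.
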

%
\begin{proof}
The proof of the first part is by induction on the structure of types.  
This uses the facts that
\begin{center}
$R\subseteq\TerminalLogRel$ for all $(R,\TerminalObj)$ in $\K{\arity}$ 
\end{center}
and that, for Kripke relations $(R_i,A_i)$ and $(R'_i,A_i)$ in
$\K{\arity}$ ($i=1,2$), 
\begin{center}
if $R_1 \subseteq R'_1$ and $R_2 \subseteq R'_2$ then 
$(R_1\ProductLogRel R_2) \subseteq (R'_1\ProductLogRel R'_2)$ and 
$(R'_1\ArrowLogRel R_2) \subseteq (R_1\ArrowLogRel R'_2)$
\end{center}
which follows from the functoriality of binary products and exponentials using
the observation that, for $(R,A)$ and $(R',A)$ in
$\K{\arity}$, 
\[ 
R \subseteq R'
\myiff 
\id_A: (R,A) \rightarrow (R',A)
\mbox{ in $\K{\arity}$}
\quad.
\] 

The proof of the second part follows from considering
the interpretation $\Int: \BaseTypeSet \rightarrow \K{\arity}$ mapping a base
type $\abasetype$ to the Kripke relation
$( \semLogRel_\abasetype , \Int_0\sembracket\abasetype )$ and noticing that, 
by the first part and the Fundamental Lemma of logical relations, the diagram
below in $\Set^{\catc^\op}$
\begin{equation} \label{Basic_Lemma_Diagram}
\begin{array}{c}\xymatrix@C=0pt{
\lowerLogRel_\Gamma 
\ar@{^(->}[r] 
\ar@{^(->}[dr] & \semLogRel_\Gamma \ar@{^(->}[d] 
\ar@{->}[rrrrrrrrrr] 
&&&&&&&&&& \semLogRel_\tau \ar@{^(->}[d] 
\ar@{^(->}[r] 
& \upperLogRel_\tau \ar@{^(->}[dl] 
\\
& 
\Semcat(\arity(\fstarg),\Int_0\sembracket\Gamma)
\ar[rrrrrrrrrr]_-{\Int_0\sembracket{\Gamma \entails t: \tau} \comp \fstarg}
&&&&&&&&&& 
\Semcat(\arity(\fstarg),\Int_0\sembracket\tau)
}\end{array}
\end{equation}
commutes, where for $\Gamma = \pair{ x_i:\tau_i }_{i=1,n}$, 
$\lowerLogRel_\Gamma 
 = 
 \lowerLogRel_{\tau_1}\ProductLogRel\ldots\ProductLogRel\lowerLogRel_{\tau_n}$
and 
$\semLogRel_\Gamma 
 = 
 \semLogRel_{\tau_1}\ProductLogRel\ldots\ProductLogRel\semLogRel_{\tau_n}$.
\end{proof}

The Basic Lemma yields the Definability Lemma by considering 
$\lowerLogRel_\tau = \defLogRel_\tau = \upperLogRel_\tau$ in 
the category of Kripke relations
$\K{\semfun\sembracket\fstarg: 
   \F{}[\TypeClosure\BaseTypeSet] \rightarrow \Semcat}$ 
for the given interpretation 
$\semfun: \BaseTypeSet \rightarrow \Semcat$.  We will now see that the Basic
Lemma can be also applied to obtain an extensional normalisation result (see
Lemma~\ref{Extensional_Normalisation_Lemma}).     

\paragraph{Normalisation.}
For an interpretation \mbox{$\semfun: \BaseTypeSet \rightarrow \Semcat$}
of base types in a cartesian closed category we aim at defining families  
$\setof{ ( \mLogRel_\tau , \semfun\sembracket\tau ) 
   }_{\tau\in\TypeClosure\BaseTypeSet}$
and 
$\setof{ ( \nLogRel_\tau , \semfun\sembracket\tau ) 
         }_{\tau\in\TypeClosure\BaseTypeSet}$
of $\F{}[\TypeClosure\BaseTypeSet]$-Kripke relations of arity
$\semfun\sembracket\fstarg: 
   \F{}[\TypeClosure\BaseTypeSet] \rightarrow \Semcat$
of definable morphisms such that
$$\begin{array}{c}
(i)\enspace
   \nLogRel_\TerminalType \, = \, \TerminalLogRel 
\\[2mm]
(ii)\enspace 
    \mLogRel_{\mbox{\scriptsize$\sigma\ProductType\tau$}}
    \,\subseteq\,
    (\mLogRel_{\sigma} \ProductLogRel \mLogRel_{\tau})
\qquad\qquad 
(iii)\enspace
     (\nLogRel_{\sigma} \ProductLogRel \nLogRel_{\tau})
     \,\subseteq\,
     \nLogRel_{\mbox{\scriptsize$\sigma\ProductType\tau$}}
\\[2mm]
(iv)\enspace
    \mLogRel_{\sigma\ArrowType\tau}
    \,\subseteq\,
    ( \nLogRel_{\sigma} \ArrowLogRel \mLogRel_{\tau} )
\qquad\qquad 
(v)\enspace
   ( \mLogRel_{\sigma} \ArrowLogRel \nLogRel_{\tau} ) 
   \,\subseteq\,
   \nLogRel_{\sigma\ArrowType\tau}
\\[2mm]
(vi)\enspace
    \mLogRel_{\abasetype} \subseteq \nLogRel_{\abasetype} 
    \quad (\abasetype \in \BaseTypeSet)
\\[2mm]
(vii)\enspace
     \proj_x: 
     \semfun\sembracket\Gamma \rightarrow \semfun\sembracket{\tau}
     \in \mLogRel_{\tau}(\Gamma)
\quad ((x:\tau)\in\Gamma)
\end{array}$$
so that, by the second part of the Basic Lemma, we get (setting
$\semLogRel_\abasetype = \mLogRel_\abasetype$ for all
$\abasetype \in \BaseTypeSet$, and 
$a_i = \proj_i: 
 \semfun\sembracket\Gamma \rightarrow \semfun\sembracket{\tau_i}$ 
for $\Gamma = \pair{x_i:\tau_i}_{i=1,n}$) that, for all terms 
$\Gamma \entails t: \tau$, 
\begin{equation} \label{Extensional_Normalisation}
\mbox{
$\semfun\sembracket{\Gamma \entails t: \tau}: 
  \semfun\sembracket\Gamma \rightarrow \semfun\sembracket\tau$
is in $\nLogRel_\tau(\Gamma)$\quad.
}
\end{equation}

The above will be achieved by distilling the semantic closure
properties~$(i)$--$(vii)$ into two syntactic typing systems $\NEentails$ and
$\NFentails$ with respect to which the definitions 
\begin{equation} \label{SemNeutralTerms}
\mLogRel_\tau(\Gamma) 
\ = \
\setof{ \ 
        \semfun\sembracket{\Gamma \entails M:\tau} 
        \suchthat
        \Gamma \NEentails M: \tau
        \ }
\end{equation}
\begin{equation} \label{SemNormalTerms}
\nLogRel_\tau(\Gamma) 
\ = \
\setof{ \
        \semfun\sembracket{\Gamma \entails N:\tau} 
        \suchthat
        \Gamma \NFentails N: \tau
        \ }
\end{equation}
will provide the required Kripke relations (see
Proposition~\ref{NEandNF_KripkeRelations}).  The conditions $(i)$--$(vii)$
amount, roughly, to the following properties. 
\begin{itemize}
\item
  The system~$\NEentails$ should contain variables
  (condition $(vii)$), and be closed under
  projections~(condition~$(ii)$) and under the application to terms in
  the system~$\NFentails$ (condition~$(iv)$).

\item
  The system~$\NFentails$ should contain the unit~(condition~$(i)$),
  and should be closed under pairing~(condition~$(iii)$) and under
  abstraction~(condition~$(v)$).

\item
  Every term of base type in the system~$\NEentails$ should be in the
  system~$\NFentails$~(condition~$(vi)$).
\end{itemize}
Formally, the systems are given by the rules in Figure~\ref{NN_Terms}.

\medskip
Thus, from purely semantic considerations, we have synthesised the notions of
neutral normal forms~(\viz, those derivable in the system~$\NEentails$) and of
long $\beta\eta$-normal forms~(\viz,~those derivable in the
system~$\NFentails$), henceforth respectively referred to as \Rem{neutral} and
\Rem{normal} terms, and characterised as follows.
\begin{figure}
\hrulefill
$$\begin{array}{c}
\begin{array}{c}
\\ \hline 
\Gamma \NEentails x: \tau
\end{array}
\quad(x:\tau) \in \Gamma
\\[6mm]%
\begin{array}{c}
\Gamma \NEentails M: \tau_1 \ProductType \tau_2
\\ \hline
\Gamma \NEentails \projop_i(M): \tau_i
\end{array}
\quad(i=1,2)
\\[6mm]%
\begin{array}{c}
\Gamma \NEentails M: \tau \ArrowType \tau'
\quad
\Gamma \NFentails N: \tau
\\ \hline
\Gamma \NEentails M(N): \tau'
\end{array}
\\[4mm]%
\begin{array}{c}
\\ \hline
\Gamma \NFentails \unitop: \UnitType
\end{array}
\\[6mm]%
\begin{array}{c}
\Gamma \NFentails N_i: \tau_i \quad (i=1,2)
\\ \hline
\Gamma \NFentails \pairop{N_1,N_2}: \tau_1 \ProductType \tau_2
\end{array}
\\[6mm]%
\begin{array}{c}
\Gamma, x: \tau \NFentails N: \tau'
\\ \hline
\Gamma \NFentails \lambdaop{x:\tau}N: \tau \ArrowType \tau'
\end{array}
\\[6mm]%
\begin{array}{c}
\Gamma \NEentails M: \abasetype
\\ \hline
\Gamma \NFentails M: \abasetype
\end{array}
\quad\mbox{($\abasetype$ a base type)}
\end{array}$$
\caption{Neutral and normal terms}\label{NN_Terms}
\hrulefill
\end{figure}

\begin{proposition}\label{Neutral_and_Normal_Characterisation}
\mbox{}
\begin{enumerate}
\item (Neutral terms)
\[
\Gamma \NEentails t: \tau
\myiff
\begin{array}[t]{l}
[ \ \exists\ (x:\tau)\in\Gamma.\ t = x \ ]
\\[2mm]
\enspace \vee \
[ \ \exists\ \Gamma\NEentails M:\tau\ProductType\tau'.\  t = \projop_1(M) \ ]
\enspace \vee \
[ \ \exists\ \Gamma\NEentails M:\tau'\ProductType\tau.\  t = \projop_2(M) \ ]
\\[2mm]
\enspace \vee \
[ \ \exists\ \Gamma\NEentails M:\tau'\ArrowType\tau\, , 
            \Gamma\NFentails N:\tau' . \  
  t = M(N) \ ]
\end{array}
\]

\item (Normal terms) \label{Normal_Characterisation}
\begin{itemize}
\item  
  $\Gamma\NFentails t:\UnitType \myiff t = \unitop$

\item  
  $\Gamma\NFentails t:\tau_1\ProductType\tau_2 
   \myiff 
   [ \ \exists\ \Gamma\NFentails N_1:\tau_1\, , \Gamma \NFentails N_2:\tau_2 .\
     t = \pairop{N_1,N_2}
     \ ]$

\item  
  $\Gamma\NFentails t:\tau\ArrowType\tau' 
   \myiff 
   [ \ \exists\ \Gamma,x:\tau\NFentails N:\tau'.\ t = \lambdaop{x:\tau}N \ ]$

\item  For $\abasetype$ a base type, 
  $\Gamma\NFentails t:\theta \myiff \Gamma\NEentails t:\theta$.
\end{itemize}
\end{enumerate}
\end{proposition}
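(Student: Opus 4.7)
The plan is to prove both characterizations by direct inversion on the inference rules of Figure~\ref{NN_Terms}, exploiting the fact that the derivation systems for $\NEentails$ and $\NFentails$ are syntax-directed.

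For the neutral terms characterization, I would argue by case analysis on the last rule used in a derivation of $\Gamma \NEentails t: \tau$. Inspection of Figure~\ref{NN_Terms} shows there are exactly four rules with a conclusion of this form---the variable rule, the two projection rules, and the application rule---and these correspond bijectively to the four disjuncts on the right-hand side of the stated biconditional. The backward direction is immediate: each disjunct is precisely the shape of the conclusion of the corresponding rule, and the premises assumed in the existential witnesses allow that rule to be applied.

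For the normal terms characterization, the key observation is that the four rules concluding in $\NFentails$ are not merely syntax-directed but uniquely determined by the outermost constructor of the type in the conclusion: the unit rule forces $\tau = \UnitType$; the pairing rule forces $\tau$ to be a product $\tau_1 \ProductType \tau_2$; the abstraction rule forces $\tau$ to be an arrow $\tau \ArrowType \tau'$; and the neutral-to-normal embedding rule forces $\tau$ to be a base type $\abasetype$. Since the grammar~(\ref{SimpleTypes}) makes these four shapes mutually exclusive, inversion on $\Gamma \NFentails t: \tau$ is unambiguous given~$\tau$, and each of the four equivalences follows immediately by reading off the unique applicable rule. The backward implications are again direct applications of the corresponding rules.

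The proof is entirely routine and presents no serious obstacle; the only points to check carefully are the exhaustiveness of the case analysis (four rules for each system, covering all shapes in the characterization) and the mutual exclusivity of the type shapes used to disambiguate normal-form derivations, both of which are immediate from the rule set and the type grammar. I would expect the written-out argument to consist of little more than a reference to Figure~\ref{NN_Terms} together with the remark that the rules are syntax-directed.
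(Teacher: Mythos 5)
Your inversion argument is correct and is exactly what the paper has in mind: the proposition is stated without proof precisely because it is the rule-by-rule inversion of the syntax-directed systems of Figure~\ref{NN_Terms}, with the normal-form cases disambiguated by the mutually exclusive type shapes in the conclusions. Nothing further is needed.
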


Neutral and normal terms are closed under context renamings and thereby
semantically induce Kripke relations.

\begin{proposition}\label{NEandNF_KripkeRelations}
Let $\semfun: \BaseTypeSet \rightarrow \Semcat$ be an interpretation of base
types in a cartesian closed category.
For all types $\tau\in\TypeClosure\BaseTypeSet$, the
definitions~(\ref{SemNeutralTerms}) and~(\ref{SemNormalTerms}) respectively
yield  $\F{}[\TypeClosure\BaseTypeSet]$-Kripke relations $\mLogRel_\tau$ and
$\nLogRel_\tau$ of arity 
$\semfun\sembracket\fstarg: 
   \F{}[\TypeClosure\BaseTypeSet] \rightarrow \Semcat$
satisfying conditions~$(i)$--$(vii)$.
\end{proposition}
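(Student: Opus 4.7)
The plan is to verify two sets of properties: first, that the set-valued families defined by (\ref{SemNeutralTerms}) and (\ref{SemNormalTerms}) are actually Kripke relations (i.e.~closed under the monotonicity condition), and second, that they satisfy the seven semantic closure conditions, reading each one off the corresponding syntactic rule of $\NEentails$ or $\NFentails$ together with the compositional semantics in a cartesian closed category.

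For monotonicity, observe that a morphism $\rho:\Gamma'\rightarrow\Gamma$ in $\F{}[\TypeClosure\BaseTypeSet]$ is a type-preserving renaming $\rho:\Gamma\rightarrow\Gamma'$ in $\F{}\comma\TypeClosure\BaseTypeSet$, and $\semfun\sembracket\rho$ is the tuple of projections $\pair{\proj_{\rho x}}_{x\in\dom(\Gamma)}$. A direct induction on the syntactic rules of Figure~\ref{NN_Terms} shows that if $\Gamma\NEentails M:\tau$ (resp.~$\Gamma\NFentails N:\tau$) then $\Gamma'\NEentails M[\rho]:\tau$ (resp.~$\Gamma'\NFentails N[\rho]:\tau$), where $M[\rho]$ is the capture-avoiding renaming. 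The substitution lemma for the categorical semantics then gives $\semfun\sembracket{\Gamma\entails M:\tau}\comp\semfun\sembracket\rho=\semfun\sembracket{\Gamma'\entails M[\rho]:\tau}$, whence closure of $\mLogRel_\tau$ under precomposition with $\semfun\sembracket\rho$; the case of $\nLogRel_\tau$ is identical.

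Conditions $(i)$, $(ii)$, $(iii)$, $(vi)$, and $(vii)$ are now read off directly. For $(i)$, by Proposition~\ref{Neutral_and_Normal_Characterisation} the only normal term of type $\UnitType$ in context $\Gamma$ is $\unitop$, whose semantics is the unique map $\semfun\sembracket\Gamma\rightarrow\TerminalObj$, so $\nLogRel_\UnitType(\Gamma)=\TerminalLogRel(\Gamma)$. For $(ii)$, if $a=\semfun\sembracket{\Gamma\entails M:\sigma\ProductType\tau}$ with $\Gamma\NEentails M:\sigma\ProductType\tau$ then $\proj_i\comp a=\semfun\sembracket{\Gamma\entails\proj_i(M):\tau_i}$ with $\Gamma\NEentails\proj_i(M):\tau_i$ by the projection rule. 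Condition $(iii)$ uses the uniqueness of pairing together with the pairing rule for $\NFentails$. Condition $(vi)$ is the inclusion rule $\Gamma\NEentails M:\abasetype\Rightarrow\Gamma\NFentails M:\abasetype$, and $(vii)$ follows from the variable rule $\Gamma\NEentails x:\tau$, since $\semfun\sembracket{\Gamma\entails x:\tau}=\proj_x$.

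The two conditions on exponentials require a little more work, and condition $(v)$ is the main obstacle. For $(iv)$, take $a=\semfun\sembracket{\Gamma\entails M:\sigma\ArrowType\tau}$ with $\Gamma\NEentails M$; for any renaming $\rho:\Gamma\rightarrow\Gamma'$ and any $b=\semfun\sembracket{\Gamma'\entails N:\sigma}$ with $\Gamma'\NFentails N:\sigma$, the composite $\eval\comp\pair{a\comp\semfun\sembracket\rho,b}$ equals $\semfun\sembracket{\Gamma'\entails M[\rho](N):\tau}$, and $\Gamma'\NEentails M[\rho](N):\tau$ by the application rule, so it lies in $\mLogRel_\tau(\Gamma')$. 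For $(v)$, suppose $f:\semfun\sembracket\Gamma\rightarrow\semfun\sembracket\sigma\CatArrow\semfun\sembracket\tau$ satisfies the $\ArrowLogRel$ clause. Instantiate with the inclusion $\rho:\Gamma\rightembedding\Gamma,x:\sigma$ (for $x$ fresh) and with the variable $\proj_x=\semfun\sembracket{\Gamma,x:\sigma\entails x:\sigma}\in\mLogRel_\sigma(\Gamma,x:\sigma)$, which is available precisely by condition $(vii)$ for $\mLogRel$ just verified. This yields a normal term $\Gamma,x:\sigma\NFentails N:\tau$ with $\semfun\sembracket{\Gamma,x:\sigma\entails N:\tau}=\eval\comp\pair{f\comp\proj_1,\proj_2}$. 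The key categorical step is the $\eta$-equation in the cartesian closed structure: $f=\lambda(\eval\comp\pair{f\comp\proj_1,\proj_2})=\semfun\sembracket{\Gamma\entails\lambdaop{x:\sigma}N:\sigma\ArrowType\tau}$, and the abstraction rule of $\NFentails$ gives $\Gamma\NFentails\lambdaop{x:\sigma}N:\sigma\ArrowType\tau$, establishing $f\in\nLogRel_{\sigma\ArrowType\tau}(\Gamma)$.
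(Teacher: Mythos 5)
Your proof is correct and takes essentially the same route as the paper: the paper establishes monotonicity as a corollary of the renaming invariance of the systems $\NEentails$ and $\NFentails$ (your first paragraph) and dismisses conditions $(i)$--$(vii)$ as following "by the construction of the systems", which is exactly the rule-by-rule verification you carry out, including the correct use of weakening, the fresh-variable projection, and the $\eta$-law of exponentials for condition $(v)$.
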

\begin{proof}
The first part is a corollary of the facts that 
\begin{equation} \label{Neutral_Renaming_Invariance}
\Gamma\NEentails M:\tau 
\myiff 
\forall\ \rho:\Gamma\to\Gamma' \mbox{ in } \F{}\comma\TypeClosure\BaseTypeSet .\
\Gamma'\NEentails M[\subst{\rho x}x]_{x\in\dom(\Gamma)} : \tau 
\end{equation}
and 
\begin{equation} \label{Normal_Renaming_Invariance}
\Gamma\NFentails N:\tau 
\myiff 
\forall\ \rho:\Gamma\to\Gamma' \mbox{ in } \F{}\comma\TypeClosure\BaseTypeSet .\
\Gamma'\NFentails N[\subst{\rho x}x]_{x\in\dom(\Gamma)} : \tau 
\quad.
\end{equation}

The second part follows by the construction of the systems $\NEentails$ and
$\NFentails$.  
\end{proof}

From Proposition~\ref{NEandNF_KripkeRelations}, we
have~(\ref{Extensional_Normalisation}) and therefore, 
from~(\ref{SemNormalTerms}) and
Proposition~\ref{Neutral_and_Normal_Characterisation}%
(\ref{Normal_Characterisation}), 
we obtain the following Extensional Normalisation Lemma.  
\begin{lemma}~\Rem{(Extensional Normalisation Lemma)} 
\label{Extensional_Normalisation_Lemma} 
Let $\semfun:\BaseTypeSet\rightarrow\Semcat$ be an interpretation of base
types in a cartesian closed category.  For every term 
$\Gamma \entails t: \tau$ there exists a long $\beta\eta$-normal term 
$\Gamma \NFentails N: \tau$ such that  
\[
\semfun\sembracket{\Gamma \entails t: \tau}
\ = \
\semfun\sembracket{\Gamma \entails N: \tau}
\ : \
\semfun\sembracket\Gamma \rightarrow \semfun\sembracket\tau
\]
in $\Semcat$.
\end{lemma}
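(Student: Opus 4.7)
The plan is to assemble the Extensional Normalisation Lemma directly from the Basic Lemma and Proposition~\ref{NEandNF_KripkeRelations}, with no essentially new argument required beyond the bookkeeping of identifications.

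First, I would work in the category of Kripke relations $\K{\semfun\sembracket\fstarg: \F{}[\TypeClosure\BaseTypeSet] \rightarrow \Semcat}$ and set $\lowerLogRel_\tau = \mLogRel_\tau$, $\upperLogRel_\tau = \nLogRel_\tau$ for every $\tau \in \TypeClosure\BaseTypeSet$, as defined by~(\ref{SemNeutralTerms}) and~(\ref{SemNormalTerms}). Proposition~\ref{NEandNF_KripkeRelations} ensures that these are Kripke relations of the required arity and that they satisfy the closure conditions~$(i)$--$(v)$ demanded by the hypotheses of the Basic Lemma. Choosing $\semLogRel_\abasetype = \mLogRel_\abasetype$ for each base type~$\abasetype$, condition~$(vi)$ yields the chain $\lowerLogRel_\abasetype \subseteq \semLogRel_\abasetype \subseteq \upperLogRel_\abasetype$, so the Basic Lemma applies to the family $\pair{(\semLogRel_\tau,\semfun\sembracket\tau)}_{\tau\in\TypeClosure\BaseTypeSet}$ obtained by extending these base-type Kripke relations along the cartesian closed structure of $\K{\semfun\sembracket\fstarg}$.

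Now fix a term $\Gamma \entails t: \tau$ with $\Gamma = \pair{x_i:\tau_i}_{i=1,n}$. Condition~$(vii)$ tells us that each projection $\proj_{x_i}: \semfun\sembracket\Gamma \rightarrow \semfun\sembracket{\tau_i}$ lies in $\mLogRel_{\tau_i}(\Gamma) = \lowerLogRel_{\tau_i}(\Gamma)$. Applying part~(\ref{Basic_Lemma_Two}) of the Basic Lemma with $c = \Gamma$ and $a_i = \proj_{x_i}$ gives that the composite $\semfun\sembracket{\Gamma \entails t: \tau} \comp \pair{\proj_{x_1},\ldots,\proj_{x_n}}$ belongs to $\upperLogRel_\tau(\Gamma) = \nLogRel_\tau(\Gamma)$. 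Since $\pair{\proj_{x_1},\ldots,\proj_{x_n}} = \id_{\semfun\sembracket\Gamma}$, this composite is just $\semfun\sembracket{\Gamma \entails t: \tau}$ itself. Unwinding definition~(\ref{SemNormalTerms}) then produces a term $\Gamma \NFentails N: \tau$ with $\semfun\sembracket{\Gamma \entails N: \tau} = \semfun\sembracket{\Gamma \entails t: \tau}$, and Proposition~\ref{Neutral_and_Normal_Characterisation}(\ref{Normal_Characterisation}) confirms that $N$ is in long $\beta\eta$-normal form.

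There is no real obstacle: all the semantic content is packaged into the Basic Lemma (the inductive argument on types) and into Proposition~\ref{NEandNF_KripkeRelations} (verification of the closure conditions from the derivation rules of Figure~\ref{NN_Terms}). The only point that requires a moment of care is that the Basic Lemma produces a morphism pre-composed with a tuple of lower relation members, and one has to observe that instantiating those members at the projections recovers $\semfun\sembracket{\Gamma \entails t: \tau}$ on the nose via the identity $\pair{\proj_{x_1},\ldots,\proj_{x_n}} = \id_{\semfun\sembracket\Gamma}$.
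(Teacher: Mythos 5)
Your proposal is correct and follows essentially the same route as the paper: the paper likewise derives the lemma by instantiating the second part of the Basic Lemma with $\lowerLogRel_\tau = \mLogRel_\tau$, $\upperLogRel_\tau = \nLogRel_\tau$, $\semLogRel_\abasetype = \mLogRel_\abasetype$, and $a_i = \proj_{x_i}$, using Proposition~\ref{NEandNF_KripkeRelations} for the closure conditions and the observation that $\pair{\proj_{x_1},\ldots,\proj_{x_n}} = \id_{\semfun\sembracket\Gamma}$ before unwinding~(\ref{SemNormalTerms}). No discrepancies.
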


Specialising the Extensional Normalisation Lemma for the canonical
interpretation of types in the free cartesian closed category generated
by them we obtain the following syntactic result~(\cf~\cite{StreicherNBE}).
\begin{corollary} \label{Corollary_BetaEtaLongNormalForms}
Every simply typed term is $\beta\eta$-equal to one in long $\beta\eta$-normal
form.
\end{corollary}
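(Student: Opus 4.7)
The plan is to instantiate the Extensional Normalisation Lemma at a semantics that is both sound \emph{and} complete for $\beta\eta$-equality, so that equality of denotations transports back to $\beta\eta$-equality of syntax. The canonical choice is the free cartesian closed category $\FreeCCC$ generated by the set of base types $\BaseTypeSet$, together with the interpretation $\semfun_0 : \BaseTypeSet \rightarrow \FreeCCC$ sending each base type to the corresponding generator.

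First, I would recall (or take as standard, \cf~\cite{LambekScott}) that this interpretation is both sound and complete: for any two terms $\Gamma \entails t : \tau$ and $\Gamma \entails t' : \tau$, one has
\[
\semfun_0\sembracket{\Gamma \entails t : \tau} \,=\, \semfun_0\sembracket{\Gamma \entails t' : \tau}
\text{ in } \FreeCCC
\quad\myiff\quad
t \betaetaeq t'.
\]
Soundness is routine induction on the derivation of $\beta\eta$-equality, and completeness is the universal property of $\FreeCCC$ (terms modulo $\beta\eta$ form a CCC through which any interpretation factors uniquely, so the canonical one is faithful).

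Next, I would apply the Extensional Normalisation Lemma (Lemma~\ref{Extensional_Normalisation_Lemma}) to the interpretation $\semfun_0 : \BaseTypeSet \rightarrow \FreeCCC$. Given an arbitrary term $\Gamma \entails t : \tau$, the lemma produces a long $\beta\eta$-normal term $\Gamma \NFentails N : \tau$ such that $\semfun_0\sembracket{\Gamma \entails t : \tau} = \semfun_0\sembracket{\Gamma \entails N : \tau}$ as morphisms in $\FreeCCC$.

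Finally, invoking completeness of the canonical interpretation recorded in the first step, this denotational equality entails $t \betaetaeq N$, which is exactly the statement of the corollary. No step here is a real obstacle: the only non-trivial ingredient external to the excerpt is the completeness of the free CCC semantics, which is classical, and everything else is a direct specialisation of the Extensional Normalisation Lemma.
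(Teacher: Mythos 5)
Your proposal is correct and is essentially the paper's own argument: the paper likewise obtains the corollary by specialising the Extensional Normalisation Lemma to the canonical interpretation of base types in the free cartesian closed category, with the completeness (faithfulness) of that interpretation with respect to $\beta\eta$-equality left implicit as standard. You have merely made that last appeal to completeness explicit, which is a reasonable thing to do.
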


The above does not give information about the long~$\beta\eta$-normal form
associated to a term because Kripke relations are extensional predicates.
What is needed instead for this purpose is a notion of \Rem{intensional}
Kripke relation in which the extension of the predicate is witnessed (or
realised).  Technically, this amounts to revisiting the development in
categories obtained by the Artin-Wraith glueing construction~\cite{Wraith}.
This will be done in Part~II.  To do it at an appropriate abstract,
syntax-independent level we will first consider the  
typed lambda calculus algebraically.

\section*{\Large Part~II}

\section*{II.1\quad Algebraic typed lambda calculus}

We provide an algebraic setting for the syntax and semantics of the typed
lambda calculus following and extending the theory of~\cite{FiorePlotkinTuri}.
In particular, we describe the typed abstract syntax of simply typed and of
neutral and normal terms as initial algebras, and show how the usual semantics
corresponds to unique algebra homomorphisms from the initial (term) algebras
to suitable semantic algebras.   

\subsection*{II.1.1\quad Syntax}

Categories of contexts, which we study next, play a crucial role in
describing abstract syntax with variable binding; see~\cite{FiorePlotkinTuri}
for further details.

\paragraph{Free (co)cartesian categories.}
The category of untyped contexts and renamings $\F{}$ with objects given by
finite subsets of (the countably infinite set of) variables and morphisms
given by all functions is the free cocartesian category on one generator.  

More generally, the free cocartesian category over a set $\TypeSet$ can be
described as the comma category $\F{}\comma\TypeSet$ of contexts with types
in the set $\TypeSet$ and type-preserving context renamings.  
(That is, $\F{}\comma\TypeSet$ is the category with objects given by maps 
$\Gamma: V \rightarrow \TypeSet$ where $V$ is in $\F{}$, and with morphisms 
$\rho: \Gamma \rightarrow \Gamma'$ given by functions 
\mbox{$\rho: \dom(\Gamma) \rightarrow \dom(\Gamma')$} such that 
$\Gamma = \Gamma' \comp \rho$.) 
The initial object $(\InitialObj \rightarrow \TypeSet)$ in
$\F{}\comma\TypeSet$ is the empty context; whilst the coproduct in
$\F{}\comma\TypeSet$ is
$$\begin{array}{rcl}
(V \stackrightarrow\Gamma \TypeSet)
\CatCoproduct
(V' \stackrightarrow{\Gamma'} \TypeSet)
& = &
(V \CatCoproduct V' \longstackrightarrow{[\Gamma,\Gamma']} \TypeSet)
\end{array}$$

As before, we write $\F{}[\TypeSet]$ for $(\F{}\comma\TypeSet)^\op$.
Further, we write $\ctxtemb\fstarg: \TypeSet \rightarrow \F{}[\TypeSet]$ for
the universal embedding (mapping $\tau$ to 
$(1 \stackrightarrow\tau \TypeSet)$) and exhibiting $\F{}[\TypeSet]$ as the
free cartesian category over $\TypeSet$. 

\paragraph{Typed abstract syntax with variable binding.}
The semantic universe on which to consider the algebras for the typed
lambda calculus over a set of base types $\BaseTypeSet$ is the functor 
category $\Set^{\F{}\comma\TypeClosure\BaseTypeSet}$ of
\mbox{$\F{}\comma\TypeClosure\BaseTypeSet$-variable sets}, referred to as 
(covariant)~presheaves.  (Recall that
$\Set^{\F{}\comma\TypeClosure\BaseTypeSet}$ has objects given by
functors \mbox{$\F{}\comma\TypeClosure\BaseTypeSet \rightarrow \Set$} and
morphisms $\varphi: P \rightarrow P'$ 
given by natural transformations; that is,~families of functions
\mbox{$\varphi 
 = 
 \setof{ \varphi_\Gamma: P(\Gamma) \rightarrow P'(\Gamma)
         }_{\Gamma\in\scriptobj{\,\F{}\comma\TypeClosure\BaseTypeSet\,}}$}
such that 
$\varphi_{\Gamma'} \comp P(\rho) = P'(\rho) \comp \varphi_\Gamma$ for
all $\rho:\Gamma\rightarrow\Gamma'$ in~$\F{}\comma\TypeClosure\BaseTypeSet$.)

The structure of $\Set^{\F{}\comma\TypeClosure\BaseTypeSet}$ allowing the
interpretation of variables and binding operators is described below.
\begin{itemize}
\item
The presheaf of variables of type $\tau \in \TypeClosure\BaseTypeSet$ is 
\begin{equation} \label{VPSh}
\VarPSh\tau = \yon\ctxtemb\tau
\end{equation}
in $\Set^{\F{}\comma\TypeClosure\BaseTypeSet}$ where
$$\begin{array}{rcl}
\F{}[\TypeClosure\BaseTypeSet] & \xymatrix{\ar@{^(->}[r]^\yon&} &
\Set^{\F{}\comma\TypeClosure\BaseTypeSet} \\[2mm]
\Gamma & \xymatrix{\ar@{|->}[r]&} &
(\F{}\comma\TypeClosure\BaseTypeSet)(\Gamma,\fstarg)
\end{array}$$
is the Yoneda embedding.

Hence, 
$\VarPSh\tau(\Gamma) \iso \setof{ x \suchthat (x:\tau) \in \Gamma }$.

\item
For every type $\tau \in \TypeClosure\BaseTypeSet$, the parameterisation
functor 
$\fstarg\CatProduct\ctxtemb\tau: 
   \F{}[\TypeClosure\BaseTypeSet] \rightarrow \F{}[\TypeClosure\BaseTypeSet]$
induces the following situation
\begin{equation} \label{Adjoint_Situation_For_F[T]}
\begin{array}{c}\xymatrix@C=5pt{\ar@{}[rrrrrd]|(.5){\stackrel{\Lan}{\iso}}
\F{}[\TypeClosure\BaseTypeSet]\ \ar@{^(->}[rrrrrr]^-\yon
\ar[d]_-{\fstarg\CatProduct\ctxtemb\tau} &&&&&& 
\Set^{\F{}\comma\TypeClosure\BaseTypeSet}
\ar[d]<-1ex>_-{\fstarg\CatProduct\yon\ctxtemb\tau} \ar@{}[d]|-{\dashv}
\\
\F{}[\TypeClosure\BaseTypeSet]\ \ar@{^(->}[rrrrrr]_-\yon &&&&&& 
\Set^{\F{}\comma\TypeClosure\BaseTypeSet}
\ar[u]<-1ex>_-{\Set^{(\fstarg\CatCoproduct\ctxtemb\tau)}}
}\end{array}
\end{equation}
Thus, in $\Set^{\F{}\comma\TypeClosure\BaseTypeSet}$, the exponential
$P^{\VarPSh\tau}$ of the presheaf $\VarPSh\tau$ and a presheaf $P$ can be
explicitly described as \mbox{$P(\fstarg\CatCoproduct\ctxtemb\tau)$}. 

Hence, $P^{\VarPSh\tau}(\Gamma) \iso P(\Gamma\CatCoproduct\ctxtemb\tau)$.
\end{itemize}

A \Def{typed lambda algebra} over a set of base types
$\BaseTypeSet$ is a $\TypeClosure\BaseTypeSet$-sorted algebra with
carrier given by a family 
$\setof{ \VVVar_\tau }_{\tau\in\TypeClosure\BaseTypeSet}$ of
presheaves in $\Set^{\F{}\comma\TypeClosure\BaseTypeSet}$ equipped
with the following operations:
\begin{center}\begin{tabular}{lcl}
(Variables) && 
  $\VarPSh\tau \rightarrow \VVVar_\tau$
\\[2mm]
(Unit) &&
  $\TerminalPSh \rightarrow \VVVar_\UnitType$
\\[2mm]
(First Projection) &&
  $\VVVar_{\tau\ProductType\tau'} \rightarrow \VVVar_\tau$
\\[2mm]
(Second Projection) &&
  $\VVVar_{\tau'\ProductType\tau} \rightarrow \VVVar_\tau$
\\[2mm]
(Pairing) &&
  $\VVVar_\tau \CatProduct \VVVar_{\tau'} 
     \rightarrow
   \VVVar_{\tau\ProductType\tau'}$
\\[2mm]
(Application) && 
  $\VVVar_{\tau'\ArrowType\tau} \CatProduct \VVVar_{\tau'} 
     \rightarrow 
   \VVVar_\tau$
\\[2mm]
(Abstraction) &&
  $(\VVVar_{\tau'})^{\VarPSh\tau} 
     \rightarrow
   \VVVar_{\tau\ArrowType\tau'}$
\end{tabular}\end{center}
Informally, one thinks of the sets 
$\VVVar_\tau(\Gamma)$~($\tau\in\TypeClosure\BaseTypeSet$,
$\Gamma \in \obj{\F{}\comma\TypeClosure\BaseTypeSet}$) as the
$\tau$-sorted elements of the algebra~$\VVVar$ in the context $\Gamma$.
Note that under this interpretation the abstraction operation corresponds to
a natural family of mappings  
$$
\VVVar_{\tau'}(\Gamma\CatCoproduct\ctxtemb\tau) 
\rightarrow
\VVVar_{\tau\ArrowType\tau'}(\Gamma)
$$
associating an element of sort $\tau'$ in the 
context \mbox{$\Gamma\CatCoproduct\ctxtemb\tau$}~(that is,~the
context~$\Gamma$ extended with a fresh variable of type $\tau$) with an
element of sort $\tau\ArrowType\tau'$ in the context $\Gamma$.

In the tradition of categorical algebra, the category of typed lambda algebras
can be defined as the category of \mbox{$\SSSTerm$-algebras} for a signature
endofunctor~$\SSSTerm$ on
$(\Set^{\F{}\comma\TypeClosure\BaseTypeSet})^{\TypeClosure\BaseTypeSet}$.
This endofunctor is induced by the above operations as follows, 
for $\abasetype\in\BaseTypeSet$ and $\tau,\tau'\in\TypeClosure\BaseTypeSet$:  
$$\begin{array}{lcll}
(\SSSTerm\VVVar)_\abasetype
& = & \VarPSh\abasetype \CatCoproduct \EEE\abasetype(\VVVar)
\\[2mm]
(\SSSTerm\VVVar)_\UnitType
& = & \VarPSh\UnitType \CatCoproduct \TerminalPSh \CatCoproduct 
      \EEE\UnitType(\VVVar)
\\[2mm]
(\SSSTerm\VVVar)_{\tau\ProductType\tau'}
& = & \VarPSh{\tau\ProductType\tau'} \CatCoproduct 
      (\VVVar_\tau \CatProduct \VVVar_{\tau'}) \CatCoproduct
      \EEE{\tau\ProductType\tau'}(\VVVar)
\\[2mm]
(\SSSTerm\VVVar)_{\tau\ArrowType\tau'}
& = & \VarPSh{\tau\ArrowType\tau'} \CatCoproduct 
      (\VVVar_{\tau'})^{\VarPSh\tau} \CatCoproduct
      \EEE{\tau\ArrowType\tau'}(\VVVar)
\end{array}$$
where 
$$\begin{array}{cl}
\EEE\tau(\VVVar) 
\; = \; \coprod_{\tau'\in\TypeClosure\BaseTypeSet}\:
        \VVVar_{\tau\ProductType\tau'} 
        \CatCoproduct
        \VVVar_{\tau'\ProductType\tau} 
        \CatCoproduct
        (\VVVar_{\tau'\ArrowType\tau}\CatProduct\VVVar_{\tau'})
\end{array}$$
is the signature endofunctor corresponding to the projections and application
operations onto $\tau$.

The initial $\SSSTerm$-algebra 
$\TermPSh{} = \setof{ \TermPSh\tau }_{\tau\in\TypeClosure\BaseTypeSet}$
with its structure
\begin{equation}\label{LamInitialAlg}
\begin{array}{rcl}
\VarPSh\abasetype \CatCoproduct \EEE\abasetype(\TermPSh{})
&
\stackrightarrow\iso
&
\TermPSh\abasetype
\\[2mm]
\VarPSh\UnitType \CatCoproduct \TerminalPSh \CatCoproduct 
\EEE\UnitType(\TermPSh{})
& \stackrightarrow\iso &
\TermPSh\UnitType
\\[2mm]
\VarPSh{\tau\ProductType\tau'} \CatCoproduct 
(\TermPSh\tau \CatProduct \TermPSh{\tau'}) \CatCoproduct
      \EEE{\tau\ProductType\tau'}(\TermPSh{})
& \stackrightarrow\iso &
\TermPSh{\tau\ProductType\tau'}
\\[2mm]
\VarPSh{\tau\ArrowType\tau'} \CatCoproduct 
(\TermPSh{\tau'})^{\VarPSh\tau} \CatCoproduct
\EEE{\tau\ArrowType\tau'}(\TermPSh{})
& \stackrightarrow\iso &
\TermPSh{\tau\ArrowType\tau'}
\end{array}
\end{equation}
can be explicitly described as the family of presheaves of terms 
$$
\TermPSh\tau(\Gamma) 
  = \setof{ \ t \suchthat \Gamma \entails t: \tau \ }
$$
with presheaf action given by variable renaming~(that is,~by the
mapping associating $\Gamma \entails t: \tau$ to 
\mbox{$\Gamma' \entails t[\subst{\rho x}x]_{x\in\dom(\Gamma)}: \tau$}
for any $\rho: \Gamma \rightarrow \Gamma'$ in
$\F{}\comma\TypeClosure\BaseTypeSet$), and with operations 
$$\begin{array}{lcl}
\varmap_\tau 
  & : & \VarPSh\tau \rightarrow \TermPSh\tau
\\[2mm]
\unitmap_\UnitType 
  & : & \TerminalPSh \rightarrow \TermPSh\UnitType
\\[2mm]
\fstmap^{(\tau')}_\tau 
  & : & \TermPSh{\tau\ProductType\tau'} \rightarrow \TermPSh{\tau}
\\[2mm]
\sndmap^{(\tau')}_\tau 
  & : & \TermPSh{\tau'\ProductType\tau} \rightarrow \TermPSh{\tau}
\\[2mm]
\pairmap_{\tau\ProductType\tau'} 
  & : & \TermPSh\tau \CatProduct \TermPSh{\tau'} 
        \rightarrow
        \TermPSh{\tau\ProductType\tau'}
\\[2mm]
\appmap^{(\tau')}_\tau 
  & : & \TermPSh{\tau'\ArrowType\tau} \CatProduct \TermPSh{\tau'}
        \rightarrow
        \TermPSh\tau
\\[2mm]
\absmap_{\tau\ArrowType\tau'} 
  & : & (\TermPSh{\tau'})^{\VarPSh\tau} 
        \rightarrow
        \TermPSh{\tau\ArrowType\tau'}
\end{array}$$
corresponding to the typing rules in Figure~\ref{Terms}.

\smallskip
A full theory of typed abstract syntax with variable binding incorporating
substitution along the lines of~\cite{FiorePlotkinTuri} can be developed
(see,~\eg,~\cite{FioreHur,FioreSzamozvancev}).  This is however not necessary
for the purposes of the paper.

\medskip
The notions of neutral and normal terms are given by mutual
induction~(see Figure~\ref{NN_Terms}) and, as such, the associated
algebraic notion corresponds to considering a signature endofunctor on
the product category 
$(\Set^{\F{}\comma\TypeClosure\BaseTypeSet})^{\TypeClosure\BaseTypeSet}
 \CatProduct
 (\Set^{\F{}\comma\TypeClosure\BaseTypeSet})^{\TypeClosure\BaseTypeSet}$.
This endofunctor, with components $\pair{\SSSNE,\SSSNF}$, is defined
below:
$$
\left\{\begin{array}{rcl}
(\SSSNE(\VVVar,\VVVarr))_\tau 
=  \VarPSh\tau \CatCoproduct \EEE\tau(\VVVar,\VVVarr)
\end{array}\right.
\qquad\qquad
\left\{\begin{array}{lcl}
(\SSSNF(\VVVar,\VVVarr))_\abasetype 
& = & \VarPSh\abasetype \CatCoproduct \EEE\abasetype(\VVVar,\VVVarr)
\\[2mm]
(\SSSNF(\VVVar,\VVVarr))_\UnitType 
& = & \TerminalPSh
\\[2mm]
(\SSSNF(\VVVar,\VVVarr))_{\tau\ProductType\tau'} 
& = & 
\VVVarr_\tau \CatProduct \VVVarr_{\tau'}
\\[2mm]
(\SSSNF(\VVVar,\VVVarr))_{\tau\ArrowType\tau'} 
& = & 
(\VVVarr_{\tau'})^{\VarPSh\tau} 
\end{array}\right.
$$
where
\[\textstyle
\EEE\tau(\VVVar,\VVVarr) 
 = \coprod_{\tau'\in\TypeClosure\BaseTypeSet}\:
     \VVVar_{\tau\ProductType\tau'}
     \CatCoproduct
     \VVVar_{\tau'\ProductType\tau}
     \CatCoproduct
     (\VVVar_{\tau'\ArrowType\tau}\CatProduct\VVVarr_{\tau'})
\]
for $\abasetype\in\BaseTypeSet$ and $\tau,\tau'\in\TypeClosure\BaseTypeSet$.

We write $(\NEPSh{},\NFPSh{})$ for the initial $\pair{\SSSNE,\SSSNF}$-algebra
with structure, for $\abasetype\in\BaseTypeSet$ and
$\tau,\tau'\in\TypeClosure\BaseTypeSet$, as follows:
\begin{equation} \label{NEandNFPShsOne}
\left\{\begin{array}{rcl}
\VarPSh\tau \CatCoproduct \EEE\tau(\NEPSh{},\NFPSh{})
& \stackrightarrow\iso &
\NEPSh\tau
\end{array}\right.
\qquad\qquad
\left\{\begin{array}{rcl}
\VarPSh\abasetype \CatCoproduct \EEE\abasetype(\NEPSh{},\NFPSh{})
& \stackrightarrow\iso &
\NFPSh\abasetype
\\[2mm]
\TerminalPSh
& \stackrightarrow\iso &
\NFPSh\UnitType
\\[2mm]
\NFPSh\tau \CatProduct \NFPSh{\tau'}
& \stackrightarrow\iso &
\NFPSh{\tau\ProductType\tau'}
\\[2mm]
(\NFPSh{\tau'})^{\VarPSh\tau} 
& \stackrightarrow\iso &
\NFPSh{\tau\ArrowType\tau'}
\end{array}\right.
\end{equation}
Note that we have an isomorphism
\begin{equation} \label{normISO}
\norm 
  \, : \,
  \NEPSh\abasetype 
  \iso 
  \VarPSh\abasetype \CatCoproduct \EEE\abasetype(\NEPSh{},\NFPSh{})
  \iso
  \NFPSh\abasetype
\end{equation}
for all $\abasetype\in\BaseTypeSet$. 

Explicitly, the presheaves $\NEPSh\tau$ and $\NFPSh\tau$ can be respectively
described as the neutral and normal terms
$$
\begin{array}{l}
\NEPSh\tau(\Gamma) 
\ = \ \setof{\, M\ \suchthat\ \Gamma \NEentails M: \tau\, }
\end{array}
\qquad\qquad
\begin{array}{r}
\NFPSh\tau(\Gamma) 
\ = \ \setof{\, N\ \suchthat\ \Gamma \NFentails N: \tau\, }
\end{array}
$$
with presheaf action given by variable 
renaming~(recall~(\ref{Neutral_Renaming_Invariance}) 
and~(\ref{Normal_Renaming_Invariance})), 
and with operations
\begin{equation} \label{NEandNFPShsTwo}
\ \left\{\begin{array}{lcl}
\varmap_\tau 
& \ : & 
\VarPSh\tau \rightarrow \NEPSh\tau
\\[2mm]
\fstmap^{(\tau')}_\tau 
& \ : & 
\NEPSh{\tau\ProductType\tau'} \rightarrow \NEPSh{\tau}
\\[2mm]
\sndmap^{(\tau')}_\tau 
& \ : & 
\NEPSh{\tau'\ProductType\tau} \rightarrow \NEPSh{\tau}
\\[2mm]
\appmap^{(\tau')}_\tau 
& \ : & 
\NEPSh{\tau'\ArrowType\tau} \CatProduct \NFPSh{\tau'}
  \rightarrow \NEPSh\tau
\end{array}\right.
\qquad\qquad
\left\{\begin{array}{lcl}
\varmap_\abasetype 
& \!\! : & 
\VarPSh\abasetype \rightarrow \NFPSh\abasetype
\\[2mm]
\fstmap^{(\tau')}_\abasetype 
& \!\! : & 
\NEPSh{\abasetype\ProductType\tau'} \rightarrow \NFPSh{\abasetype}
\\[2mm]
\sndmap^{(\tau')}_\abasetype 
& \!\! : & 
\NEPSh{\tau'\ProductType\abasetype} \rightarrow \NFPSh{\abasetype}
\\[2mm]
\appmap^{(\tau')}_\abasetype 
& \!\! : & 
\NEPSh{\tau'\ArrowType\abasetype} \CatProduct \NFPSh{\tau'}
  \rightarrow \NFPSh\abasetype
\\[2mm]
\unitmap_\UnitType 
& \!\! : & 
\TerminalPSh \stackrightarrow\iso \NFPSh\UnitType
\\[2mm]
\pairmap_{\tau\ProductType\tau'} 
& \!\! : & 
\NFPSh\tau \CatProduct \NFPSh{\tau'} 
  \stackrightarrow\iso \NFPSh{\tau\ProductType\tau'}
\\[2mm]
\absmap_{\tau\ArrowType\tau'} 
& \!\! : & 
(\NFPSh{\tau'})^{\VarPSh\tau} 
  \stackrightarrow\iso \NFPSh{\tau\ArrowType\tau'}
\end{array}\right.
\end{equation}
corresponding to the typing rules in Figure~\ref{NN_Terms}. 

Note that every $\SSSTerm$-algebra $\VVVar$ induces a canonical 
$\pair{\SSSNE,\SSSNF}$-algebra structure on the pair $(\VVVar,\VVVar)$ and 
hence, by initiality, homomorphic interpretations 
$(\NEPSh{},\NFPSh{}) \rightarrow (\VVVar,\VVVar)$.  Applying this
observation to the initial $\SSSTerm$-algebra $\TermPSh{}$ we obtain the
embeddings $\NEPSh{} \rightmono \TermPSh{}$ and 
$\NFPSh{} \rightmono \TermPSh{}$ of neutral and normal terms into terms.

\paragraph{Structural induction.}
Initial algebras have the following associated structural induction
principle~\cite{LehmannSmyth}. 
\begin{equation} \label{Induction_Principle}
\begin{minipage}{14.5cm}{\em
Let $\alpha: FA \rightarrow A$ be an
initial algebra for an endofunctor~$F$, if the subobject 
$m: P \rightmono A$ satisfies the closure property of being a
sub~$F$-algebra of $A$, in the sense that the diagram 
$$\xymatrix{
\raisebox{0.55mm}{$FP$} \ar[d]_-{Fm} \ar@{-->}[r] & \ar@{>->}[d]^-m
\raisebox{1.1mm}{$P$} \\
FA \ar[r]_-{\alpha}^-\iso & A
}$$
commutes for a (necessarily unique) map $FP \rightarrow P$, then 
$m: P \rightmono A$ is an isomorphism. 
}\end{minipage}
\end{equation}

For the initial $\SSSTerm$-algebra~$\TermPSh{}$ (resp.~the initial
$\pair{\SSSNE,\SSSNF}$-algebra~$(\NEPSh{},\NFPSh{})$) the structural induction
principle corresponds, in elementary terms, to proving a property of terms
(resp.~of neutral and normal terms) by induction (resp.~simultaneous
induction) on their derivation.
The structural induction principle for the initial 
$\pair{\SSSNE,\SSSNF}$-algebra~$(\NEPSh{},\NFPSh{})$ features in the proof of
Theorem~\ref{THEOREM}.

\subsection*{II.1.2\quad Semantics}

As we will see below, every interpretation of base types in a cartesian closed
category induces a canonical semantic typed lambda algebra with respect to
which the unique algebra homomorphism from the initial (term) algebra is the
usual semantics of simply typed terms. 

\paragraph{Nerve functor.}
Every functor $\arity: \catc \rightarrow \Semcat$ induces the following
situation \begin{equation}\label{Relative_Hom_Diagram}
\begin{array}{c}\xymatrix@C=5pt{ 
\ar@{}[rrrrrrrd]|-{\raisebox{5mm}{$\nt\arity\mbox{$\Downarrow${\tiny Lan}}$}}
\catc\ \ar@{^(->}[rrrrrrr]^-\yon
\ar[drrrr]_-{\arity} &&&&&&&
\Set^{\catc^\op} \\
&&&& \Semcat \ar[urrr]_-{\clone\arity} &&& 
}\end{array}
\end{equation}
where $\clone\arity(A) = \Semcat(\arity(\fstarg),A)$ and where
$(\nt\arity_\Gamma)_{\Gamma'} = \arity_{\Gamma',\Gamma}: 
   \catc(\Gamma',\Gamma) \rightarrow \Semcat(\arity(\Gamma'),\arity(\Gamma))$.
We refer to $\clone\arity: \Semcat \rightarrow \Set^{\catc^\op}$ as the
\Def{$\arity$-nerve functor} and to the presheaf $\clone\arity(A)$ as the
\Def{$\arity$-nerve} of $A$.

\medskip 
Two important properties of nerve functors follow.
\begin{proposition} \label{RelativeHom_Properties}
For a functor $\arity: \catc \rightarrow \Semcat$ where $\catc$ is small, the
nerve functor $\clone\arity: \Semcat \rightarrow \Set^{\catc^\op}$ preserves
limits.
Further, for $\arity$ and $\catc$ cartesian and $\Semcat$ cartesian
closed, it also commutes with exponentiation by representables in the sense
that there is a canonical natural isomorphism  
$$\xymatrix@C=50pt{\ar@{}[dr]|(.5)\iso
\Semcat \ar[d]_-{\clone\arity}  
\ar[r]^-{\arity(\Gamma) \;\CatArrow\, (\_\!\_)} & \Semcat 
\ar[d]^-{\clone\arity}
\\
\Set^{\catc^\op} \ar[r]_-{(\_\!\_)^{\yon(\Gamma)}}
&
\Set^{\catc^\op} 
}$$
for all $\Gamma \in \obj\catc$.
\end{proposition}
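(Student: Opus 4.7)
The plan is to prove the two assertions by standard presheaf-categorical manipulations, and to use nothing beyond the pointwise formulas for limits and exponentials in $\Set^{\catc^\op}$.

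\emph{First assertion.} Since limits in $\Set^{\catc^\op}$ are computed pointwise, it suffices to show that for each $c\in\obj\catc$ the composite $\Semcat \stackrightarrow{\clone\arity} \Set^{\catc^\op} \stackrightarrow{\mathrm{ev}_c} \Set$ preserves limits. This composite sends $A$ to $\Semcat(\arity(c),A)$, which is the representable functor $\Semcat(\arity(c),\fstarg)$, and representable functors preserve all limits.

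\emph{Second assertion.} The required natural isomorphism is obtained, at each object $c\in\obj\catc$, as a chain of canonical bijections. Starting from the left, exponential adjointness in $\Semcat$ followed by the assumption that $\arity$ preserves finite products gives
$$
\Semcat\bigl(\arity(c), \arity(\Gamma) \CatArrow A\bigr)
\iso \Semcat\bigl(\arity(c)\CatProduct\arity(\Gamma), A\bigr)
\iso \Semcat\bigl(\arity(c\CatProduct\Gamma), A\bigr)
= \clone\arity(A)(c\CatProduct\Gamma).
$$
Starting from the right, because $\catc$ is cartesian the Yoneda embedding sends the product $c\CatProduct\Gamma$ to $\yon(c)\CatProduct\yon(\Gamma)$ in $\Set^{\catc^\op}$, so the standard presheaf-exponential computation via the Yoneda lemma yields
$$
\clone\arity(A)^{\yon(\Gamma)}(c)
\iso \clone\arity(A)(c\CatProduct\Gamma).
$$
Composing these bijections provides the desired isomorphism.

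The only remaining task is to verify that this bijection is natural simultaneously in $c\in\obj\catc$ and in $A\in\obj\Semcat$, so that it assembles into a natural isomorphism of functors $\Semcat \rightarrow \Set^{\catc^\op}$. Each of the constituent bijections is natural for standard reasons (adjointness, functoriality of $\arity$ on products, the Yoneda lemma), so their composite inherits both naturalities; this is a routine diagram-chase rather than a substantive step, and there is no genuine obstacle in the proof.
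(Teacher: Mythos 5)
Your proposal is correct and follows essentially the same route as the paper: the first part reduces to the fact that the representable $\Semcat(\arity(c),\fstarg)$ preserves limits (the paper states exactly this pointwise isomorphism with a limit), and the second part composes the same chain of isomorphisms $\clone\arity(A)^{\yon(\Gamma)}(c)\iso\clone\arity(A)(c\CatProduct\Gamma)\iso\Semcat(\arity(c)\CatProduct\arity(\Gamma),A)\iso\Semcat(\arity(c),\arity(\Gamma)\CatArrow A)$, the only cosmetic difference being that you justify $(P^{\yon(\Gamma)})(c)\iso P(c\CatProduct\Gamma)$ directly via the Yoneda lemma and $\yon(c\CatProduct\Gamma)\iso\yon(c)\CatProduct\yon(\Gamma)$, whereas the paper derives it from a left Kan extension square exhibiting $\fstarg\CatProduct\yon(\Gamma)\dashv\Set^{(\fstarg\CatProduct\Gamma)^\op}$. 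Both are standard and the naturality bookkeeping you defer is treated equally briefly in the paper.
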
  
\begin{proof}
The first part is well-known and follows from the canonical natural
isomorphism 
$$\begin{array}{rcl}
\Semcat(\arity(\Gamma),L)
& \iso & 
\lim_{\Delta\in\scat D}\: \Semcat(\arity(\Gamma),D(\Delta))
\\[2mm]
f & \mapsto & \pair{\, \proj_\Delta \comp f \,}_{\Delta\in\scat D}
\end{array}
\quad\qquad
(\Gamma\in\catc)
$$
available for any diagram $D: \scat D \rightarrow \Semcat$ with limit
$\pair{\, \proj_\Delta: L \rightarrow D(\Delta) \,}_{\Delta\in\scat D}$ 
in $\Semcat$. 

For the second part, note that for $\Gamma\in\obj\catc$ we have the
following situation~(generalising~(\ref{Adjoint_Situation_For_F[T]}))
$$
\begin{array}{c}\xymatrix@C=5pt{\ar@{}[rrrrrd]|(.5){\stackrel{\Lan}{\iso}}
\catc\ \ar@{^(->}[rrrrrr]^-\yon
\ar[d]_-{\fstarg\CatProduct\Gamma} &&&&&& 
\Set^{\catc^\op}
\ar[d]<-1ex>_-{\fstarg\CatProduct\yon(\Gamma)} \ar@{}[d]|-{\dashv}
\\
\catc\ \ar@{^(->}[rrrrrr]_-\yon &&&&&& 
\Set^{\catc^\op}
\ar[u]<-1ex>_-{\Set^{(\fstarg\CatProduct\Gamma)^\op}}
}\end{array}
$$
from which it follows that
$
(P^{\yon(\Gamma)})(\Delta)
\iso
P(\Delta\CatProduct\Gamma)
$ 
naturally in $\Delta\in\catc$.  
We thus obtain a canonical isomorphism
$$
\begin{array}{rcl}
(\clone\arity A)^{\yon(\Gamma)}(\Delta)
& \iso & (\clone\arity A)(\Delta\CatProduct\Gamma)
\ = \ \Semcat(\arity(\Delta\CatProduct\Gamma),A)
\\[2mm]
& \iso & \Semcat(\arity(\Delta)\CatProduct\arity(\Gamma),A)
\\[2mm]
& \iso & \Semcat(\arity(\Delta),\arity(\Gamma)\CatArrow A)
\ = \ \clone\arity(\arity(\Gamma)\CatArrow A)(\Delta)
\end{array}
$$
natural in $\Gamma,\Delta\in\catc$ and $A\in\Semcat$.
\end{proof}

\paragraph{Initial algebra semantics.}
Using the nerve functor
$\clone{\semfun\sembracket\fstarg}
 : \Semcat \rightarrow \Set^{\F{}\comma\TypeClosure\BaseTypeSet}$
induced by the cartesian extension
$\semfun\sembracket\fstarg:\F{}[\TypeClosure\BaseTypeSet]\rightarrow\Semcat$
of an interpretation \mbox{$\semfun:\BaseTypeSet\rightarrow\Semcat$} of base 
types in a cartesian closed category, the operations
$$\begin{array}{lcl}
\proj_1
& : &
\semfun\sembracket{\tau} \CatProduct \semfun\sembracket{\tau'}
\rightarrow 
\semfun\sembracket\tau
\\[2mm]
\proj_2
& : &
\semfun\sembracket{\tau'} \CatProduct \semfun\sembracket\tau 
\rightarrow 
\semfun\sembracket\tau
\\[2mm]
\eval
& : &
(\semfun\sembracket{\tau} \CatArrow \semfun\sembracket{\tau'}) 
\CatProduct
\semfun\sembracket{\tau}
\rightarrow 
\semfun\sembracket{\tau'}
\end{array}$$
in $\Semcat$ can be lifted to $\Set^{\F{}\comma\TypeClosure\BaseTypeSet}$ to
provide a \emph{semantic typed lambda algebra} structure on the family
\begin{equation} \label{Semantic_Algebra_Carrier}
\SemPSh{} 
\ = \
\setof{ 
        \ \Semcat(\semfun\sembracket\fstarg,\semfun\sembracket\tau) \
        }_{\tau \in \TypeClosure\BaseTypeSet}
\ = \
\setof{ \ \clone\semfun(\semfun\sembracket\tau) \
        }_{\tau \in \TypeClosure\BaseTypeSet}
\end{equation}
The operations are as follows:
\begin{equation} \label{Semantic_Algebra_Structure}
\begin{minipage}{.925\textwidth}
\begin{enumerate}
\item \label{Semantic_Operations_Var}
$\VarPSh\tau 
\longstackrightarrow{\mbox{\scriptsize$\semfun\sembracket\fstarg$}}
\clone\semfun(\semfun\sembracket\tau)$

\item \label{Semantic_Operations_Unit}
$\TerminalPSh \stackrightarrow\iso \clone\semfun(\semfun\sembracket\UnitType)$

\item
$\clone\semfun(\semfun\sembracket{\tau\ProductType\tau'})
 \longstackrightarrow{\mbox{\scriptsize$\clone\semfun(\proj_1)$}}
 \clone\semfun(\semfun\sembracket\tau)$

\item
$\clone\semfun(\semfun\sembracket{\tau'\ProductType\tau})
 \longstackrightarrow{\mbox{\scriptsize$\clone\semfun(\proj_2)$}}
 \clone\semfun(\semfun\sembracket\tau)$

\item \label{Semantic_Operations_Pair}
$\clone\semfun(\semfun\sembracket\tau)
 \CatProduct
 \clone\semfun(\semfun\sembracket{\tau'})
 \stackrightarrow\iso
 \clone\semfun(\semfun\sembracket{\tau\ProductType\tau'})$

\item \label{Semantic_Operations_Eval}
$
\clone\semfun(\semfun\sembracket{\tau'\ArrowType\tau})
\CatProduct
\clone\semfun(\semfun\sembracket{\tau'})
\stackrightarrow\iso
\clone\semfun((\semfun\sembracket{\tau'}\CatArrow\semfun\sembracket\tau)
              \CatProduct
              \semfun\sembracket{\tau'})
\longstackrightarrow{\mbox{\scriptsize$\clone\semfun(\eval)$}}
\clone\semfun(\semfun\sembracket{\tau})
$

\item \label{Semantic_Operations_Abs}
$(\clone\semfun(\semfun\sembracket{\tau'}))^{\VarPSh\tau}
 \stackrightarrow\iso
 \clone\semfun(\semfun\sembracket{\tau\ArrowType\tau'})$
\end{enumerate}
\end{minipage}
\end{equation}
(Note that item~\ref{Semantic_Operations_Var} relies on
diagram~(\ref{Relative_Hom_Diagram}) while
items~\ref{Semantic_Operations_Unit}, \ref{Semantic_Operations_Pair},
\ref{Semantic_Operations_Eval}, and \ref{Semantic_Operations_Abs} rely on
Proposition~\ref{RelativeHom_Properties}.  Similar applications of this
proposition will be used throughout without further reference.)

\medskip
By initiality, the semantic typed lambda algebra induces semantic homomorphic
interpretations $\Termsem{}: \TermPSh{} \rightarrow \SemPSh{}$ and
$(\NEsem{},\NFsem{}): (\NEPSh{},\NFPSh{}) \rightarrow (\SemPSh{},\SemPSh{})$.
These are related as shown below
\begin{equation}\label{Initial_Algebra_Semantics}
\begin{array}{c}\xymatrix{
\NEPSh{}\ \ar[rd]_-{\NEsem{}} \ar@{>->}[r] & \TermPSh{} \ar[d]_-{\Termsem{}} 
& \ar@{>->}[l] \ar[ld]^-{\NFsem{}} \ \NFPSh{} \\ 
& \SemPSh{} & 
}\end{array}
\end{equation}
Indeed, by the initiality of $(\NEPSh{},\NFPSh{})$,
(\ref{Initial_Algebra_Semantics}) directly follows from the fact that the
homomorphism property of $\Termsem{}: \TermPSh{} \rightarrow \SemPSh{}$
amounts to the commutativity of the diagrams in Appendix~\ref{Appendix_l} and
that the homorphism property of
$(\NEsem{},\NFsem{}): (\NEPSh{},\NFPSh{}) \rightarrow (\SemPSh{},\SemPSh{})$
amounts to the commutativity of the diagrams in Appendix~\ref{Appendix_(m,n)}.

Explicitly, for $\tau\in\TypeClosure\BaseTypeSet$, the mapping
$\Termsem\tau: \TermPSh\tau \rightarrow \SemPSh\tau$ is the standard semantic
interpretation of terms
\begin{equation}\label{StandardSemanticInterpretation}
t\in\TermPSh\tau(\Gamma)
\xymatrix{\ar@{|->}[r]^-{\Termsem\tau}&}
\semfun\sembracket{\Gamma \entails t: \tau}
  \in\Semcat(\semfun\sembracket\Gamma,\semfun\sembracket\tau)
\end{equation}
whilst $\NEsem\tau: \NEPSh\tau \rightarrow \SemPSh\tau$ and 
$\NFsem\tau: \NFPSh\tau \rightarrow \SemPSh\tau$ are, respectively, the 
semantic interpretations of neutral and normal terms.  

\section*{II.2\quad Normalisation by evaluation via categorical glueing}

We will now see how, by working with \emph{intensional Kripke relations}, the
analysis of normalisation given in Section~I.2 amounts to normalisation by
evaluation.   
As in that section, we will work with semantic models of (covariant)
presheaves in $\Set^{\F{}\comma\TypeClosure\BaseTypeSet}$ over nerves induced
by interpretations~$\semfun$ 
of the set of base types~$\BaseTypeSet$ in arbitrary cartesian closed
categories~(see~(\ref{mu_Interpretation})).  
This level of generality allows the definition of normalisation functions
$\nf\tau\semfun: \TermPSh\tau \rightarrow \NFPSh\tau$
($\atype\in\TypeClosure\BaseTypeSet$) in
$\Set^{\F{}\comma\TypeClosure\BaseTypeSet}$ over the
$\semfun\sembracket\fstarg$-nerve of $\semfun\sembracket\atype$
(Corollary~\ref{Corollary_Correctness_Two}) that are parametric on the
interpretation~$\semfun$.  
Crucially, the normalisation functions will be shown to be parametrically
polymorphic, in the sense of being interpretation
independent~(Corollary~\ref{Corollary_nfCoincidence}).
This is methodologically important.  
Firstly, as in Corollary~\ref{Corollary_BetaEtaLongNormalForms}, the
consideration of the universal interpretation of base types into the free
cartesian closed category over them leads to our solution of the
\emph{intensional normalisation problem}~(see the discussions after 
Corollaries~\ref{Corollary_Correctness_Two} and~\ref{Corollary_nfCoincidence}
in~\S~\emph{Normalisation function} below) stated in the Introduction.  
Secondly, the consideration of the trivial interpretation of base types in the
trivial cartesian closed category leads to a normalisation algorithm from
which a normalisation program is 
synthesised~(see~\S~\emph{Normalisation algorithm} below).  

\paragraph{Intensional Kripke relations.}
The category of intensional $\catc$-Kripke relations of arity 
$\arity: \catc \rightarrow \Semcat$ is defined as the glueing
of $\Set^{\catc^\op}$ and $\Semcat$ along the nerve functor 
$\clone\arity: \Semcat \rightarrow \Set^{\catc^\op}$.  
That is, as the comma category
\mbox{$\Set^{\catc^\op}\comma\clone\arity$} of objects given by triples 
$(P,p,A)$ with $P\in\obj{\Set^{\catc^\op}}$, $A\in\obj{\Semcat}$, and
$p: P \rightarrow \clone\arity(A)$ in $\Set^{\catc^\op}$, and of
morphisms $(P,p,A) \rightarrow (P',p',A')$ given  
by pairs 
$$
(~\varphi:P\rightarrow P'\mbox{ in $\Set^{\catc^\op}$} 
 ,~
 f:A\rightarrow A'\mbox{ in $\Semcat$}
 ~)
$$
such that the diagram
$$
\begin{array}{c}\xymatrix{
P \ar[d]_-p \ar[rr]^-{\varphi} && P' \ar[d]^-{p'} \\
\clone\arity(A) \ar[rr]_-{\clone\arity(f)} && \clone\arity(A')
}\end{array}
\quad
\mbox{ in $\Set^{\catc^\op}$}
$$
commutes. 

\begin{example} \label{Trivial_Glueing}
The category of intensional $\catc$-Kripke relations of arity the unique
functor to the terminal category is (isomorphic to) the presheaf topos
$\Set^{\catc^\op}$.  
\end{example}

As it is well-known (see,~\eg,~\cite{LambekScott,Crole,Taylor}), for $\Semcat$
cartesian closed, the glueing category $\Set^{\catc^\op}\comma\clone\arity$ is
also cartesian closed. 
Indeed, the cartesian closed structure of
$\Set^{\catc^\op}\comma\clone\arity$ is given as follows.  
\begin{description}
\item[](Products)
  The terminal object is 
  $(\TerminalPSh,t,\TerminalObj)$ where $t$ is the unique map 
  $\TerminalPSh\stackrightarrow\iso\clone\arity(\TerminalObj)$.

  The binary 
  product $(P,p,A)\CatProduct(Q,q,B)$
  of $(P,p,A)$ and $(Q,q,B)$
  is $(P\CatProduct Q,r,A\CatProduct B)$ where $r$ is the composite
  $
  P\CatProduct Q
    \longstackrightarrow{p\CatProduct q}
      \clone\arity(A)\CatProduct\clone\arity(B)
        \stackrightarrow\iso
          \clone\arity(A\CatProduct B)$.

\item[](Exponentials)
  The exponential \mbox{$(P,p,A)\CatArrow(Q,q,B)$} of $(P,p,A)$ and
  $(Q,q,B)$ is $(R,r,A\CatArrow B)$ in the pullback diagram
  \begin{equation}\label{Exponential_By_Pullback}
  \begin{array}{c}\xymatrix{\ar@{}[rrrrd]|-{\mathrm{pb}}
  R \ar[rrrr] \ar[d]_-{r} && && \ar[d]^-{q^P} Q^{P}
  \\
  \clone\arity(A\CatArrow B) \ar[rr] &&
  (\clone\arity B)^{(\clone\arity A)}
  \ar[rr]_-{(\clone\arity B)^p} && 
  (\clone\arity B)^{P}
  }\end{array}
  \end{equation}
  where the map 
  $\clone\arity(A\CatArrow B) \rightarrow (\clone\arity B)^{(\clone\arity A)}$ 
  is the exponential transpose of the composite
  $$
  \clone\arity(A\CatArrow B)\CatProduct\clone\arity(A)
    \stackrightarrow\iso
  \clone\arity((A\CatArrow B)\CatProduct A)
    \longstackrightarrow{\clone\arity(\eval)}
  \clone\arity(B)
  \quad.
  $$

  Explicitly, one may take $R(c)$ to be
  \begin{equation}\label{PShExponentialNaturality}
  \!\!\!\!\!\!\!\!\!\!\!\!\!\!\!\!\!\!
    \left\{\begin{array}{l|l}
      ( \, f: \arity(c)\to A\CatArrow B 
          \, , \, 
          \varphi : \yon(c)\times P\to Q
        \, )
      &
      \begin{array}{l}
        \forall\, \rho: c'\to c.\ \forall\, a\in P(c').\
        \\[.5mm]
        \quad
        q_{c'}(\varphi_{c'}(\rho,a))
        =
        \eval \comp \pair{ f\comp \arity(\rho) , p_{c'}(a) }
      \end{array}
      \end{array}\right\}
  \end{equation}
  with $r$ projecting pairs onto their first component.
\end{description}
\begin{proposition}\label{Gluecat_CCC}
Let $\catc$ be a small category and let $\Semcat$ be a cartesian closed
category.  For a functor \mbox{$\arity: \catc \rightarrow \Semcat$}, the 
glueing category \mbox{$\Set^{\catc^\op}\comma\clone\arity$} is cartesian
closed and the forgetful functor
$\Sproj
 : {\Set^{\catc^\op}\comma\clone\arity} \rightarrow \Semcat
 : (P,p,A)\mapsto A$
preserves the cartesian closed structure strictly.
\end{proposition}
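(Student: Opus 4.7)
The plan is to verify the universal properties of the explicitly described terminal object, binary product, and exponential, and then to note that strict preservation by $\Sproj$ holds by inspection, since the $\Semcat$-component of each construction coincides on the nose with the corresponding operation in $\Semcat$.

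For the terminal object $(\TerminalPSh,t,\TerminalObj)$ and the binary product $(P\CatProduct Q,r,A\CatProduct B)$, I would check that the evident projections (and, respectively, the evident componentwise pairings of morphisms from $(S,s,C)$) are glueing morphisms. The only non-trivial point is commutativity of the associated coherence squares with $\clone\arity$, which follows because $\clone\arity$ preserves limits by the first part of Proposition~\ref{RelativeHom_Properties}. Uniqueness then reduces to uniqueness of terminal maps and pairings in $\Set^{\catc^\op}$ and in $\Semcat$ separately.

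The substantial step is the exponential $(R,r,A\CatArrow B)$ defined by the pullback~(\ref{Exponential_By_Pullback}); the canonical map $\clone\arity(A\CatArrow B)\to(\clone\arity B)^{\clone\arity A}$ appearing there is formed by transposing the $\clone\arity$-image of $\eval: (A\CatArrow B)\CatProduct A\to B$, which is legitimate because $\clone\arity$ preserves binary products (Proposition~\ref{RelativeHom_Properties}). I would first exhibit an evaluation glueing morphism $(R,r,A\CatArrow B)\CatProduct(P,p,A)\to(Q,q,B)$: its $\Semcat$-component is the ordinary $\eval$, and its presheaf component is obtained by pairing with $P$ the projection $R\to Q^P$ coming from the pullback and then applying the ordinary evaluation $Q^P\CatProduct P\to Q$ in $\Set^{\catc^\op}$. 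The required glueing coherence is precisely the commutativity of the right-hand side of the defining pullback, read through the product–exponential transpose.

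For the universal property, given any glueing morphism $(\varphi,g): (S,s,C)\CatProduct(P,p,A)\to(Q,q,B)$, transpose $g$ to $\widehat{g}: C\to A\CatArrow B$ and $\varphi$ to $\widehat{\varphi}: S\to Q^P$; the glueing hypothesis on $(\varphi,g)$, once transposed, says exactly that $\clone\arity(\widehat{g})\circ s$ and $(Q^P)^p\circ\widehat{\varphi}$ agree as maps $S\to(\clone\arity B)^P$, so by the universal property of the pullback there is a unique mediator $S\to R$, yielding the required unique transpose $(S,s,C)\to(R,r,A\CatArrow B)$. Strict preservation by $\Sproj$ is then immediate from the constructions. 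The main technical obstacle is keeping track of the various transposes and coherence squares around the pullback; this is entirely controlled by Proposition~\ref{RelativeHom_Properties}, and a more concrete alternative is to work directly with the explicit description~(\ref{PShExponentialNaturality}) of $R(c)$, verifying the transpose and its uniqueness element-wise.
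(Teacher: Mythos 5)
Your proposal is correct and is exactly the standard verification that the structure displayed just before the proposition (terminal object, product via $\clone\arity$'s limit preservation, exponential via the pullback~(\ref{Exponential_By_Pullback})) has the required universal properties; the paper itself omits the argument, citing it as well-known, so there is nothing to diverge from. The one step worth having written out carefully — that the glueing coherence for a morphism $(S,s,C)\CatProduct(P,p,A)\to(Q,q,B)$ transposes precisely to the cone condition for the defining pullback of $(R,r,A\CatArrow B)$ — is handled correctly in your argument, and strict preservation by $\Sproj$ is indeed immediate by inspection.
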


\begin{remark}
The category of $\catc$-Kripke relations $\K\arity$ is a full subcategory of
the glueing category $\Set^{\catc^\op}\comma\clone{\arity}$ via the mapping
$(R,A) \mapsto (R,R\rightembedding\clone\arity(A),A)$.  
On the other hand, every glued object $(P,f,A)$ has an associated Kripke
relation given by the extension of the map~$f$ (as shown in the diagram below,
where $\im(f)$ denotes the image of $f$)
$$\xymatrix@C=5pt@R=10pt{
P \ar[dd]_-f \ar@{->>}[rd] \\
& \im(f) \ar@{^(->}[dl] \\
\clone\arity(A) 
}$$
and the mapping $\im: (P,f,A) \mapsto (\im(f),A)$ exhibits $\K\arity$ as a
reflective subcategory of $\Set^{\catc^\op}\comma\clone\arity$.  
For $\Semcat$ cartesian closed, as can be readily seen from the explicit
descriptions of finite products in $\K\arity$ and
$\Set^{\catc^\op}\comma\clone\arity$, the reflection 
$\im: \K\arity \to \Set^{\catc^\op}\comma\clone\arity$ preserves the cartesian
structure and, therefore, $\K\arity$ is an exponential ideal of
$\Set^{\catc^\op}\comma\clone\arity$ (as can also be readily seen from the
descriptions of exponentials in $\K\arity$ and
$\Set^{\catc^\op}\comma\clone\arity$).  Thus, for $(P,p,A)$ and $(Q,q,A)$ in 
$\Set^{\catc^\op}\comma\clone\arity$, there are inclusions
\[
  \im( \ (P,p,A)\CatArrow\!(Q,q,B) \ )(c)
  \ \subseteq \
  ( \ \im(P,p,A) \!\ArrowLogRel \im(Q,q,B) \ )(c)
  \qquad(\,c\in\obj{\catc}\,)
\]
where 
  \[\begin{array}{l}
  \im( \ (P,p,A)\CatArrow\!(Q,q,B) \ )(c)
  \\[2.5mm] \quad
    =
    \left\{\begin{array}{l|l}
      f: \arity(c)\to A\CatArrow B 
      &
      \begin{array}{l}
        \exists\, \varphi : \yon(c)\times P\to Q.\
        \forall\, \rho: c'\to c.\ \forall\, a\in P(c').\
        \\[1mm]
        \quad
        q_{c'}(\varphi_{c'}(\rho,a))
        =
        \eval \comp \pair{ f\comp \arity(\rho) , p_{c'}(a) }
      \end{array}
      \end{array}\right\}
  \end{array}\]
and
\[\begin{array}{l}
  ( \ \im(P,p,A) \!\ArrowLogRel \im(Q,q,B) \ ) (c) 
  \\[2.5mm] \quad
    =
    \left\{\begin{array}{l|l}
      f: \arity(c)\to A\CatArrow B 
      &
      \begin{array}{l}
        \forall\, \rho: c'\to c.\ \forall\, a\in P(c').\
        \exists\, b\in Q(c').\
        \\[1mm]
        \quad
        q_{c'}(b)
        =
        \eval \comp \pair{ f\comp \arity(\rho) , p_{c'}(a) }
      \end{array}
      \end{array}\right\}\ .
\end{array}\]
These inclusions may be strict; as it happens, for instance, when
$\catc^\op=\F{}$~(the category of untyped contexts and renamings), $\arity$ is
the unique functor to the trivial cartesian closed category, $Q=\yon(1)$ (for
a singleton context $1$), $P=\im(Q\to\clone\arity(\TerminalObj))$, and
$c=0$~(the empty context).  Indeed, in this situation,
$(P\CatArrow\!\!Q)(c)\iso\Set^{\F{}}(P,Q)=\emptyset$ whilst
$(\im(p)\ArrowLogRel\im(q))(c)=(P\ArrowLogRel P)(c)=\setof{\id}$.  
Thus, in general, the reflection 
$\im: \K\arity \to \Set^{\catc^\op}\comma\clone\arity$ does not preserve
exponentials.
\end{remark}

Now, note that~(\ref{Relative_Hom_Diagram}) induces the embedding 
$$\begin{array}{rcl}
\catc & \longstackrightembedding\commayon & \Set^{\catc^\op}\comma\clone\arity
\\[1mm]
\Gamma 
& \mapsto & 
( \ \yon(\Gamma)\ , 
  \ \yon(\Gamma) 
      \longstackrightarrow{\nt\arity_\Gamma} 
        \clone\arity(\arity \Gamma)\ ,
  \ \arity(\Gamma)\ )
\end{array}$$
extending both 
the Yoneda embedding $\yon:\catc\rightembedding\Set^{\catc^\op}$ and the
functor $\arity:\catc\rightarrow\Semcat$ 
$$\xymatrix@R=0pt{
& \catc \ar@{^(->}[ldddd]_-\yon \ar@{^(->}[dddd]^-\commayon \ar[rdddd]^-\arity 
\\ \\ \\ \\
\Set^{\catc^\op} & \ar[l] \Set^{\catc^\op}\comma\clone\arity \ar[r] & \Semcat
\\
P & \ar@{|->}[l]_-\Pproj (P,p,A) \ar@{|->}[r]^-\Sproj &  A
}$$
and satisfying the following extended form of the Yoneda Lemma (which we will
use in \S~\emph{Normalisation function} below).
\begin{lemma}\Rem{(Extended Yoneda Lemma)}
For a functor $\arity: \catc \rightarrow \Semcat$ where $\catc$ is a
small category, the natural transformation
$$
\Hom{\commayon(\fstarg),(P,p,A)} \rightarrow P(\fstarg):\
(\varphi,f) \mapsto \varphi(\id)
\quad,
$$
where $\Hom{\fstarg,\sndarg}$ denotes the hom-functor of the glueing
category $\Set^{\catc^\op}\comma\clone\arity$, is an isomorphism making
the following diagram  
$$\xymatrix@R=0pt@C=10pt{
\Hom{\commayon(\fstarg),(P,p,A)} \ar[rr]^-\iso \ar[ddddr]_-\Sproj &&
P(\simplefstarg) \ar[ddddl]^-p
\\ \\ \\ \\
& \Semcat(\arity(\fstarg),A) 
}$$
commute.
\end{lemma}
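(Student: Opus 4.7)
The plan is to reduce to the ordinary Yoneda Lemma by showing that the compatibility condition built into the glueing category completely determines the semantic component of a morphism out of $\commayon(\Gamma)$.

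First, I would unfold the definition. A morphism $\commayon(\Gamma)\to(P,p,A)$ in $\Set^{\catc^\op}\comma\clone\arity$ is a pair $(\varphi,f)$ consisting of $\varphi: \yon(\Gamma)\to P$ in $\Set^{\catc^\op}$ and $f: \arity(\Gamma)\to A$ in $\Semcat$ such that $p\comp\varphi = \clone\arity(f)\comp\nt\arity_\Gamma$ as natural transformations $\yon(\Gamma)\to\clone\arity(A)$. By the ordinary Yoneda Lemma, $\varphi$ is uniquely determined by the element $\varphi_\Gamma(\id_\Gamma)\in P(\Gamma)$, with $\varphi_c(\rho) = P(\rho)(\varphi_\Gamma(\id_\Gamma))$ for every $\rho: c\to \Gamma$.

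Next, I would observe that evaluating the compatibility square at $\Gamma$ on the identity yields
\[
p_\Gamma(\varphi_\Gamma(\id_\Gamma))
\;=\;
\clone\arity(f)_\Gamma(\nt\arity_{\Gamma,\Gamma}(\id_\Gamma))
\;=\;
f\comp\arity(\id_\Gamma)
\;=\; f\,,
\]
so $f$ is forced to be $p_\Gamma(\varphi_\Gamma(\id_\Gamma))$. This already shows injectivity of the evaluation-at-identity map and simultaneously shows the triangle with $\Sproj$ and $p$ commutes.

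For surjectivity, given any $x\in P(\Gamma)$, define $\varphi: \yon(\Gamma)\to P$ by Yoneda so that $\varphi_\Gamma(\id_\Gamma)=x$, and set $f := p_\Gamma(x)$. I would verify that the compatibility condition then holds automatically: for any $\rho: c\to\Gamma$, naturality of $p: P\to\clone\arity(A)$ gives
\[
p_c(\varphi_c(\rho))
\;=\;
p_c(P(\rho)(x))
\;=\;
\clone\arity(A)(\rho)(p_\Gamma(x))
\;=\;
f\comp\arity(\rho)
\;=\;
\clone\arity(f)_c(\nt\arity_{\Gamma,c}(\rho))\,.
\]
Hence $(\varphi,f)$ is a well-defined morphism in the glueing category whose image is $x$. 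Naturality of the whole assignment in $(P,p,A)$ is immediate from naturality in the ordinary Yoneda Lemma. I do not expect a genuine obstacle here; the only subtlety is recognising that the compatibility condition, which at first looks like an extra constraint, is in fact a free consequence of naturality once the $\id_\Gamma$-value is specified.
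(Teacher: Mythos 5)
Your proposal is correct and follows essentially the same route as the paper, which reduces everything to the observation that the glueing compatibility square for $(\varphi,f)$ commutes if and only if $f = p_\Gamma(\varphi_\Gamma(\id_\Gamma))$; you simply spell out both directions of this equivalence (evaluation at $\id_\Gamma$ for one, naturality of $p$ for the other) together with the ordinary Yoneda Lemma.
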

\begin{proof}
Follows from the fact that, for $\varphi: \yon(\Gamma) \rightarrow P$ in
$\Set^{\catc^\op}$ and $f: \arity(\Gamma) \rightarrow A$ in $\Semcat$,
the diagram
$$\xymatrix{
\yon(\Gamma) \ar[d]_{\nt\arity_\Gamma} \ar[r]^-\varphi &  P \ar[d]^-p
\\
\Semcat(\arity(\fstarg),\arity(\Gamma)) \ar[r]_-{f \comp \simplefstarg} &
\Semcat(\arity(\fstarg),A)
}$$
commutes if and only if $f = p_\Gamma(\varphi_\Gamma(\id_\Gamma))$.
\end{proof}

\begin{proposition} \label{Glued_Exponential_to_Representable}
For a functor $\arity: \catc \rightarrow \Semcat$ where $\catc$ is
small, $\catc$ and $\arity$ are cartesian, and $\Semcat$ is cartesian
closed, we have that 
$\commayon: \catc \rightembedding \Set^{\catc^\op}\comma\clone\arity$
preserves products and that the exponential
$(P,p,A)^{\commayon(\Gamma)}$ in $\Set^{\catc^\op}\comma\clone\arity$
can be described as $(P^{\yon(\Gamma)},p',\arity(\Gamma)\ArrowType A)$ where
$p'$ is the composite
$$
P^{\yon(\Gamma)} 
  \longstackrightarrow{p^{\yon(\Gamma)}}
    (\clone\arity A)^{\yon(\Gamma)}
      \stackrightarrow\iso 
        \clone\arity(\arity(\Gamma)\CatArrow A)
\quad.
$$
\end{proposition}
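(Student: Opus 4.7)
The plan is to exploit the explicit formulas for products and exponentials in the glueing category $\Set^{\catc^\op}\comma\clone\arity$ already given in this section, together with the limit-preservation properties of the Yoneda embedding $\yon$, of $\arity$, and of the nerve functor $\clone\arity$ established in Proposition~\ref{RelativeHom_Properties}.

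For the product-preservation claim, recall that $\commayon(\Gamma) = (\yon(\Gamma),\nt\arity_\Gamma,\arity(\Gamma))$ and that binary products in the glueing category are computed componentwise.  Since $\yon$ preserves all limits, $\arity$ preserves products by hypothesis, and $\clone\arity$ preserves products by Proposition~\ref{RelativeHom_Properties}, the first and third components of $\commayon(\Gamma\CatProduct\Gamma')$ match those of $\commayon(\Gamma)\CatProduct\commayon(\Gamma')$.  The one small verification needed is that the middle component $\nt\arity_{\Gamma\CatProduct\Gamma'}$ agrees, modulo these product-preservation isomorphisms, with the componentwise-induced map; this is immediate from the naturality of $\nt\arity$ together with the coherence of the product-preservation iso for $\clone\arity$.

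For the exponential, instantiate the pullback diagram~(\ref{Exponential_By_Pullback}) with source $(P,p,A) := \commayon(\Gamma)$ and target $(Q,q,B) := (P,p,A)$.  The defining pullback square becomes
$$\xymatrix@C=25pt{\ar@{}[rrrrd]|-{\mathrm{pb}}
R \ar[rrrr] \ar[d]_-{r} && && \ar[d]^-{p^{\yon(\Gamma)}} P^{\yon(\Gamma)}
\\
\clone\arity(\arity(\Gamma)\CatArrow A) \ar[rr] &&
(\clone\arity A)^{\clone\arity(\arity(\Gamma))} \ar[rr]_-{(\clone\arity A)^{\nt\arity_\Gamma}} &&
(\clone\arity A)^{\yon(\Gamma)}
}$$
The crucial claim is that the composite along the bottom row is exactly the canonical natural isomorphism $\clone\arity(\arity(\Gamma)\CatArrow A)\stackrightarrow\iso(\clone\arity A)^{\yon(\Gamma)}$ supplied by Proposition~\ref{RelativeHom_Properties}.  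Granted this, the pullback degenerates: $R$ may be taken to be $P^{\yon(\Gamma)}$, and $r$ becomes $p^{\yon(\Gamma)}$ postcomposed with the inverse of the canonical iso, which is precisely the map $p'$ described in the statement.

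The main obstacle is therefore the coherence check that the lower composite coincides with the canonical iso, and this is where the hypotheses that $\catc$ and $\arity$ are cartesian genuinely enter.  A short diagram chase at a generic stage $\Delta$ suffices: an element $g:\arity(\Delta)\to\arity(\Gamma)\CatArrow A$ is sent by the exponential transpose to the natural family $(\rho,f)\mapsto\eval\comp\pair{g\comp\arity(\rho),f}$; precomposing with $\nt\arity_\Gamma$ replaces the second argument $f$ by $\arity(h)$ for $h:\Delta'\to\Gamma$; and the resulting expression matches term-for-term the explicit unfolding of the canonical iso (which successively uses the Yoneda-type iso $P^{\yon(\Gamma)}(\Delta)\iso P(\Delta\CatProduct\Gamma)$, preservation of products by $\arity$, and cartesian closure of $\Semcat$).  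With this coherence in place, both parts of the proposition follow.
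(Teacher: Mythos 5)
Your proposal is correct and follows essentially the same route as the paper: for products, a componentwise check that the middle component of $\commayon(\Gamma\CatProduct\Delta)$ agrees with the induced one; for exponentials, instantiating the pullback~(\ref{Exponential_By_Pullback}) at $\commayon(\Gamma)$ and observing that the bottom leg is the canonical isomorphism of Proposition~\ref{RelativeHom_Properties}, so the pullback degenerates to $P^{\yon(\Gamma)}$ with structure map $p'$. The only difference is cosmetic: the paper verifies the key coherence (that the bottom composite, precomposed with the canonical iso, is the identity) via an explicit commutative diagram built around the counit $e_P$ of the adjunction $\fstarg\CatProduct\yon(\Gamma)\dashv\Set^{(\fstarg\CatProduct\Gamma)^\op}$, whereas you do the equivalent pointwise element chase at a generic stage $\Delta$.
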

\begin{proof}
The first part follows from the commutativity of
$$\xymatrix{
\yon(\Gamma\CatProduct\Delta) \ar[r]^-\iso
\ar[dd]_-{\nt\arity_{\Gamma\CatProduct\Delta}} & 
\yon(\Gamma) \CatProduct \yon(\Delta)
\ar[d]^-{\nt\arity_\Gamma\CatProduct\nt\arity_\Delta}
\\
& 
\clone\arity(\arity\Gamma)\CatProduct\clone\arity(\arity\Delta)
\ar[d]^-\iso
\\
\clone\arity(\arity(\Gamma\CatProduct\Delta)) \ar[r]_-\iso
& 
\clone\arity(\arity(\Gamma)\CatProduct\arity(\Delta)) 
}$$
for all $\Gamma,\Delta\in\obj\catc$. 

For the second part, since the exponential $(P,p,A)^{\commayon(\Gamma)}$
is given by pulling back 
the map
$p^{\yon(\Gamma)}: 
   P^{\yon(\Gamma)} \rightarrow (\clone\arity A)^{\yon(\Gamma)}$
along the composite
$$
\clone\arity(\arity\Gamma \CatArrow A)
  \stackrightarrow f
(\clone\arity A)^{\clone\arity(\arity\Gamma)}
  \longlongstackrightarrow{(\clone\arity A)^{\nt\arity_\Gamma}}
(\clone\arity A)^{\yon(\Gamma)}
$$
(recall~(\ref{Exponential_By_Pullback})), where $f$ is the exponential
transpose of  
$$\clone\arity(\arity\Gamma \CatArrow A) \CatProduct \clone\arity(\arity\Gamma)
  \stackrightarrow\iso
 \clone\arity((\arity\Gamma \CatArrow A) \CatProduct \arity\Gamma)
  \longstackrightarrow{\clone\arity(\eval)}
 \clone\arity(A)\quad,$$
it will be enough to show that the composite
$$
(\clone\arity A)^{\yon(\Gamma)}
  \stackrightarrow\iso
\clone\arity(\arity\Gamma \CatArrow A)
  \stackrightarrow f
(\clone\arity A)^{\clone\arity(\arity\Gamma)}
  \longlongstackrightarrow{(\clone\arity A)^{\nt\arity_\Gamma}}
(\clone\arity A)^{\yon(\Gamma)}
$$
is the identity.  This is indeed the case as follows from the
commutativity of the diagram below
{\scriptsize$$
\!\!\!\!\!\!\!\!\!\!\!\!\!\!\!\!\!\!\!\!
\xymatrix{
& 
\clone\arity(\arity\Gamma \CatArrow A) \CatProduct \yon(\Gamma)
\ar[rr]^-{\mbox{$f \CatProduct \id$}} 
\ar[dr]^-{\mbox{$\id\CatProduct\nt\arity_\Gamma$}}
\ar[d]^-{\mbox{$\iso\CatProduct\id$}}
& & 
(\clone\arity A)^{\clone\arity(\arity\Gamma)} \CatProduct \yon(\Gamma)
\ar[rd]^-{\mbox{$\id^{\nt\arity_\Gamma}\CatProduct\id$}} 
\ar[d]^-{\mbox{$\id\CatProduct\nt\arity_\Gamma$}}
&
\\
(\clone\arity A)^{\yon(\Gamma)} \CatProduct \yon(\Gamma) 
\ar[ru]^-{\mbox{$\iso\CatProduct\id$}} 
\ar[dddrr]_-{\mbox{$\eval$}}
\ar[r]^-{\mbox{$\iso\CatProduct\id$}}
& 
(\clone\arity A)(\fstarg\CatProduct\Gamma) \CatProduct \yon(\Gamma)
\ar[r]^-{\mbox{$\iso\CatProduct\nt\arity_\Gamma$}}
\ar[dddr]_-{\mbox{$e$}}
& 
\clone\arity(\arity\Gamma \CatArrow A) \CatProduct \clone\arity(\arity\Gamma)
\ar[r]^-{\mbox{$f\CatProduct \id$}}
\ar[d]^-{\mbox{$\iso$}}
& 
(\clone\arity A)^{\clone\arity(\arity \Gamma)} 
  \CatProduct \clone\arity(\arity\Gamma)
\ar[dddl]^-{\mbox{$\eval$}}
& 
(\clone\arity A)^{\yon(\Gamma)} \CatProduct \yon(\Gamma)
\ar[dddll]^-{\mbox{$\eval$}}
\\
& & 
\clone\arity((\arity\Gamma\CatArrow A)\CatProduct \arity\Gamma)
\ar[dd]|-{\mbox{$\clone\arity(\eval)$}}
& & 
\\
\\
& & \clone\arity(A) & &
}$$}%
where 
\begin{equation} \label{RepresentableExponentialEvaluationMap}
e_P 
\, : \, 
P(\fstarg\CatProduct\Gamma)\CatProduct\yon(\Gamma) \rightarrow P(\fstarg)
\, : \, 
(x,\rho) \mapsto (P\pair{\id,\rho})(x)
\end{equation}
denotes the counit of the adjunction 
$\fstarg \CatProduct \yon(\Gamma) 
   \dashv \Set^{(\fstarg\CatProduct\Gamma)^\op}: 
     \Set^{\catc^\op} \rightarrow \Set^{\catc^\op}$.
\end{proof}

\paragraph{Glueing syntax and semantics.}
Let \mbox{$\semfun:\BaseTypeSet\rightarrow\Semcat$} be an interpretation of
base types in a cartesian closed category.  The embedding
\mbox{$\commayon: \F{}[\TypeClosure\BaseTypeSet] \rightembedding \Gluecat$}
restricted to types $\atype\in\TypeClosure\BaseTypeSet$ yields the glued 
object
$$
\Varobj\atype 
\ = \
\commayon\ctxtemb\atype
\ = \
(\ \VarPSh\atype\ ,\ 
   \VarPSh\atype
     \longstackrightarrow{\mbox{\scriptsize$\semfun\sembracket{\fstarg}$}}  
       \SemPSh\atype\ ,\
   \semfun\sembracket\atype\ )
\qquad\mbox{in $\Gluecat$}
$$
glueing the syntax and semantics of variables.  In the same
spirit, glueing the syntax and semantics of neutral and normal
terms~(see~(\ref{Initial_Algebra_Semantics})) we obtain the glued objects   
$$
\begin{array}{rcl}
\NEobj\tau & = & 
(\ \NEPSh\tau\ ,\ 
   \NEPSh\tau\longstackrightarrow{\NEsem\tau}\SemPSh\tau\ ,\
   \semfun\sembracket\tau\ )
\\[2mm]
\NFobj\tau & = & 
(\ \NFPSh\tau\ ,\ 
   \NFPSh\tau\longstackrightarrow{\NFsem\tau}\SemPSh\tau\ ,\
   \semfun\sembracket\tau\ )
\end{array}
$$
in $\Gluecat$.

Having constructed the $\pair{\SSSNE,\SSSNF}$-algebra structure on
$(\SemPSh{},\SemPSh{})$ by lifting the semantic operations in
$\Semcat$~(recall~(\ref{Semantic_Algebra_Carrier})
and~(\ref{Semantic_Algebra_Structure})), the homomorphism property of the
semantic interpretation 
$(\NEsem{},\NFsem{}):(\NEPSh{},\NFPSh{}) \rightarrow (\SemPSh{},\SemPSh{})$
(see Appendix~\ref{Appendix_(m,n)}) 
entails the two propositions below,
which show how the algebraic operations on the initial
$\pair{\SSSNE,\SSSNF}$-algebra $(\NEPSh{},\NFPSh{})$ and on the semantic
$\pair{\SSSNE,\SSSNF}$-algebra $(\SemPSh{},\SemPSh{})$ can be glued to
yield operations in $\Gluecat$ on the pair of families of glued objects 
$( \setof{ \NEobj\tau }_{\tau\in\TypeClosure\BaseTypeSet}\ ,\ 
   \setof{ \NFobj\tau }_{\tau\in\TypeClosure\BaseTypeSet} )$.
\begin{proposition} \label{mu_Operations}
Let $\semfun: \BaseTypeSet \rightarrow \Semcat$ be an interpretation of base
types in a cartesian closed category.
\begin{enumerate}
\item \label{mu_Operations_One}
For $\tau, \tau' \in \TypeClosure\BaseTypeSet$, the pair of maps
$$
( \
\varmap_\tau: \VarPSh\tau \rightarrow \NEPSh\tau
\ , \
\id_{\mbox{\scriptsize$\semfun\sembracket\tau$}}
\ )
$$
constitute a map $\Varobj\tau \rightarrow \NEobj\tau$ in $\Gluecat$.

\item \label{mu_Operations_Two}
For $\tau, \tau' \in \TypeClosure\BaseTypeSet$, the pair of maps
$$
( \
\fstmap^{(\tau')}_\tau: \NEPSh{\tau\ProductType\tau'} \rightarrow \NEPSh\tau
\ , \
\proj_1: 
  \semfun\sembracket{\tau}\CatProduct\semfun\sembracket{\tau'}
  \rightarrow
  \semfun\sembracket{\tau}
\ )
$$
constitute a map $\NEobj{\tau\ProductType\tau'} \rightarrow \NEobj\tau$ in
$\Gluecat$.

\item \label{mu_Operations_Three}
For $\tau, \tau' \in \TypeClosure\BaseTypeSet$, the pair of maps
$$
( \ 
\sndmap^{(\tau')}_\tau: \NEPSh{\tau'\ProductType\tau} \rightarrow \NEPSh\tau
\ , \
\proj_2: 
  \semfun\sembracket{\tau'}\CatProduct\semfun\sembracket{\tau}
  \rightarrow
  \semfun\sembracket{\tau}
\ )
$$
constitute a map $\NEobj{\tau'\ProductType\tau} \rightarrow \NEobj\tau$ in
$\Gluecat$.

\item \label{mu_Operations_Four}
For $\tau, \tau' \in \TypeClosure\BaseTypeSet$, the pair of maps
$$
( \ 
\appmap^{(\tau')}_\tau: 
  \NEPSh{\tau'\ArrowType\tau}\CatProduct\NFPSh{\tau'}
  \rightarrow
  \NEPSh{\tau}
\ , \
\eval: (\semfun\sembracket{\tau'}\CatArrow\semfun\sembracket\tau)
       \CatProduct
       \semfun\sembracket{\tau'}
       \rightarrow
       \semfun\sembracket\tau
\ ) 
$$
constitute a map 
$\NEobj{\tau'\ArrowType\tau} \CatProduct \NFobj{\tau'} 
 \rightarrow 
 \NEobj\tau$ 
in $\Gluecat$.
\end{enumerate}
\end{proposition}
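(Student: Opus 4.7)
The plan is to reduce each of the four items to the homomorphism property of the semantic interpretations $(\NEsem{},\NFsem{}):(\NEPSh{},\NFPSh{})\rightarrow(\SemPSh{},\SemPSh{})$, i.e.~to the commuting diagrams in Appendix~\ref{Appendix_(m,n)}. Recall that a pair $(\varphi,f)$ constitutes a morphism $(P,p,A)\rightarrow(P',p',A')$ in $\Gluecat$ exactly when $\clone{\semfun\sembracket\fstarg}(f)\comp p = p'\comp\varphi$ in $\Set^{\F{}\comma\TypeClosure\BaseTypeSet}$. So in each case the verification reduces to checking one such naturality square, whose vertical legs are $\semfun\sembracket\fstarg$, $\NEsem{\cdot}$, or $\NFsem{\cdot}$, whose top leg is the syntactic operation from~(\ref{NEandNFPShsTwo}), and whose bottom leg is the lifting through the nerve of the corresponding semantic operation in~(\ref{Semantic_Algebra_Structure}).

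For item~\ref{mu_Operations_One}, the required identity is $\NEsem\tau \comp \varmap_\tau = \semfun\sembracket\fstarg$, which is precisely the variable clause of the algebra homomorphism property, matching the semantic operation~(\ref{Semantic_Algebra_Structure}.\ref{Semantic_Operations_Var}). For items~\ref{mu_Operations_Two} and~\ref{mu_Operations_Three}, one observes that the nerve functor $\clone{\semfun\sembracket\fstarg}$ sends the morphism $\proj_i$ in $\Semcat$ to post-composition with $\proj_i$, so that the squares $\NEsem\tau\comp\fstmap^{(\tau')}_\tau = \clone{\semfun\sembracket\fstarg}(\proj_1)\comp\NEsem{\tau\ProductType\tau'}$ and $\NEsem\tau\comp\sndmap^{(\tau')}_\tau = \clone{\semfun\sembracket\fstarg}(\proj_2)\comp\NEsem{\tau'\ProductType\tau}$ are exactly the projection clauses of the homomorphism property.

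For item~\ref{mu_Operations_Four}, the square to check is $\NEsem\tau \comp \appmap^{(\tau')}_\tau = \clone{\semfun\sembracket\fstarg}(\eval)\comp(\NEsem{\tau'\ArrowType\tau}\CatProduct\NFsem{\tau'})$, after identifying $\clone{\semfun\sembracket\fstarg}(\semfun\sembracket{\tau'\ArrowType\tau})\CatProduct\clone{\semfun\sembracket\fstarg}(\semfun\sembracket{\tau'})$ with $\clone{\semfun\sembracket\fstarg}(\semfun\sembracket{\tau'\ArrowType\tau}\CatProduct\semfun\sembracket{\tau'})$ via the canonical isomorphism of Proposition~\ref{RelativeHom_Properties}. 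Under this identification the bottom leg coincides with the composite semantic operation~(\ref{Semantic_Algebra_Structure}.\ref{Semantic_Operations_Eval}), so the square is again a clause of Appendix~\ref{Appendix_(m,n)}.

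There is no genuine obstacle here: once the homomorphism diagrams of Appendix~\ref{Appendix_(m,n)} are in place, each item is obtained by reading off the appropriate clause. The only point demanding care is item~\ref{mu_Operations_Four}, where one must keep straight the distinction between $\eval$ in $\Semcat$, its nerve $\clone{\semfun\sembracket\fstarg}(\eval)$, and the composite operation on $\SemPSh{}$ that incorporates the product-preservation isomorphism of the nerve functor; but this is bookkeeping, not mathematics.
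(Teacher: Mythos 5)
Your proposal is correct and is essentially the paper's own proof: the paper simply cites the four homomorphism diagrams of Appendix~\ref{Appendix_(m,n)} (namely~(\ref{Proof_Of_mu_Operation_One})--(\ref{Proof_Of_mu_Operation_Four})), which are exactly the glueing squares you identify, including the product-preservation isomorphism of the nerve functor in the application case. Nothing further is needed.
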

\begin{proof}
Items~\ref{mu_Operations_One}, \ref{mu_Operations_Two},
\ref{mu_Operations_Three}, and \ref{mu_Operations_Four} respectively follow
from~(\ref{Proof_Of_mu_Operation_One}), (\ref{Proof_Of_mu_Operation_Two}),
(\ref{Proof_Of_mu_Operation_Three}), and (\ref{Proof_Of_mu_Operation_Four}) in
Appendix~\ref{Appendix_(m,n)}.
\end{proof}
\begin{proposition} \label{nu_Operations}
Let $\semfun: \BaseTypeSet \rightarrow \Semcat$ be an interpretation of base
types in a cartesian closed category.
\begin{enumerate}
\item \label{nu_Operation_One}
For a base type $\abasetype \in \BaseTypeSet$, the pair of isomorphisms
$$
( \
\NEPSh\abasetype 
  \iso \VarPSh\abasetype\CatCoproduct\EEE\abasetype(\NEPSh{},\NFPSh{})
    \iso \NFPSh\abasetype
\ ,\ 
\id_{\mbox{\scriptsize$\semfun(\abasetype)$}}
\ )
$$
constitute an isomorphism $\NEobj\abasetype \iso \NFobj\abasetype$ in
$\Gluecat$.

\item \label{nu_Operation_Two}
The pair of isomorphisms
$$ 
( \
\unitmap_\UnitType: \TerminalPSh \stackrightarrow\iso \NFPSh\UnitType
\ , \
\id_\TerminalObj
\ )
$$
constitute an isomorphism $\terminalobj \stackrightarrow\iso \NFobj\UnitType$
in $\Gluecat$.

\item \label{nu_Operation_Three}
For $\tau, \tau' \in \TypeClosure\BaseTypeSet$, the pair of isomorphisms
$$
( \ 
\pairmap_{\tau\ProductType\tau'}:
  \NFPSh\tau \CatProduct \NFPSh{\tau'}
    \stackrightarrow\iso
      \NFPSh{\tau\ProductType\tau'}
\ , \
\id_{\mbox{\scriptsize$\semfun\sembracket\tau\CatProduct\semfun\sembracket{\tau'}$}}
\ )
$$
constitute an isomorphism 
$\NFobj\tau\CatProduct\NFobj{\tau'} 
 \medstackrightarrow\iso
 \NFobj{\tau\ProductType\tau'}$ 
in $\Gluecat$.

\item \label{nu_Operation_Four}
For $\tau, \tau' \in \TypeClosure\BaseTypeSet$, the pair of isomorphisms
$$
( \ 
\absmap_{\tau\ArrowType\tau'}: 
  {\NFPSh{\tau'}}^{\VarPSh\tau} 
  \stackrightarrow\iso
  \NFPSh{\tau\ArrowType\tau'}
\ , \
\id_{\mbox{\scriptsize$\semfun\sembracket\tau\CatArrow\semfun\sembracket{\tau'}$}}
\ )
$$
constitute an isomorphism 
${\NFobj{\tau'}}^{\Varobj\tau} 
 \medstackrightarrow\iso 
 \NFobj{\tau\ArrowType\tau'}$
in $\Gluecat$.
\end{enumerate}
\end{proposition}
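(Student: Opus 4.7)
The plan is, parallel to the proof of Proposition~\ref{mu_Operations}, to verify that each of the four candidate pairs is a morphism in $\Gluecat$; since in every case the presheaf component is by construction an isomorphism (either one of the defining structural isomorphisms of~(\ref{NEandNFPShsOne}) or the composite $\norm$ of~(\ref{normISO})) and the semantic component is an identity, a morphism in $\Gluecat$ with invertible components is automatically an isomorphism there, its inverse being the pair of the component inverses. Thus the whole proposition reduces to checking, for each item, the commutativity of the associated glueing square.

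For a candidate pair $(\varphi,\id)$ to be a morphism in $\Gluecat$ the glueing square reads simply $p' \comp \varphi = p$, where $p$ and $p'$ are the maps into the $\semfun\sembracket\fstarg$-nerve of the source and target glued objects. Reading off the definitions of $\NEobj\tau$ and $\NFobj\tau$, this equation asserts in each case precisely that $\NFsem{}$ (or, for item~\ref{nu_Operation_One}, a combination of $\NEsem{}$ and $\NFsem{}$) is compatible with the structural isomorphism $\varphi$ — which is an instance of the homomorphism property of $(\NEsem{},\NFsem{}): (\NEPSh{},\NFPSh{}) \rightarrow (\SemPSh{},\SemPSh{})$ already laid out in Appendix~\ref{Appendix_(m,n)}.

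Concretely I expect item~\ref{nu_Operation_Two} to follow from the clause identifying $\NFsem\UnitType \comp \unitmap_\UnitType$ with the canonical map $\TerminalPSh \stackrightarrow\iso \clone{\semfun\sembracket\fstarg}(\TerminalObj)$ provided by item~\ref{Semantic_Operations_Unit} of~(\ref{Semantic_Algebra_Structure}); item~\ref{nu_Operation_Three} from the analogous clause stating that $\NFsem{}$ commutes with $\pairmap$ matched against item~\ref{Semantic_Operations_Pair} of~(\ref{Semantic_Algebra_Structure}); and item~\ref{nu_Operation_Four} from the clause stating that $\NFsem{}$ commutes with $\absmap$ matched against item~\ref{Semantic_Operations_Abs}. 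Item~\ref{nu_Operation_One} is then settled by combining the variables-, projections-, and application-clauses for $\NEsem\abasetype$ with the corresponding base-type clauses for $\NFsem\abasetype$, whose joint upshot is exactly $\NFsem\abasetype \comp \norm = \NEsem\abasetype$.

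The main, though modest, obstacle will be item~\ref{nu_Operation_One}: whereas items~\ref{nu_Operation_Two}--\ref{nu_Operation_Four} each reduce to a single clause in the homomorphism data for $\NFsem{}$, the isomorphism $\norm$ bundles together the whole neutral structure at a base type through the composite $\NEPSh\abasetype \iso \VarPSh\abasetype \CatCoproduct \EEE\abasetype(\NEPSh{},\NFPSh{}) \iso \NFPSh\abasetype$, so the glueing square must be unpacked coproduct-summand by coproduct-summand, invoking in turn the variable, first-projection, second-projection, and application clauses of the $(\NEsem{},\NFsem{})$-homomorphism property. Once these identities are in hand the requisite squares commute, and invertibility in $\Gluecat$ follows at once from the invertibility of the components.
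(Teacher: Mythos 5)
Your overall strategy is exactly the paper's: each candidate pair is checked to be a morphism of $\Gluecat$ by verifying the glueing square, the square in each case being an instance of the homomorphism property of $(\NEsem{},\NFsem{})$ recorded in Appendix~\ref{Appendix_(m,n)}, and invertibility then follows componentwise. Your treatment of item~\ref{nu_Operation_One} via summand-by-summand unpacking of $\norm$ also matches the paper, which cites the full run of neutral clauses together with the base-type normal clauses for precisely this purpose.

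There is, however, one ingredient you omit, and it is needed before the glueing square for item~\ref{nu_Operation_Four} can even be written down. The domain of that candidate isomorphism is the exponential ${\NFobj{\tau'}}^{\Varobj\tau}$ \emph{in the glueing category}, and exponentials there are computed by the pullback~(\ref{Exponential_By_Pullback}); in general their presheaf component is \emph{not} the presheaf exponential of the components. You tacitly assume that ${\NFobj{\tau'}}^{\Varobj\tau}$ has presheaf component $(\NFPSh{\tau'})^{\VarPSh\tau}$, semantic component $\semfun\sembracket\tau\CatArrow\semfun\sembracket{\tau'}$, and structure map the evident composite through $\clone{\semfun\sembracket\fstarg}(\semfun\sembracket\tau\CatArrow\semfun\sembracket{\tau'})$ --- otherwise the pair $(\absmap_{\tau\ArrowType\tau'},\id)$ does not typecheck as a map out of it and the homomorphism diagram for $\absmap$ cannot be matched against the glueing square. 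This identification is exactly what Proposition~\ref{Glued_Exponential_to_Representable} supplies, and it is available only because $\Varobj\tau=\commayon\ctxtemb\tau$ lies in the image of $\commayon$; the paper's proof explicitly flags this dependence. Adding that one citation closes the gap; the rest of your argument goes through as stated.
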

\begin{proof}
Items~\ref{nu_Operation_One}, \ref{nu_Operation_Two},
\ref{nu_Operation_Three}, and \ref{nu_Operation_Four}
respectively follow from 
~(\ref{Proof_Of_mu_Operation_One}--
\ref{Proof_Of_nu_Operation_One(iv)}),
(\ref{Proof_Of_nu_Operation_Two}),
(\ref{Proof_Of_nu_Operation_Three}), 
and
(\ref{Proof_Of_nu_Operation_Four}) (relying on
Proposition~\ref{Glued_Exponential_to_Representable})
in Appendix~\ref{Appendix_(m,n)}.
\end{proof}
Note that the above operations on glued objects are given by pairs of
syntactic operations together with their associated semantic meaning in the
case of 
neutral terms~(Proposition~\ref{mu_Operations}) 
and together with the identity in the case of 
normal terms~(Proposition~\ref{nu_Operations}). 

\paragraph{Normalisation by evaluation.}
Let $\semfun:\BaseTypeSet\rightarrow\Semcat$ be an interpretation of base
types in a cartesian closed category.  Consider the interpretation
\begin{equation} \label{mu_Interpretation}
\begin{array}{rcl}
\BaseTypeSet & \longstackrightarrow{\ext\semfun} & \Gluecat \\
\abasetype & \mapsto & \NEobj\abasetype
\end{array}
\end{equation}
By Proposition~\ref{Gluecat_CCC}, the semantics of terms induced by
$\ext\semfun$ in $\Gluecat$ extends the semantics induced by $\semfun$ in
$\Semcat$; that is, 
the denotation $\ext\semfun\sembracket{\Gamma\entails t:\tau}$ is a pair 
of the form 
\[
(\ \semfun'\sembracket{\Gamma\entails t:\tau} 
   \ , \
   \semfun\sembracket{\Gamma\entails t:\tau}) \ )
\]
such that, letting 
\[
\ext\semfun\sembracket\tau
\ = \
(\, \NSSemPSh\tau \,,\, \NSsem\tau \,,\, \semfun\sembracket\tau \,)
\enspace, 
\]
the diagram
$$\xymatrix@R=20pt@C=6pt{
\prod_{i=1,n} \NSSemPSh{\tau_i}
\ar[d]_-{\prod_{i=1,n}\NSsem{\tau_i}} 
\ar[r]^-{\mbox{\scriptsize$\semfun'\sembracket{\Gamma\entails t:\tau}$}} 
& \NSSemPSh\tau \ar[dd]^-{\NSsem\tau} 
\\
\prod_{i=1,n} 
  \Semcat(\semfun\sembracket\simplefstarg,\semfun\sembracket{\tau_i})
\ar[d]|-\iso
\\
\Semcat(\semfun\sembracket\simplefstarg,\semfun\sembracket{\Gamma}) 
\ar[r]_-{\raisebox{-4mm}{\scriptsize
         $\semfun\sembracket{\Gamma\entails t:\tau}\comp\simplefstarg$}}
& \Semcat(\semfun\sembracket\simplefstarg,\semfun\sembracket{\tau}) 
}$$
commutes for all $\Gamma=\pair{x_i:\tau_i}_{i=1,n}$. 

We now aim at defining maps 
$\xymatrix@C=15pt{\NEPSh\tau\ar@{.>}[r]&\NSSemPSh\tau\ar@{.>}[r]&\NFPSh\tau}$
($\tau\in\TypeClosure\BaseTypeSet$)
such that
\begin{equation} \label{uq_Maps}
\begin{array}{c}\xymatrix@R=15pt{
\NEPSh\tau \ar@{.>}[r] \ar[dr]_-{\NEsem\tau} & \NSSemPSh\tau \ar@{.>}[r]
\ar[d]_-{\NSsem\tau} & \ar[dl]^-{\NFsem\tau} \NFPSh\tau \\
& \Semcat(\semfun\sembracket\simplefstarg,\semfun\sembracket\tau) 
}\end{array}
\end{equation}
commutes; so that, for all terms 
$\Gamma\entails t:\tau$~($\Gamma=\pair{x_i:\tau_i}_{i=1,n}$), the diagram
below
$$\xymatrix@R=20pt@C=6pt{
\prod_{i=1,n} \NEPSh{\tau_i} \ar[rd]_(.35){\prod_{i=1,n}\NEsem{\tau_i}} 
\ar@{.>}[r] &
\ar[d]^-{\prod_{i=1,n}\NSsem{\tau_i}} \prod_{i=1,n} \NSSemPSh{\tau_i}
\ar[r]^-{\mbox{\scriptsize$\semfun'\sembracket{\Gamma\entails t:\tau}$}} & 
\NSSemPSh\tau \ar@{.>}[r] \ar[dd]_-{\NSsem\tau} & \NFPSh\tau 
\ar[ddl]^-{\NFsem\tau}
\\
& \prod_{i=1,n}
\Semcat(\semfun\sembracket\simplefstarg,\semfun\sembracket{\tau_i})
\ar[d]|-\iso
\\
& 
\Semcat(\semfun\sembracket\simplefstarg,\semfun\sembracket{\Gamma}) 
\ar[r]_-{\raisebox{-4mm}{\scriptsize
         $\semfun\sembracket{\Gamma\entails t:\tau}\comp\simplefstarg$}}
& \Semcat(\semfun\sembracket\simplefstarg,\semfun\sembracket{\tau}) 
}$$
will commute (\cf~diagram~(\ref{Basic_Lemma_Diagram}) of the Basic Lemma
(Lemma~\ref{Basic_Lemma}
)) 
and, hence, the evaluation of the horizontal top composite at the tuple
$\pair{\varmap_{\tau_i}(x_i)}_{i=1,n}$ of the variables in the
context~$\Gamma$ will yield a 
normal term in $\NFPSh\tau(\Gamma)$ with the
same semantics as the given term~$t$~(compare the Extensional Normalisation
Lemma (Lemma~\ref{Extensional_Normalisation_Lemma}
)
and see Corollary~\ref{Corollary_Correctness_Two} below).  
Moreover, as we will show below~(see
Corollary~\ref{Corollary_Correctness_Three}), the long $\beta\eta$-normal
forms associated to two $\beta\eta$-equal terms will be 
the same. 

\medskip
The abstract way to define the maps in~(\ref{uq_Maps}) ---which in the
literature on normalisation by evaluation are either referred to as
\Rem{unquote} and \Rem{quote} or as \Rem{reflect} and \Rem{reify}--- is by
defining maps 
$$
\NEobj\tau 
\longstackrightarrow{\unquotemap\tau}
\ext\semfun\sembracket\tau
\longstackrightarrow{\quotemap\tau}
\NFobj\tau 
\quad\mbox{in $\Gluecat$}
$$
that project in $\Semcat$ onto identities (see
Proposition~\ref{pi(u)=id=pi(q)} below).  
The definition of these maps is by induction on the structure of types relying
on Propositions~\ref{mu_Operations} and~\ref{nu_Operations} as follows:
\begin{enumerate}
\item
For a base type $\abasetype \in \BaseTypeSet$, we define
$\unquotemap\abasetype = \id_{\NEobj\abasetype}$ and 
\mbox{$\quotemap\abasetype 
         = (\NEobj\abasetype\stackrightarrow\iso\NFobj\abasetype)$}.

\item
We let 
$\unquotemap\TerminalType = (\NEobj\TerminalType \rightarrow \TerminalObj)$
and 
$\quotemap\TerminalType 
 = (\TerminalObj 
   \xymatrix@C=33pt{\ar[r]^-{(\unitmap_\TerminalType,\id)}_-\iso&}
   \NFobj\TerminalType)$.

\item
For types $\tau,\tau'\in\TypeClosure\BaseTypeSet$, we define 
$$
\unquotemap{\tau\ProductType\tau'}:
  \NEobj{\tau\ProductType\tau'} 
  \rightarrow
  \ext\semfun\sembracket{\tau}\CatProduct\ext\semfun\sembracket{\tau'}
$$
as the pairing of the maps
$$\begin{array}{ccc}
\NEobj{\tau\ProductType\tau'}
\xymatrix@C=50pt{\ar[r]^-{(\fstmap^{(\tau')}_{\tau},\proj_1)}&}
\NEobj{\tau}
\xymatrix{\ar[r]^-{\unquotemap{\tau}}&}
\ext\semfun\sembracket{\tau}
& \mbox{ and } & 
\NEobj{\tau\ProductType\tau'}
\xymatrix@C=50pt{\ar[r]^-{(\sndmap^{(\tau)}_{\tau'},\proj_2)}&}
\NEobj{\tau'}
\xymatrix{\ar[r]^-{\unquotemap{\tau'}}&}
\ext\semfun\sembracket{\tau'}
\quad,
\end{array}$$
and let 
$\quotemap{\tau\ProductType\tau'}:
 \ext\semfun\sembracket{\tau}\CatProduct\ext\semfun\sembracket{\tau'}
 \rightarrow
 \NFobj{\tau\ProductType\tau'}$
be the composite
$$
\ext\semfun\sembracket{\tau}\CatProduct\ext\semfun\sembracket{\tau'}
\xymatrix@C=30pt{\ar[r]^-{\quotemap{\tau}\CatProduct\quotemap{\tau'}}&}
\NFobj{\tau}\CatProduct\NFobj{\tau'}
\xymatrix@C=40pt{\ar[r]^-{(\pairmap_{\tau\ProductType\tau'},\id)}_-\iso&}
\NFobj{\tau\ProductType\tau'}\ .
$$ 

\item
For types $\tau,\tau'\in\TypeClosure\BaseTypeSet$, we define 
$$
\unquotemap{\tau\ArrowType\tau'}:
 \NEobj{\tau\ArrowType\tau'}
 \rightarrow
 \ext\semfun\sembracket{\tau'}^{\ext\semfun\sembracket\tau}
$$
as the exponential transpose of the map
$$
\NEobj{\tau\ArrowType\tau'}\CatProduct\ext\semfun\sembracket\tau
\longstackrightarrow{\id\times\quotemap\tau}
\NEobj{\tau\ArrowType\tau'}\CatProduct\NFobj\tau
\longlongstackrightarrow{(\appmap^{(\tau)}_{\tau'},\eval)}
\NEobj{\tau'}
\longstackrightarrow{\unquotemap{\tau'}}
\ext\semfun\sembracket{\tau'}
\ , 
$$
and let 
$\quotemap{\tau\ArrowType\tau'}:
 \ext\semfun\sembracket{\tau'}^{\ext\semfun\sembracket\tau}
 \rightarrow
 \NFobj{\tau\ArrowType\tau'}$
be the composite
$$
\ext\semfun\sembracket{\tau'}^{\ext\semfun\sembracket\tau}
\xymatrix@C=35pt{\ar[r]^-{{\quotemap{\tau'}}^{\unquotemap\tau\vmap_\tau}}&}
{\NFobj{\tau'}}^{\Varobj\tau}
\xymatrix@C=55pt{\ar[r]^-{(\absmap_{\tau\ArrowType\tau'},\id)}_\iso&}
\NFobj{\tau\ArrowType\tau'}
$$
where $\vmap_\tau=(\varmap_\tau,\id):\Varobj\tau\rightarrow\NEobj\tau$.
\end{enumerate}

Proposition~\ref{pi(u)=id=pi(q)} below yields~(\ref{uq_Maps}) as a corollary.
\begin{proposition}\label{pi(u)=id=pi(q)}
For every type $\tau\in\TypeClosure\BaseTypeSet$, we have the identities
$$
\Sproj(\unquotemap\tau) 
 \ = \ \id_{\mbox{\scriptsize$\semfun\sembracket\tau$}}
 \ = \ \Sproj(\quotemap\tau)
$$ 
for $\Sproj$ the forgetful functor $\Gluecat \rightarrow \Semcat$.
\end{proposition}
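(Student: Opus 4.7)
The plan is a straightforward induction on the structure of types, leveraging the fact that, by Proposition~\ref{Gluecat_CCC}, the forgetful functor $\Sproj: \Gluecat \rightarrow \Semcat$ preserves the cartesian closed structure strictly. Thus $\Sproj$ commutes with products, exponentials, pairing, projections, exponential transpose, and evaluation; and by Propositions~\ref{mu_Operations} and~\ref{nu_Operations} it sends the glued syntactic operations (\varmap, \fstmap, \sndmap, \appmap, \unitmap, \pairmap, \absmap) onto the corresponding semantic operations (projection morphisms, $\eval$, identities, etc.) in $\Semcat$.

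For the base cases, $\Sproj(\unquotemap\abasetype) = \Sproj(\id_{\NEobj\abasetype}) = \id_{\semfun\sembracket\abasetype}$ and $\Sproj(\quotemap\abasetype) = \id_{\semfun\sembracket\abasetype}$ directly from item~\ref{nu_Operation_One} of Proposition~\ref{nu_Operations}. For $\UnitType$, one has $\Sproj(\unquotemap\UnitType) = \id_\TerminalObj$ (the unique map, which is the identity on the terminal), and $\Sproj(\quotemap\UnitType) = \id_\TerminalObj$ by item~\ref{nu_Operation_Two} of Proposition~\ref{nu_Operations}.

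For the product case, applying $\Sproj$ to the defining pairing of $\unquotemap{\tau\ProductType\tau'}$ and using the induction hypothesis together with item~\ref{mu_Operations_Two} and~\ref{mu_Operations_Three} of Proposition~\ref{mu_Operations}, we obtain $\pair{\id \comp \proj_1,\, \id \comp \proj_2} = \id_{\semfun\sembracket\tau \CatProduct \semfun\sembracket{\tau'}}$. Similarly, applying $\Sproj$ to $\quotemap{\tau\ProductType\tau'}$ gives $(\quotemap\tau \CatProduct \quotemap{\tau'})$ composed with the identity component of Proposition~\ref{nu_Operations}(\ref{nu_Operation_Three}), which by induction is $\id \CatProduct \id = \id$.

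The arrow case is the point that requires a little care. Applying $\Sproj$ to the exponential transpose defining $\unquotemap{\tau\ArrowType\tau'}$ and using strict preservation of transposition, the result is the transpose of the composite $\eval \comp (\id \CatProduct \id)$ (obtained from the induction hypotheses on $\unquotemap{\tau'}$ and $\quotemap\tau$, together with item~\ref{mu_Operations_Four} of Proposition~\ref{mu_Operations}), which equals the transpose of $\eval$, namely $\id_{\semfun\sembracket\tau \CatArrow \semfun\sembracket{\tau'}}$. For $\quotemap{\tau\ArrowType\tau'}$, applying $\Sproj$ yields ${\Sproj(\quotemap{\tau'})}^{\Sproj(\unquotemap\tau \comp \vmap_\tau)}$ post-composed with the identity of Proposition~\ref{nu_Operations}(\ref{nu_Operation_Four}); by induction both $\Sproj(\quotemap{\tau'})$ and $\Sproj(\unquotemap\tau)$ are identities, and $\Sproj(\vmap_\tau) = \id$ by item~\ref{mu_Operations_One} of Proposition~\ref{mu_Operations}, so the whole composite reduces to $\id^{\id} = \id$. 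This completes the induction; the only subtlety is bookkeeping that $\Sproj$ strictly preserves exponential transposes, which is exactly what Proposition~\ref{Gluecat_CCC} guarantees.
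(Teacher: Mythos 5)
Your proof is correct and follows essentially the same route as the paper's: induction on the structure of types, using that the forgetful functor $\Sproj$ strictly preserves the cartesian closed structure and that the semantic components of the glued operations are the identities (for the normal-term operations) or the canonical projections and evaluation (for the neutral-term operations). The only cosmetic difference is that in the arrow case the paper verifies $\Sproj(\unquotemap{\tau\ArrowType\tau'})=\id$ by checking its uncurried form against $\eval$ rather than invoking preservation of transposition directly, but these are the same argument.
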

\begin{proof*}
The proof is by induction on the structure of types.
\begin{enumerate}
\item
For a base type $\abasetype \in \BaseTypeSet$, 
$\Sproj(\unquotemap\abasetype) 
   = \Sproj(\quotemap\abasetype) 
   = \id_{\mbox{\scriptsize$\semfun$}\sembracket\abasetype}$
by definition of $\unquotemap\abasetype$ and $\quotemap\abasetype$.

\item
$\Sproj(\unquotemap\TerminalType) 
   = \Sproj(\quotemap\TerminalType) 
   = \id_{\TerminalObj}$
by definition of $\unquotemap\TerminalType$ and $\quotemap\TerminalType$.

\item
For types $\tau,\tau'\in\TypeClosure\BaseTypeSet$, 
$$\begin{array}{rcll}
\Sproj(\unquotemap{\tau\ProductType\tau'})
& = & \pair{ \; \Sproj(\unquotemap\tau) \comp \proj_1 
             \, , 
             \, \Sproj(\unquotemap{\tau'}) \comp \proj_2
             \; }
    & \mbox{, by definition of $\unquotemap{\tau\ProductType\tau'}$}
\\[1.5mm]
& = & \pair{ \, \proj_1 \, , \, \proj_2 \, } 
    & \mbox{, by induction}
\\[1.5mm]
& = &
\id_{\mbox{\scriptsize$\semfun$}\sembracket{\tau}
     \CatProduct
     \mbox{\scriptsize$\semfun$}\sembracket{\tau'}}
\end{array}$$
and
$$\begin{array}{rcll}
\Sproj(\quotemap{\tau\ProductType\tau'})
& = & \Sproj(\quotemap\tau) \CatProduct \Sproj(\quotemap{\tau'})
    & \mbox{, by definition of $\quotemap{\tau\ProductType\tau'}$}
\\[1.5mm]
& = & \id_{\mbox{\scriptsize$\semfun$}\sembracket\tau} 
           \CatProduct 
           \id_{\mbox{\scriptsize$\semfun$}\sembracket{\tau'}}
    & \mbox{, by induction}
\\[1.5mm]
& = & \id_{\mbox{\scriptsize$\semfun$}\sembracket\tau 
           \CatProduct 
           \mbox{\scriptsize$\semfun$}\sembracket{\tau'}} 
\quad.
\end{array}$$

\item
For types $\tau,\tau'\in\TypeClosure\BaseTypeSet$, 
$$\begin{array}{rcll}
& & 
\!\!\!\!\!\!\!\!\!\!\!\!\!\!\!\!\!\!
\eval \comp 
  (\Sproj(\unquotemap{\tau\ArrowType\tau'}) 
     \CatProduct \id_{\mbox{\scriptsize$\semfun$}\sembracket\tau})
\\[1.5mm]
& = & \Sproj(\unquotemap{\tau'}) \comp \eval \comp
        (\id_{\mbox{\scriptsize$\semfun$}\sembracket\tau
              \CatArrow\mbox{\scriptsize$\semfun$}\sembracket{\tau'}}
              \CatProduct \Sproj(\quotemap\tau))
& \mbox{, by definition of $\unquotemap{\tau\ArrowType\tau'}$}
\\[1.5mm]
& = & \eval
    & \mbox{, by induction}
\end{array}$$
and hence
$$\begin{array}{rcl}
\Sproj(\unquotemap{\tau\ArrowType\tau'})
& = & \id_{\mbox{\scriptsize$\semfun$}\sembracket{\tau}
           \CatArrow
           \mbox{\scriptsize$\semfun$}\sembracket{\tau'}}
\quad ;
\end{array}$$
further
\begin{center}
\hfill
$\begin{array}[b]{rcll}
\Sproj(\quotemap{\tau\ArrowType\tau'})
& = & (\Sproj(\unquotemap\tau) \comp \Sproj(\vmap_\tau)) 
        \CatArrow (\Sproj(\quotemap{\tau'}))
    & \mbox{, by definition of $\quotemap{\tau\ArrowType\tau'}$}
\\[1.5mm]
& = & \id_{\mbox{\scriptsize$\semfun$}\sembracket\tau} 
      \CatArrow 
      \id_{\mbox{\scriptsize$\semfun$}\sembracket{\tau'}}
    & \mbox{, by induction and definition of $\vmap_\tau$}
\\[1.5mm]
& = & \id_{\mbox{\scriptsize$\semfun$}\sembracket\tau
           \;\CatArrow
           \mbox{\scriptsize$\semfun$}\sembracket{\tau'}}
\quad.
\end{array}$\qed
\end{center}
\end{enumerate}
\end{proof*}

\paragraph{Normalisation function.}
Every interpretation $\semfun: \BaseTypeSet \rightarrow \Semcat$ of base types
in a cartesian closed category, induces a \Rem{normalisation function}
$\nf\tau\semfun: \TermPSh\tau \rightarrow \NFPSh\tau$ in
$\Set^{\F{}\comma\TypeClosure\BaseTypeSet}$
defined as the composite
$$ 
\TermPSh\tau 
\stackrightarrow{\extTermsem\tau}
\Hom{\ext\semfun\sembracket\fstarg,\ext\semfun\sembracket\tau}
\longstackrightarrow{\arHom{\unquotemap{}\vmap{},\quotemap\tau}}
\Hom{\commayon(\fstarg),\NFobj\tau}
\stackrightarrow\iso
\NFPSh\tau
$$
where $\extTermsem{}$ denotes the semantics of terms induced by the
interpretation $\ext\semfun:\BaseTypeSet\rightarrow\Gluecat$ 
of~(\ref{mu_Interpretation}) and where
$$\begin{array}{lcl}
(\unquotemap{}\vmap)_\Gamma
& = & 
\commayon(\Gamma) 
  \stackrightarrow{\vmap_\Gamma}
\NEobj{}\sembracket\Gamma
  \stackrightarrow{\unquotemap\Gamma}
\ext\semfun\sembracket\Gamma
\end{array}$$
for
$$\begin{array}{lcl}
\NEobj{}\sembracket\Gamma 
& = & 
\prod_{(x:\tau)\in\Gamma}\NEobj\tau
\quad ,
\\[1mm]
\vmap_\Gamma 
& = & 
\commayon(\Gamma) 
  \stackrightarrow\iso
\prod_{(x:\tau)\in\Gamma}\Varobj\tau
  \xymatrix@C=50pt{\ar[r]^-{\prod_{(x:\tau)\in\Gamma} \vmap_\tau}&}
\NEobj{}\sembracket\Gamma
\quad ,
\\[2mm]
\unquotemap\Gamma 
& = & 
\prod_{(x:\tau)\in\Gamma} \unquotemap\tau
\quad .
\end{array}$$
Explicitly, 
$$
\nf{\tau,\Gamma}\semfun(t)
\ = \
( \quotemap\tau\ 
    \ext\semfun\sembracket{\Gamma\entails t:\tau}\
      (\unquotemap{}\vmap)_\Gamma
  )
    (\id_\Gamma)
\enspace\in \NFPSh\tau(\Gamma) 
$$
for all terms $t\in\TermPSh\tau(\Gamma)$. 

\medskip
Having the same denotation, $\beta\eta$-equal terms are identified
by the normalisation function.
\begin{corollary}\label{Corollary_Correctness_Three}
Let $\semfun:\BaseTypeSet\rightarrow\Semcat$ be an interpretation of base
types in a cartesian closed category.  
For every pair of terms $t,t'$ in $\TermPSh\tau(\Gamma)$, if $t \betaetaeq t'$
then $\nf{\tau,\Gamma}\semfun(t) = \nf{\tau,\Gamma}\semfun(t')$ in
$\NFPSh\tau(\Gamma)$.
\end{corollary}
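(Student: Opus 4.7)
The plan is to exploit soundness of the typed lambda calculus in cartesian closed categories, applied to the glueing category. By Proposition~\ref{Gluecat_CCC}, $\Gluecat$ is cartesian closed, so the interpretation of terms induced by $\ext\semfun: \BaseTypeSet \rightarrow \Gluecat$ is a sound $\beta\eta$-model. This is a standard fact about the semantics of the simply typed lambda calculus in cartesian closed categories (\cf~\cite{LambekScott,Crole,Taylor}) and was recalled earlier in Section~I.1.

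Concretely, I would argue as follows. Given $t \betaetaeq t'$ in $\TermPSh\tau(\Gamma)$, soundness yields
\[
\ext\semfun\sembracket{\Gamma \entails t: \tau}
\ = \
\ext\semfun\sembracket{\Gamma \entails t': \tau}
\]
as morphisms $\ext\semfun\sembracket\Gamma \rightarrow \ext\semfun\sembracket\tau$ in $\Gluecat$. Equivalently, via the standard semantic interpretation in the initial algebra $\TermPSh{}$, this says $\extTermsem\tau(t) = \extTermsem\tau(t')$ in $\Hom{\ext\semfun\sembracket\fstarg,\ext\semfun\sembracket\tau}$.

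By the explicit description of the normalisation function,
\[
\nf{\tau,\Gamma}\semfun(t)
\ = \
( \quotemap\tau\
    \ext\semfun\sembracket{\Gamma\entails t:\tau}\
      (\unquotemap{}\vmap)_\Gamma
  )(\id_\Gamma)
\]
depends on $t$ only through $\ext\semfun\sembracket{\Gamma\entails t:\tau}$, so the equality of denotations in $\Gluecat$ immediately propagates to the equality $\nf{\tau,\Gamma}\semfun(t) = \nf{\tau,\Gamma}\semfun(t')$ in $\NFPSh\tau(\Gamma)$.

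There is no real obstacle here: the whole point of working in $\Gluecat$ is to carry syntactic information alongside the semantic interpretation while remaining in a cartesian closed model, so soundness of $\beta\eta$ comes for free from Proposition~\ref{Gluecat_CCC}. The only mild care needed is to observe that the normalisation function factors through $\extTermsem\tau$, which is transparent from its definition as a composite.
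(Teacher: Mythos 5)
Your proposal is correct and is essentially the paper's own argument: the text justifies the corollary with the one-line observation that the normalisation function factors through the denotational semantics $\extTermsem\tau$ induced by $\ext\semfun$ in the cartesian closed glueing category (Proposition~\ref{Gluecat_CCC}), so $\beta\eta$-equal terms, having the same denotation there by soundness, are sent to the same normal form. Nothing further is needed.
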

Further, as a consequence of Proposition~\ref{pi(u)=id=pi(q)}
(see also~(\ref{uq_Maps})), we have that a term and its associated normal form
have the same semantics.
\begin{corollary}\label{Corollary_Correctness_Two}
For every interpretation 
$\semfun:\BaseTypeSet\rightarrow\Semcat$
of base types in a cartesian closed category, the diagram
$$\xymatrix{
\TermPSh\tau \ar[dr]_-{\Termsem\tau} 
\ar[rr]^-{\mbox{\scriptsize$\nf\tau\semfun$}} && 
\NFPSh\tau \ar[ld]^-{\NFsem\tau}\\
& \Semcat(\semfun\sembracket\simplefstarg,\semfun\sembracket\tau)
}$$
commutes for all types $\tau\in\TypeClosure\BaseTypeSet$.   
\end{corollary}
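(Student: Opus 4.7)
The plan is to unwind the definition of $\nf\tau\semfun$ and then apply the forgetful functor $\Sproj: \Gluecat \rightarrow \Semcat$, which by Proposition~\ref{Gluecat_CCC} strictly preserves the cartesian closed structure (so in particular $\Sproj\bigl(\ext\semfun\sembracket\sigma\bigr)=\semfun\sembracket\sigma$ for all $\sigma\in\TypeClosure\BaseTypeSet$).

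For $t\in\TermPSh\tau(\Gamma)$, the element $\nf{\tau,\Gamma}\semfun(t)\in\NFPSh\tau(\Gamma)$ is, by definition, the image under the Extended Yoneda isomorphism $\Hom{\commayon(\Gamma),\NFobj\tau}\iso\NFPSh\tau(\Gamma)$ of the glued morphism
\[
g \ \eqdef \ \quotemap\tau \comp \ext\semfun\sembracket{\Gamma\entails t:\tau} \comp (\unquotemap{}\vmap)_\Gamma\ :\ \commayon(\Gamma) \rightarrow \NFobj\tau.
\]
I would first invoke the commuting triangle in the Extended Yoneda Lemma to identify $\NFsem_{\tau,\Gamma}\bigl(\nf{\tau,\Gamma}\semfun(t)\bigr)$ with $\Sproj(g)$, the semantic component of $g$ in $\Semcat$.

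I would then discharge this projection by functoriality of $\Sproj$, splitting it as
\[
\Sproj(g) \ = \ \Sproj(\quotemap\tau) \comp \Sproj\bigl(\ext\semfun\sembracket{\Gamma\entails t:\tau}\bigr) \comp \Sproj\bigl((\unquotemap{}\vmap)_\Gamma\bigr),
\]
and handling each factor separately: Proposition~\ref{pi(u)=id=pi(q)} collapses $\Sproj(\quotemap\tau)$ and each $\Sproj(\unquotemap\sigma)$ to identities; the component maps $\vmap_\sigma=(\varmap_\sigma,\id)$ project to identities by construction, so $\Sproj\bigl((\unquotemap{}\vmap)_\Gamma\bigr)=\id$; and the middle factor is $\Termsem\tau(t)=\semfun\sembracket{\Gamma\entails t:\tau}$, because $\ext\semfun$ lifts $\semfun$ along the strict CCC-preserving $\Sproj$, so $\extTermsem\tau$ sits over $\Termsem\tau$. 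Combining these identifications yields $\NFsem\tau\comp\nf\tau\semfun = \Termsem\tau$, as required.

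There is no substantive obstacle: every step reduces to strict functoriality of $\Sproj$ together with the single non-trivial input, Proposition~\ref{pi(u)=id=pi(q)}. The one point requiring a moment of care is the transit between an element of $\NFPSh\tau(\Gamma)$ and its semantic image, but this is precisely what the Extended Yoneda Lemma's commuting triangle accomplishes, equating $\Sproj$ of a glued morphism $\commayon(\Gamma)\to(P,p,A)$ with $p_\Gamma$ applied to the corresponding element of $P(\Gamma)$.
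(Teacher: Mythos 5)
Your proof is correct and follows essentially the same route as the paper, which presents this corollary as an immediate consequence of Proposition~\ref{pi(u)=id=pi(q)} together with the fact that $\ext\semfun$ lifts $\semfun$ along the strictly CCC-preserving $\Sproj$ (the commutativity of~(\ref{uq_Maps}) and the preceding diagram). You have merely spelled out the details the paper leaves implicit --- applying $\Sproj$ to the defining composite of $\nf\tau\semfun$, collapsing the quote/unquote/variable factors to identities, and using the Extended Yoneda Lemma's triangle to pass between elements of $\NFPSh\tau(\Gamma)$ and their semantic images --- all of which is sound.
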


Considering the universal interpretation
$\freesemfun: \BaseTypeSet \rightarrow \FreeCCC[\BaseTypeSet]$ of the set of
base types $\BaseTypeSet$ into the free cartesian closed category
$\FreeCCC[\BaseTypeSet]$ over them, by
Corollary~\ref{Corollary_Correctness_Two}, we have that   
\begin{equation} \label{t_betaetaequal_f-nf(t)}
t \betaetaeq \nf{\tau,\Gamma}\freesemfun(t) 
\end{equation}
and hence, by Corollary~\ref{Corollary_Correctness_Three}, that 
\[
\nf{\tau,\Gamma}\semfun(t) 
= 
\nf{\tau,\Gamma}\semfun(\nf{\tau,\Gamma}\freesemfun(t))
\]
for all terms $t\in\TermPSh\tau(\Gamma)$.  
Thus, the normalisation function $\nf\tau\freesemfun$ is idempotent and
therefore fixes some normal terms.  
In fact, as we will see below (see~(\ref{NF_eqn}) in Theorem~\ref{THEOREM}),
all normalisation functions $\nf\tau\semfun$ fix all normal terms:  that is,
\begin{equation}\label{nf_Fixes_Normal_Forms}
\mbox{for all $N \in \NFPSh\tau(\Gamma)$, 
  $\nf{\tau,\Gamma}\semfun(N) = N$\quad.}
\end{equation}
This fixed-point property is important: from it and
Corollary~\ref{Corollary_Correctness_Three} it follows that 
\begin{itemize}
\item
  for all terms $t\in\TermPSh\tau(\Gamma)$ and normal terms
  \mbox{$N\in\NFPSh\tau(\Gamma)$}, if $t \betaetaeq N$ then
  $\nf{\tau,\Gamma}\semfun(t) = N$, and 

\item 
  for every pair of normal terms $N,N' \in \NFPSh\tau(\Gamma)$, if 
  $N \betaetaeq N'$ then $N = N'$;
\end{itemize}
so that, further using Corollary~\ref{Corollary_Correctness_Two} in the
form~(\ref{t_betaetaequal_f-nf(t)}), we have that 
\begin{itemize}
\item
  for all terms \mbox{$t\in\TermPSh\tau(\Gamma)$}, 
  $\nf{\tau,\Gamma}\semfun(t) = \nf{\tau,\Gamma}\freesemfun(t)$.
\end{itemize}
Thus, the fixed-point property~(\ref{nf_Fixes_Normal_Forms}) allows one to
conclude that: 
\begin{quote}
  all interpretations induce the same normalisation function
  $\simplenf\tau: \TermPSh\tau \rightarrow \NFPSh\tau$ 
such that, for every term
$t\in\TermPSh\tau(\Gamma)$, one has that
$\simplenf{\tau,\Gamma}(t)\in\NFPSh\tau(\Gamma)$ is the unique normal term
$\beta\eta$-equal to $t$.
\end{quote}

\medskip
We now 
establish~(\ref{nf_Fixes_Normal_Forms}).  
The appropriate induction hypothesis to proceed by induction on the structure
of neutral and normal terms is stated in the theorem below.
\begin{theorem} \label{THEOREM}
For every interpretation 
$\semfun:\BaseTypeSet\rightarrow\Semcat$
of base types in a cartesian closed category, the diagrams
\begin{equation}\label{NE_eqn}
\begin{array}{c}\xymatrix@C=-5pt{
\NEPSh\tau \ar[drr]_-{\extNEsem\tau} 
& \iso \Hom{\commayon(\simplefstarg),\NEobj\tau}
\ar[rr]^-{\arHom{\id,\unquotemap\tau}} 
&& 
\Hom{\commayon(\simplefstarg),\ext\semfun\sembracket\tau}
\\
& & \Hom{\ext\semfun\sembracket{\simplefstarg},\ext\semfun\sembracket\tau}
\ar[ru]_-{\arHom{\unquotemap{}\vmap{},\id}} & 
}\end{array}
\end{equation}
and
\begin{equation}\label{NF_eqn}
\begin{array}{c}\xymatrix{
\NFPSh\tau \ar[rr]|-\iso \ar[dr]_{\extNFsem\tau} & & 
\Hom{\commayon(\simplefstarg),\NFobj\tau} \\
& \Hom{\ext\semfun\sembracket\simplefstarg,\ext\semfun\sembracket\tau} 
\ar[ru]_-{\arHom{\unquotemap{}\vmap{},\quotemap\tau}}
}\end{array}
\end{equation}
commute for all types $\tau \in \TypeClosure\BaseTypeSet$.
\end{theorem}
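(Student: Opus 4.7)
The plan is to prove both diagrams simultaneously by structural induction on neutral and normal terms using the induction principle~(\ref{Induction_Principle}) applied to the initial $\pair{\SSSNE,\SSSNF}$-algebra $(\NEPSh{},\NFPSh{})$. Concretely, I would take $P_\tau\rightmono\NEPSh\tau$ and $Q_\tau\rightmono\NFPSh\tau$ to be the equalisers of the two composites in~(\ref{NE_eqn}) and~(\ref{NF_eqn}) respectively, and verify that $(P,Q)$ inherits a $\pair{\SSSNE,\SSSNF}$-subalgebra structure; by initiality the embeddings must then be isomorphisms.

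The core of the argument reduces, by the Extended Yoneda Lemma, to identifying the right-hand side of each diagram with the evaluation of an appropriate map in $\Gluecat$ at an identity, and recognising the left-hand side as the value of the initial-algebra semantic interpretation induced by $\ext\semfun$. The key observation is that the $\pair{\SSSNE,\SSSNF}$-algebra structure on $(\NEobj{},\NFobj{})$ supplied by Propositions~\ref{mu_Operations} and~\ref{nu_Operations} provides exactly the algebraic data needed to match each clause of the inductive definition of $\unquotemap\tau$ and $\quotemap\tau$. So for each generator of the signature I would chase a diagram whose top composite is an algebraic operation on $(\NEPSh{},\NFPSh{})$, whose bottom composite is the corresponding semantic operation in $\Gluecat$, and whose vertical edges are the inductively given equalities; the naturality of $\ext\semfun\sembracket{\fstarg}$ as a semantic homomorphism does the rest.

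For the neutral cases one checks closure under $\varmap_\tau$ (immediate from item~\ref{mu_Operations_One} of Proposition~\ref{mu_Operations} together with the definition $\unquotemap\abasetype=\id$, and using the definitions of $\unquotemap{\tau\ProductType\tau'}$ and $\unquotemap{\tau\ArrowType\tau'}$ at non-base types), under $\fstmap$ and $\sndmap$ (items~\ref{mu_Operations_Two} and~\ref{mu_Operations_Three}, matching the product clause in the definition of $\unquotemap{\tau\ProductType\tau'}$), and under $\appmap$ (item~\ref{mu_Operations_Four}, which is exactly what the transpose defining $\unquotemap{\tau\ArrowType\tau'}$ unfolds to after one applies $\mathsf{ev}$ with a quoted argument). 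The normal cases are handled dually using Proposition~\ref{nu_Operations}: the base-type clause of~(\ref{NF_eqn}) follows from~(\ref{NE_eqn}) at base types via the isomorphism~(\ref{normISO}) and item~\ref{nu_Operation_One} of Proposition~\ref{nu_Operations}; the unit, pairing and abstraction clauses match items~\ref{nu_Operation_Two}, \ref{nu_Operation_Three} and~\ref{nu_Operation_Four} essentially by construction of $\quotemap\tau$.

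The delicate step is the abstraction case, and correspondingly the application case, since these are where one leaves the evident naturality of representables behind: the definition of $\quotemap{\tau\ArrowType\tau'}$ precomposes with $\vmap_\tau$ to extend the context, and the definition of $\unquotemap{\tau\ArrowType\tau'}$ is a transpose involving $\quotemap\tau$ on the argument side. Verifying the closure diagram here requires Proposition~\ref{Glued_Exponential_to_Representable} to identify $\NFobj{\tau'}^{\Varobj\tau}$ with a representable exponential in $\Gluecat$, and a use of~(\ref{RepresentableExponentialEvaluationMap}) so that the freshly bound variable is correctly substituted in the induction hypothesis applied in the extended context. Once this bookkeeping is discharged the two diagrams commute on the generators, the sub-algebra condition holds, and initiality finishes the argument.
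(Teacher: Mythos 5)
Your proposal is correct and follows essentially the same route as the paper: the paper likewise takes the equalisers of~(\ref{NE_eqn}) and~(\ref{NF_eqn}), shows they form a sub~$\pair{\SSSNE,\SSSNF}$-algebra of $(\NEPSh{},\NFPSh{})$ using the homomorphism property of $(\extNEsem{},\extNFsem{})$ together with Propositions~\ref{mu_Operations}, \ref{nu_Operations} and~\ref{Glued_Exponential_to_Representable}, and concludes by the induction principle~(\ref{Induction_Principle}); the case-by-case diagram chases you outline match those spelled out in Appendix~\ref{PROOF}.
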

\begin{proof}
The proof uses the induction principle associated to the initial 
$\pair{\SSSNE,\SSSNF}$-algebra $(\NEPSh{},\NFPSh{})$ 
(see~(\ref{Induction_Principle})) by considering the
equalisers 
\begin{center}
$\SubNEPSh\tau \stackrightmono{\SubNEmap\tau} \NEPSh\tau$
\ and \
$\SubNFPSh\tau \stackrightmono{\SubNFmap\tau} \NFPSh\tau$
\end{center}
of~(\ref{NE_eqn}) and~(\ref{NF_eqn}) respectively, and showing that the
family 
$$
(\SubNEmap\tau,\SubNFmap\tau):
  (\SubNEPSh\tau,\SubNFPSh\tau) \rightmono (\NEPSh\tau,\NFPSh\tau)
\qquad(\tau\in\TypeClosure\BaseTypeSet)
$$ 
is a sub~$\pair{\SSSNE,\SSSNF}$-algebra, from which it follows that
$\SubNEmap\tau$ and $\SubNFmap\tau$ are isomorphisms and hence
that~(\ref{NE_eqn}) and~(\ref{NF_eqn}) commute.  The details are spelled
out in Appendix~\ref{PROOF}.  
\end{proof}

\begin{remark}
In elementary terms, the above categorical proof amounts to establishing the
identities  
$$
\ext\semfun\sembracket{\Gamma\entails M:\tau}\ 
  (\unquotemap{}\vmap)_\Gamma
\ = \
\unquotemap\tau\ (M[\fstarg],\semfun\sembracket{\Gamma\entails M:\tau})
$$
and 
$$
\quotemap\tau\
  \ext\semfun\sembracket{\Gamma\entails N:\tau}\
    (\unquotemap{}\vmap)_\Gamma
\ = \
(N[\fstarg],\semfun\sembracket{\Gamma\entails N:\tau})
$$
for $M\in\NEPSh\tau(\Gamma)$ and $N\in\NFPSh\tau(\Gamma)$, by simultaneous 
induction on the derivation of neutral and normal
terms~(\cf~\cite{Reynolds}).  
\end{remark}

\medskip
The commutativity of diagram~(\ref{NF_eqn}) amounts to
property~(\ref{nf_Fixes_Normal_Forms}) and hence, as explained above, all
normalisation functions coincide. 
\begin{corollary} \label{Corollary_nfCoincidence}
For every interpretation 
$\semfun:\BaseTypeSet\rightarrow\Semcat$
of base types in a cartesian closed category and for the universal
interpretation 
$\freesemfun: \BaseTypeSet \rightarrow \FreeCCC[\BaseTypeSet]$ of base
types into the free cartesian closed category over them, the identity
$$\begin{array}{rclll}
\nf\tau\semfun & = & \nf\tau\freesemfun 
& : \TermPSh\tau \rightarrow \NFPSh\tau
&\mbox{ in $\Set^{\F{}\comma\TypeClosure\BaseTypeSet}$}
\end{array}$$
holds.
\end{corollary}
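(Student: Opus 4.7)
The plan is to derive the corollary directly from the ingredients already assembled above the statement, with Theorem~\ref{THEOREM} doing all the heavy lifting. The core idea is that the normalisation function $\nf\tau\freesemfun$ associated to the universal interpretation produces a normal term $\beta\eta$-equal to its argument, and every normalisation function is constant on $\beta\eta$-equivalence classes and fixes normal forms; so any two must agree.

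More concretely, the first step is to extract the fixed-point property~(\ref{nf_Fixes_Normal_Forms}) from Theorem~\ref{THEOREM}. The commutativity of diagram~(\ref{NF_eqn}) says precisely that the composite from $\NFPSh\tau$ through $\extNFsem\tau$ and $\arHom{\unquotemap{}\vmap{},\quotemap\tau}$ coincides with the canonical isomorphism to $\Hom{\commayon(\simplefstarg),\NFobj\tau}$; unwinding the definition of $\nf\tau\semfun$ as $\arHom{\unquotemap{}\vmap{},\quotemap\tau}\comp\extTermsem\tau$ postcomposed with the Extended Yoneda isomorphism, and using $\extNFsem\tau = \extTermsem\tau \comp (\NFPSh\tau \rightmono \TermPSh\tau)$, we read off that $\nf{\tau,\Gamma}\semfun(N) = N$ for every normal term $N \in \NFPSh\tau(\Gamma)$.

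The second step is to apply this fixed-point property together with Corollaries~\ref{Corollary_Correctness_Two} and~\ref{Corollary_Correctness_Three}. For a term $t \in \TermPSh\tau(\Gamma)$, instantiating Corollary~\ref{Corollary_Correctness_Two} at the universal interpretation $\freesemfun$ into $\FreeCCC[\BaseTypeSet]$ gives $\Termsem\tau^{\freesemfun}(t) = \NFsem\tau^{\freesemfun}(\nf{\tau,\Gamma}\freesemfun(t))$ in the free cartesian closed category; by the universal property of $\FreeCCC[\BaseTypeSet]$ (or equivalently completeness of its equational theory, which is exactly how~(\ref{t_betaetaequal_f-nf(t)}) is derived in the discussion preceding the corollary), this forces $t \betaetaeq \nf{\tau,\Gamma}\freesemfun(t)$. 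Applying Corollary~\ref{Corollary_Correctness_Three} to this $\beta\eta$-equation yields $\nf{\tau,\Gamma}\semfun(t) = \nf{\tau,\Gamma}\semfun\bigl(\nf{\tau,\Gamma}\freesemfun(t)\bigr)$, and since $\nf{\tau,\Gamma}\freesemfun(t) \in \NFPSh\tau(\Gamma)$, the fixed-point property collapses the right-hand side to $\nf{\tau,\Gamma}\freesemfun(t)$. Hence $\nf{\tau,\Gamma}\semfun(t) = \nf{\tau,\Gamma}\freesemfun(t)$, which, being natural in $\Gamma$, is the desired identity in $\Set^{\F{}\comma\TypeClosure\BaseTypeSet}$.

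There is no real obstacle here beyond carefully chasing definitions: the genuine work lies in Theorem~\ref{THEOREM} (and the correctness corollaries), which we invoke as a black box. The only subtlety worth flagging is ensuring that the fixed-point property~(\ref{nf_Fixes_Normal_Forms}) and Corollary~\ref{Corollary_Correctness_Three} are used for the \emph{same} interpretation $\semfun$, while the bridge from an arbitrary term to a normal term is built using a \emph{different} interpretation $\freesemfun$; separating the two roles of $\semfun$ and $\freesemfun$ in the argument is what makes interpretation-independence fall out.
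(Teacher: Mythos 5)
Your proposal is correct and follows essentially the same route as the paper: the paper derives the fixed-point property~(\ref{nf_Fixes_Normal_Forms}) from diagram~(\ref{NF_eqn}) of Theorem~\ref{THEOREM}, and then combines it with Corollary~\ref{Corollary_Correctness_Three} and with Corollary~\ref{Corollary_Correctness_Two} in the form~(\ref{t_betaetaequal_f-nf(t)}) exactly as you do. Your explicit unwinding of why~(\ref{NF_eqn}) yields the fixed-point property (via $\extNFsem\tau$ factoring through $\extTermsem\tau$ along the embedding $\NFPSh\tau \rightmono \TermPSh\tau$) just makes precise a step the paper states without elaboration.
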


Summarising, we have obtained normalisation functions
$$
\simplenf{\tau,\Gamma}: 
  \TermPSh\tau(\Gamma) \rightarrow \NFPSh\tau(\Gamma)
\quad(\tau\in\TypeClosure\BaseTypeSet, 
      \Gamma\in\obj{\F{}\comma\TypeClosure\BaseTypeSet}) 
$$ satisfying the correctness properties below.
\begin{itemize}
\item
For all context renamings~$\rho: \Gamma \rightarrow \Gamma'$
in $\F{}\comma\TypeClosure\BaseTypeSet$, 
$$
(\simplenf{\tau,\Gamma}\ t)[\rho] = \simplenf{\tau,\Gamma'}(t[\rho])
$$
for every term $t \in \TermPSh\tau(\Gamma)$.

\item
For all normal terms $N\in\NFPSh\tau(\Gamma)$, 
$$\simplenf{\tau,\Gamma}(N) = N\quad.$$

\item
For all terms $t\in\TermPSh\tau(\Gamma)$, 
$$\simplenf{\tau,\Gamma}(t)\betaetaeq t\quad.$$

\item
For all terms $t,t'\in\TermPSh\tau(\Gamma)$, 
\begin{center}
if $t\betaetaeq t'$ then 
$\simplenf{\tau,\Gamma}(t) = \simplenf{\tau,\Gamma}(t')$\quad.
\end{center}
\end{itemize}

\paragraph{Normalisation algorithm.}
The simplest description of the normalisation function from which to extract
an algorithm is the one induced by the trivial interpretation 
$\trivsemfun$ 
of base types in the trivial cartesian closed category as, in this case, the
glueing category
$\Set^{\F{}
 \comma\TypeClosure\BaseTypeSet}\comma\clone{\trivsemfun\sembracket\fstarg}$ 
is simply (isomorphic to) the presheaf category
$\Set^{\F{}\comma\TypeClosure\BaseTypeSet}$ (recall
Example~\ref{Trivial_Glueing}).  In fact, previous categorical analysis of
normalisation by evaluation have centred around this
interpretation~\cite{AHS,Reynolds}.

Explicitly, the unquote and quote maps
\begin{equation} \label{UnquoteQuoteMaps}
\NEPSh\tau \longstackrightarrow{\unquotemap\tau} \semfun\sembracket\tau
\longstackrightarrow{\quotemap\tau} 
\NFPSh\tau 
\quad(\tau\in\TypeClosure\BaseTypeSet)
\end{equation}
in $\Set^{\F{}\comma\TypeClosure\BaseTypeSet}$, with respect to the 
interpretation of base types 
$\semfun: 
     \abasetype \mapsto \NEPSh\abasetype$,  
are (in the internal language of
$\Set^{\F{}\comma\TypeClosure\BaseTypeSet}$) as follows:
\begin{description}
\item[]1.%
\begin{tabular}[t]{l}
$\unquotemap\abasetype(M) = M$
\\[2mm]
$\quotemap\abasetype(M) = \norm(M)$, where 
$\norm:
\NEPSh\abasetype\stackrightarrow\iso\NFPSh\abasetype$~(see~(\ref{normISO}))
\end{tabular}

\item[]2.%
\begin{tabular}[t]{l}
$\unquotemap\TerminalType(M) = (\,)$
\\[2mm]
$\quotemap\TerminalType(\,) = \unitmap_\TerminalType(\,)$
\end{tabular}

\item[]3.%
\begin{tabular}[t]{l}
$\unquotemap{\tau\ProductType\tau'}(M)
 = (\ \unquotemap\tau(\fstmap^{(\tau')}_\tau\,M)\ ,\ 
     \unquotemap{\tau'}(\sndmap^{(\tau)}_{\tau'}\,M)\ )$
\\[2mm]
$\quotemap{\tau\ProductType\tau'}(x,x')
 = \pairmap_{\tau\ProductType\tau'}
     (\ \quotemap\tau(x)\ ,\
        \quotemap{\tau'}(x')\ )$
\end{tabular}

\item[]4.%
\begin{tabular}[t]{l}
$\unquotemap{\tau\ArrowType\tau'}(M)
 = \semlambdaop 
     {x^{\mbox{\scriptsize$\semfun\sembracket\tau$}}}
     {\unquotemap{\tau'}(\appmap^{(\tau)}_{\tau'}(M,\quotemap\tau\,x))}$
\\[2mm]
$\quotemap{\tau\ArrowType\tau'}(f)
 = \absmap_{\tau\ArrowType\tau'}
     ( \semlambdaop
         {v^{\VarPSh\tau}}
         {\quotemap{\tau'}(f(\unquotemap\tau(\varmap_\tau\,v)))}
       )$
\end{tabular}
\end{description}
and the normalisation function 
is given by
\begin{equation} \label{NormalisationFunction}
\simplenf{\tau,\Gamma}(t)
=
\quotemap\tau
  ( \semfun\sembracket{\Gamma\entails t:\tau}\,
      \pair{\unquotemap{\tau_i}(\varmap_{\tau_i}\,x_i)}_{i=1,n}
    )
\end{equation}
for all terms $t\in\TermPSh\tau(\Gamma)$ where 
$\Gamma=\pair{x_i:\tau_i}_{i=1,n}$. 

These functions can be directly implemented, for instance, in metalanguages
supporting abstract syntax with variable binding, like HOAS~\cite{HOAS},
\texttt{Fresh O'Caml}~\cite{FreshOCAML}, the Scope-and-Type Safe Universe of
Syntaxes with Binding~\cite{AllaisEtAl}, and the SOAS
Framework~\cite{FioreSzamozvancev}.  
Indeed, for concreteness, we here synthesise an elementary implementation in
Agda considered as a dependently-typed functional programming
language\footnote{The code is available from
  \url{www.cl.cam.ac.uk/~mpf23/Notes/Notes.html}.}.

\medskip

\begin{code}[hide]%
\>[0]\<%
\\
\>[0]\AgdaKeyword{open}\AgdaSpace{}%
\AgdaKeyword{import}\AgdaSpace{}%
\AgdaModule{Agda.Primitive}\<%
\\
\>[0]\AgdaKeyword{open}\AgdaSpace{}%
\AgdaKeyword{import}\AgdaSpace{}%
\AgdaModule{Agda.Builtin.Unit}\<%
\\
\>[0]\AgdaKeyword{open}\AgdaSpace{}%
\AgdaKeyword{import}\AgdaSpace{}%
\AgdaModule{Agda.Builtin.Nat}\<%
\\
\>[0]\AgdaKeyword{open}\AgdaSpace{}%
\AgdaKeyword{import}\AgdaSpace{}%
\AgdaModule{Agda.Builtin.Sigma}\<%
\\
\>[0]\AgdaKeyword{open}\AgdaSpace{}%
\AgdaKeyword{import}\AgdaSpace{}%
\AgdaModule{Agda.Builtin.Equality}\<%
\\
\\[\AgdaEmptyExtraSkip]%
\>[0]\AgdaComment{-- composition}\<%
\\
\>[0]\AgdaOperator{\AgdaFunction{\AgdaUnderscore{}∘\AgdaUnderscore{}}}\AgdaSpace{}%
\AgdaSymbol{:}%
\>[11I]\AgdaSymbol{\{}\AgdaBound{A}\AgdaSpace{}%
\AgdaSymbol{:}\AgdaSpace{}%
\AgdaPrimitiveType{Set}\AgdaSymbol{\}}\AgdaSpace{}%
\AgdaSymbol{\{}\AgdaBound{B}\AgdaSpace{}%
\AgdaSymbol{:}\AgdaSpace{}%
\AgdaBound{A}\AgdaSpace{}%
\AgdaSymbol{→}\AgdaSpace{}%
\AgdaPrimitiveType{Set}\AgdaSymbol{\}}\AgdaSpace{}%
\AgdaSymbol{\{}\AgdaBound{C}\AgdaSpace{}%
\AgdaSymbol{:}\AgdaSpace{}%
\AgdaSymbol{\{}\AgdaBound{x}\AgdaSpace{}%
\AgdaSymbol{:}\AgdaSpace{}%
\AgdaBound{A}\AgdaSymbol{\}}\AgdaSpace{}%
\AgdaSymbol{→}\AgdaSpace{}%
\AgdaBound{B}\AgdaSpace{}%
\AgdaBound{x}\AgdaSpace{}%
\AgdaSymbol{→}\AgdaSpace{}%
\AgdaPrimitiveType{Set}\AgdaSymbol{\}}\<%
\\
\>[.][@{}l@{}]\<[11I]%
\>[6]\AgdaSymbol{(}\AgdaBound{f}\AgdaSpace{}%
\AgdaSymbol{:}\AgdaSpace{}%
\AgdaSymbol{\{}\AgdaBound{x}\AgdaSpace{}%
\AgdaSymbol{:}\AgdaSpace{}%
\AgdaBound{A}\AgdaSymbol{\}(}\AgdaBound{y}\AgdaSpace{}%
\AgdaSymbol{:}\AgdaSpace{}%
\AgdaBound{B}\AgdaSpace{}%
\AgdaBound{x}\AgdaSymbol{)}\AgdaSpace{}%
\AgdaSymbol{→}\AgdaSpace{}%
\AgdaBound{C}\AgdaSpace{}%
\AgdaBound{y}\AgdaSymbol{)}\AgdaSpace{}%
\AgdaSymbol{(}\AgdaBound{g}\AgdaSpace{}%
\AgdaSymbol{:}\AgdaSpace{}%
\AgdaSymbol{(}\AgdaBound{x}\AgdaSpace{}%
\AgdaSymbol{:}\AgdaSpace{}%
\AgdaBound{A}\AgdaSymbol{)}\AgdaSpace{}%
\AgdaSymbol{→}\AgdaSpace{}%
\AgdaBound{B}\AgdaSpace{}%
\AgdaBound{x}\AgdaSymbol{)}\AgdaSpace{}%
\AgdaSymbol{(}\AgdaBound{x}\AgdaSpace{}%
\AgdaSymbol{:}\AgdaSpace{}%
\AgdaBound{A}\AgdaSymbol{)}\AgdaSpace{}%
\AgdaSymbol{→}\AgdaSpace{}%
\AgdaBound{C}\AgdaSpace{}%
\AgdaSymbol{(}\AgdaBound{g}\AgdaSpace{}%
\AgdaBound{x}\AgdaSymbol{)}\<%
\\
\>[0]\AgdaSymbol{(}\AgdaBound{f}\AgdaSpace{}%
\AgdaOperator{\AgdaFunction{∘}}\AgdaSpace{}%
\AgdaBound{g}\AgdaSymbol{)}\AgdaSpace{}%
\AgdaBound{x}\AgdaSpace{}%
\AgdaSymbol{=}\AgdaSpace{}%
\AgdaBound{f}\AgdaSpace{}%
\AgdaSymbol{(}\AgdaBound{g}\AgdaSpace{}%
\AgdaBound{x}\AgdaSymbol{)}\<%
\\
\\[\AgdaEmptyExtraSkip]%
\>[0]\AgdaComment{-- binary product}\<%
\\
\>[0]\AgdaOperator{\AgdaFunction{\AgdaUnderscore{}×\AgdaUnderscore{}}}\AgdaSpace{}%
\AgdaSymbol{:}\AgdaSpace{}%
\AgdaSymbol{\{}\AgdaBound{m}\AgdaSpace{}%
\AgdaBound{n}\AgdaSpace{}%
\AgdaSymbol{:}\AgdaSpace{}%
\AgdaPostulate{Level}\AgdaSymbol{\}}\AgdaSpace{}%
\AgdaSymbol{→}\AgdaSpace{}%
\AgdaPrimitiveType{Set}\AgdaSpace{}%
\AgdaBound{m}\AgdaSpace{}%
\AgdaSymbol{→}\AgdaSpace{}%
\AgdaPrimitiveType{Set}\AgdaSpace{}%
\AgdaBound{n}\AgdaSpace{}%
\AgdaSymbol{→}\AgdaSpace{}%
\AgdaPrimitiveType{Set}\AgdaSpace{}%
\AgdaSymbol{(}\AgdaBound{m}\AgdaSpace{}%
\AgdaOperator{\AgdaPrimitive{⊔}}\AgdaSpace{}%
\AgdaBound{n}\AgdaSymbol{)}\<%
\\
\>[0]\AgdaBound{A}\AgdaSpace{}%
\AgdaOperator{\AgdaFunction{×}}\AgdaSpace{}%
\AgdaBound{B}\AgdaSpace{}%
\AgdaSymbol{=}\AgdaSpace{}%
\AgdaRecord{Σ}\AgdaSpace{}%
\AgdaBound{A}\AgdaSpace{}%
\AgdaSymbol{(λ}\AgdaSpace{}%
\AgdaBound{\AgdaUnderscore{}}\AgdaSpace{}%
\AgdaSymbol{→}\AgdaSpace{}%
\AgdaBound{B}\AgdaSymbol{)}\<%
\\
\\[\AgdaEmptyExtraSkip]%
\>[0]\AgdaFunction{π₁}\AgdaSpace{}%
\AgdaSymbol{:}\AgdaSpace{}%
\AgdaSymbol{\{}\AgdaBound{A}\AgdaSpace{}%
\AgdaBound{B}\AgdaSpace{}%
\AgdaSymbol{:}\AgdaSpace{}%
\AgdaPrimitiveType{Set}\AgdaSymbol{\}}\AgdaSpace{}%
\AgdaSymbol{→}\AgdaSpace{}%
\AgdaBound{A}\AgdaSpace{}%
\AgdaOperator{\AgdaFunction{×}}\AgdaSpace{}%
\AgdaBound{B}\AgdaSpace{}%
\AgdaSymbol{→}\AgdaSpace{}%
\AgdaBound{A}\<%
\\
\>[0]\AgdaFunction{π₁}\AgdaSpace{}%
\AgdaSymbol{=}\AgdaSpace{}%
\AgdaField{fst}\<%
\\
\\[\AgdaEmptyExtraSkip]%
\>[0]\AgdaFunction{π₂}\AgdaSpace{}%
\AgdaSymbol{:}\AgdaSpace{}%
\AgdaSymbol{\{}\AgdaBound{A}\AgdaSpace{}%
\AgdaBound{B}\AgdaSpace{}%
\AgdaSymbol{:}\AgdaSpace{}%
\AgdaPrimitiveType{Set}\AgdaSymbol{\}}\AgdaSpace{}%
\AgdaSymbol{→}\AgdaSpace{}%
\AgdaBound{A}\AgdaSpace{}%
\AgdaOperator{\AgdaFunction{×}}\AgdaSpace{}%
\AgdaBound{B}\AgdaSpace{}%
\AgdaSymbol{→}\AgdaSpace{}%
\AgdaBound{B}\<%
\\
\>[0]\AgdaFunction{π₂}\AgdaSpace{}%
\AgdaSymbol{=}\AgdaSpace{}%
\AgdaField{snd}\<%
\end{code}

\mysubparagraph{Syntax.}
We consider simple types over a countably infinite set of base
types~(see~(\ref{SimpleTypes})): 
\begin{code}%
\>[0]\AgdaComment{-- base types}\<%
\\
\>[0]\AgdaFunction{T}\AgdaSpace{}%
\AgdaSymbol{:}\AgdaSpace{}%
\AgdaPrimitiveType{Set}\<%
\\
\>[0]\AgdaFunction{T}\AgdaSpace{}%
\AgdaSymbol{=}\AgdaSpace{}%
\AgdaDatatype{Nat}\<%
\\
\\[\AgdaEmptyExtraSkip]%
\>[0]\AgdaComment{-- simple types}\<%
\\
\>[0]\AgdaKeyword{data}\AgdaSpace{}%
\AgdaDatatype{𝓣}\AgdaSpace{}%
\AgdaSymbol{:}\AgdaSpace{}%
\AgdaPrimitiveType{Set}\AgdaSpace{}%
\AgdaKeyword{where}\<%
\\
\>[0][@{}l@{\AgdaIndent{0}}]%
\>[2]\AgdaInductiveConstructor{θ}\AgdaSpace{}%
\AgdaSymbol{:}\AgdaSpace{}%
\AgdaFunction{T}\AgdaSpace{}%
\AgdaSymbol{→}\AgdaSpace{}%
\AgdaDatatype{𝓣}\<%
\\
\>[2]\AgdaInductiveConstructor{υ}\AgdaSpace{}%
\AgdaSymbol{:}\AgdaSpace{}%
\AgdaDatatype{𝓣}\<%
\\
\>[2]\AgdaOperator{\AgdaInductiveConstructor{\AgdaUnderscore{}⋆\AgdaUnderscore{}}}\AgdaSpace{}%
\AgdaSymbol{:}\AgdaSpace{}%
\AgdaDatatype{𝓣}\AgdaSpace{}%
\AgdaSymbol{→}\AgdaSpace{}%
\AgdaDatatype{𝓣}\AgdaSpace{}%
\AgdaSymbol{→}\AgdaSpace{}%
\AgdaDatatype{𝓣}\<%
\\
\>[2]\AgdaOperator{\AgdaInductiveConstructor{\AgdaUnderscore{}⇒\AgdaUnderscore{}}}\AgdaSpace{}%
\AgdaSymbol{:}\AgdaSpace{}%
\AgdaDatatype{𝓣}\AgdaSpace{}%
\AgdaSymbol{→}\AgdaSpace{}%
\AgdaDatatype{𝓣}\AgdaSpace{}%
\AgdaSymbol{→}\AgdaSpace{}%
\AgdaDatatype{𝓣}\<%
\end{code}

Typing contexts~(Definition~\ref{ContextCat}) are inductively generated by
context extension from an empty context:
\begin{code}%
\>[0]\AgdaComment{-- typing contexts}\<%
\\
\>[0]\AgdaKeyword{data}\AgdaSpace{}%
\AgdaDatatype{𝔽↓}\AgdaSpace{}\AgdaSpace{}%
\AgdaSymbol{:}\AgdaSpace{}%
\AgdaPrimitiveType{Set}\AgdaSpace{}%
\AgdaSymbol{→}%
\>[17]\AgdaPrimitiveType{Set}\AgdaSpace{}%
\AgdaKeyword{where}\<%
\\
\>[0][@{}l@{\AgdaIndent{0}}]%
\>[2]\AgdaInductiveConstructor{·}\AgdaSpace{}%
\AgdaSymbol{:}\AgdaSpace{}%
\AgdaSymbol{\{}\AgdaBound{𝓧}\AgdaSpace{}%
\AgdaSymbol{:}\AgdaSpace{}%
\AgdaPrimitiveType{Set}\AgdaSymbol{\}}\AgdaSpace{}%
\AgdaSymbol{→}\AgdaSpace{}%
\AgdaDatatype{𝔽↓}\AgdaSpace{}%
\AgdaBound{𝓧}\<%
\\
\>[2]\AgdaOperator{\AgdaInductiveConstructor{\AgdaUnderscore{}::\AgdaUnderscore{}}}\AgdaSpace{}%
\AgdaSymbol{:}\AgdaSpace{}%
\AgdaSymbol{\{}\AgdaBound{𝓧}\AgdaSpace{}%
\AgdaSymbol{:}\AgdaSpace{}%
\AgdaPrimitiveType{Set}\AgdaSymbol{\}}\AgdaSpace{}%
\AgdaSymbol{→}\AgdaSpace{}%
\AgdaDatatype{𝔽↓}\AgdaSpace{}%
\AgdaBound{𝓧}\AgdaSpace{}%
\AgdaSymbol{→}\AgdaSpace{}%
\AgdaBound{𝓧}\AgdaSpace{}%
\AgdaSymbol{→}\AgdaSpace{}%
\AgdaDatatype{𝔽↓}\AgdaSpace{}%
\AgdaBound{𝓧}\<%
\end{code}
We then have a family of variable indices~(recall~(\ref{VPSh})) given as
follows:
\begin{code}%
\>[0]\AgdaComment{-- variable indices}\<%
\\
\>[0]\AgdaKeyword{data}\AgdaSpace{}%
\AgdaDatatype{V}\AgdaSpace{}%
\AgdaSymbol{:}\AgdaSpace{}%
\AgdaSymbol{\{}\AgdaBound{𝓧}\AgdaSpace{}%
\AgdaSymbol{:}\AgdaSpace{}%
\AgdaPrimitiveType{Set}\AgdaSymbol{\}}\AgdaSpace{}%
\AgdaSymbol{→}\AgdaSpace{}%
\AgdaBound{𝓧}\AgdaSpace{}%
\AgdaSymbol{→}\AgdaSpace{}%
\AgdaDatatype{𝔽↓}\AgdaSpace{}%
\AgdaBound{𝓧}\AgdaSpace{}%
\AgdaSymbol{→}\AgdaSpace{}%
\AgdaPrimitiveType{Set}\AgdaSpace{}%
\AgdaKeyword{where}\<%
\\
\>[0][@{}l@{\AgdaIndent{0}}]%
\>[2]\AgdaInductiveConstructor{•}\AgdaSpace{}%
\AgdaSymbol{:}\AgdaSpace{}%
\AgdaSymbol{\{}\AgdaBound{𝓧}\AgdaSpace{}%
\AgdaSymbol{:}\AgdaSpace{}%
\AgdaPrimitiveType{Set}\AgdaSymbol{\}}\AgdaSpace{}%
\AgdaSymbol{\{}\AgdaBound{α}\AgdaSpace{}%
\AgdaSymbol{:}\AgdaSpace{}%
\AgdaBound{𝓧}\AgdaSymbol{\}}\AgdaSpace{}%
\AgdaSymbol{\{}\AgdaBound{Γ}\AgdaSpace{}%
\AgdaSymbol{:}\AgdaSpace{}%
\AgdaDatatype{𝔽↓}\AgdaSpace{}%
\AgdaBound{𝓧}\AgdaSymbol{\}}\AgdaSpace{}%
\AgdaSymbol{→}\AgdaSpace{}%
\AgdaDatatype{V}\AgdaSpace{}%
\AgdaBound{α}\AgdaSpace{}%
\AgdaSymbol{(}\AgdaBound{Γ}\AgdaSpace{}%
\AgdaOperator{\AgdaInductiveConstructor{::}}\AgdaSpace{}%
\AgdaBound{α}\AgdaSymbol{)}\<%
\\
\>[2]\AgdaInductiveConstructor{↑}\AgdaSpace{}%
\AgdaSymbol{:}\AgdaSpace{}%
\AgdaSymbol{\{}\AgdaBound{𝓧}\AgdaSpace{}%
\AgdaSymbol{:}\AgdaSpace{}%
\AgdaPrimitiveType{Set}\AgdaSymbol{\}}\AgdaSpace{}%
\AgdaSymbol{\{}\AgdaBound{α}\AgdaSpace{}%
\AgdaBound{β}\AgdaSpace{}%
\AgdaSymbol{:}\AgdaSpace{}%
\AgdaBound{𝓧}\AgdaSymbol{\}}\AgdaSpace{}%
\AgdaSymbol{\{}\AgdaBound{Γ}\AgdaSpace{}%
\AgdaSymbol{:}\AgdaSpace{}%
\AgdaDatatype{𝔽↓}\AgdaSpace{}%
\AgdaBound{𝓧}\AgdaSymbol{\}}\AgdaSpace{}%
\AgdaSymbol{→}\AgdaSpace{}%
\AgdaDatatype{V}\AgdaSpace{}%
\AgdaBound{β}\AgdaSpace{}%
\AgdaBound{Γ}\AgdaSpace{}%
\AgdaSymbol{→}\AgdaSpace{}%
\AgdaDatatype{V}\AgdaSpace{}%
\AgdaBound{β}\AgdaSpace{}%
\AgdaSymbol{(}\AgdaBound{Γ}\AgdaSpace{}%
\AgdaOperator{\AgdaInductiveConstructor{::}}\AgdaSpace{}%
\AgdaBound{α}\AgdaSymbol{)}\<%
\end{code}
for which context renamings~(Definition~\ref{ContextCat}) are considered:
\begin{code}%
\>[0]\AgdaComment{-- context renamings}\<%
\\
\>[0]\AgdaFunction{𝔽↓₀}\AgdaSpace{}\AgdaSpace{}%
\AgdaSymbol{:}\AgdaSpace{}%
\AgdaSymbol{(}\AgdaBound{𝓣}\AgdaSpace{}%
\AgdaSymbol{:}\AgdaSpace{}%
\AgdaPrimitiveType{Set}\AgdaSymbol{)}\AgdaSpace{}%
\AgdaSymbol{→}\AgdaSpace{}%
\AgdaDatatype{𝔽↓}\AgdaSpace{}%
\AgdaBound{𝓣}\AgdaSpace{}%
\AgdaOperator{\AgdaFunction{×}}\AgdaSpace{}%
\AgdaDatatype{𝔽↓}\AgdaSpace{}%
\AgdaBound{𝓣}\AgdaSpace{}%
\AgdaSymbol{→}\AgdaSpace{}%
\AgdaPrimitiveType{Set}\<%
\\
\>[0]\AgdaFunction{𝔽↓₀}\AgdaSpace{}%
\AgdaBound{𝓣}\AgdaSpace{}%
\AgdaSymbol{(}\AgdaBound{Δ}\AgdaSpace{}%
\AgdaOperator{\AgdaInductiveConstructor{,}}\AgdaSpace{}%
\AgdaBound{Γ}\AgdaSpace{}%
\AgdaSymbol{)}\AgdaSpace{}%
\AgdaSymbol{=}\AgdaSpace{}%
\AgdaSymbol{\{}\AgdaBound{α}\AgdaSpace{}%
\AgdaSymbol{:}\AgdaSpace{}%
\AgdaBound{𝓣}\AgdaSymbol{\}}\AgdaSpace{}%
\AgdaSymbol{→}\AgdaSpace{}%
\AgdaDatatype{V}\AgdaSpace{}%
\AgdaBound{α}\AgdaSpace{}%
\AgdaBound{Δ}\AgdaSpace{}%
\AgdaSymbol{→}\AgdaSpace{}%
\AgdaDatatype{V}\AgdaSpace{}%
\AgdaBound{α}\AgdaSpace{}%
\AgdaBound{Γ}\<%
\end{code}

The abstract syntax of simply typed terms~(see~(\ref{LamInitialAlg})) is
implemented by the inductive family below:
\begin{code}%
\>[0]\AgdaComment{-- simply typed terms}\<%
\\
\>[0]\AgdaKeyword{data}\AgdaSpace{}%
\AgdaDatatype{𝓛}\AgdaSpace{}%
\AgdaSymbol{:}\AgdaSpace{}%
\AgdaDatatype{𝓣}\AgdaSpace{}%
\AgdaSymbol{→}\AgdaSpace{}%
\AgdaDatatype{𝔽↓}\AgdaSpace{}%
\AgdaDatatype{𝓣}\AgdaSpace{}%
\AgdaSymbol{→}\AgdaSpace{}%
\AgdaPrimitiveType{Set}\AgdaSpace{}%
\AgdaKeyword{where}\<%
\\
\>[0][@{}l@{\AgdaIndent{0}}]%
\>[2]\AgdaInductiveConstructor{var}\AgdaSpace{}%
\AgdaSymbol{:}\AgdaSpace{}%
\AgdaSymbol{\{}\AgdaBound{α}\AgdaSpace{}%
\AgdaSymbol{:}\AgdaSpace{}%
\AgdaDatatype{𝓣}\AgdaSymbol{\}}\AgdaSpace{}%
\AgdaSymbol{\{}\AgdaBound{Γ}\AgdaSpace{}%
\AgdaSymbol{:}\AgdaSpace{}%
\AgdaDatatype{𝔽↓}\AgdaSpace{}%
\AgdaDatatype{𝓣}\AgdaSymbol{\}}\AgdaSpace{}%
\AgdaSymbol{→}\AgdaSpace{}%
\AgdaDatatype{V}\AgdaSpace{}%
\AgdaBound{α}\AgdaSpace{}%
\AgdaBound{Γ}\AgdaSpace{}%
\AgdaSymbol{→}\AgdaSpace{}%
\AgdaDatatype{𝓛}\AgdaSpace{}%
\AgdaBound{α}\AgdaSpace{}%
\AgdaBound{Γ}\<%
\\
\>[2]\AgdaInductiveConstructor{unit}\AgdaSpace{}%
\AgdaSymbol{:}\AgdaSpace{}%
\AgdaSymbol{\{}\AgdaBound{Γ}\AgdaSpace{}%
\AgdaSymbol{:}\AgdaSpace{}%
\AgdaDatatype{𝔽↓}\AgdaSpace{}%
\AgdaDatatype{𝓣}\AgdaSymbol{\}}\AgdaSpace{}%
\AgdaSymbol{→}\AgdaSpace{}%
\AgdaDatatype{𝓛}\AgdaSpace{}%
\AgdaInductiveConstructor{υ}\AgdaSpace{}%
\AgdaBound{Γ}\<%
\\
\>[2]\AgdaInductiveConstructor{pair}\AgdaSpace{}%
\AgdaSymbol{:}\AgdaSpace{}%
\AgdaSymbol{\{}\AgdaBound{α}\AgdaSpace{}%
\AgdaBound{β}\AgdaSpace{}%
\AgdaSymbol{:}\AgdaSpace{}%
\AgdaDatatype{𝓣}\AgdaSymbol{\}}\AgdaSpace{}%
\AgdaSymbol{\{}\AgdaBound{Γ}\AgdaSpace{}%
\AgdaSymbol{:}\AgdaSpace{}%
\AgdaDatatype{𝔽↓}\AgdaSpace{}%
\AgdaDatatype{𝓣}\AgdaSymbol{\}}\AgdaSpace{}%
\AgdaSymbol{→}\AgdaSpace{}%
\AgdaDatatype{𝓛}\AgdaSpace{}%
\AgdaBound{α}\AgdaSpace{}%
\AgdaBound{Γ}\AgdaSpace{}%
\AgdaSymbol{→}\AgdaSpace{}%
\AgdaDatatype{𝓛}\AgdaSpace{}%
\AgdaBound{β}\AgdaSpace{}%
\AgdaBound{Γ}\AgdaSpace{}%
\AgdaSymbol{→}\AgdaSpace{}%
\AgdaDatatype{𝓛}\AgdaSpace{}%
\AgdaSymbol{(}\AgdaBound{α}\AgdaSpace{}%
\AgdaOperator{\AgdaInductiveConstructor{⋆}}\AgdaSpace{}%
\AgdaBound{β}\AgdaSymbol{)}\AgdaSpace{}%
\AgdaBound{Γ}\<%
\\
\>[2]\AgdaInductiveConstructor{fst₀}\AgdaSpace{}%
\AgdaSymbol{:}\AgdaSpace{}%
\AgdaSymbol{\{}\AgdaBound{α}\AgdaSpace{}%
\AgdaBound{β}\AgdaSpace{}%
\AgdaSymbol{:}\AgdaSpace{}%
\AgdaDatatype{𝓣}\AgdaSymbol{\}}%
\>[20]\AgdaSymbol{\{}\AgdaBound{Γ}\AgdaSpace{}%
\AgdaSymbol{:}\AgdaSpace{}%
\AgdaDatatype{𝔽↓}\AgdaSpace{}%
\AgdaDatatype{𝓣}\AgdaSymbol{\}}\AgdaSpace{}%
\AgdaSymbol{→}\AgdaSpace{}%
\AgdaDatatype{𝓛}\AgdaSpace{}%
\AgdaSymbol{(}\AgdaBound{α}\AgdaSpace{}%
\AgdaOperator{\AgdaInductiveConstructor{⋆}}\AgdaSpace{}%
\AgdaBound{β}\AgdaSymbol{)}\AgdaSpace{}%
\AgdaBound{Γ}\AgdaSpace{}%
\AgdaSymbol{→}\AgdaSpace{}%
\AgdaDatatype{𝓛}\AgdaSpace{}%
\AgdaBound{α}\AgdaSpace{}%
\AgdaBound{Γ}\<%
\\
\>[2]\AgdaInductiveConstructor{snd₀}\AgdaSpace{}%
\AgdaSymbol{:}\AgdaSpace{}%
\AgdaSymbol{\{}\AgdaBound{α}\AgdaSpace{}%
\AgdaBound{β}\AgdaSpace{}%
\AgdaSymbol{:}\AgdaSpace{}%
\AgdaDatatype{𝓣}\AgdaSymbol{\}}\AgdaSpace{}%
\AgdaSymbol{\{}\AgdaBound{Γ}\AgdaSpace{}%
\AgdaSymbol{:}\AgdaSpace{}%
\AgdaDatatype{𝔽↓}\AgdaSpace{}%
\AgdaDatatype{𝓣}\AgdaSymbol{\}}\AgdaSpace{}%
\AgdaSymbol{→}\AgdaSpace{}%
\AgdaDatatype{𝓛}\AgdaSpace{}%
\AgdaSymbol{(}\AgdaBound{α}\AgdaSpace{}%
\AgdaOperator{\AgdaInductiveConstructor{⋆}}\AgdaSpace{}%
\AgdaBound{β}\AgdaSymbol{)}\AgdaSpace{}%
\AgdaBound{Γ}\AgdaSpace{}%
\AgdaSymbol{→}\AgdaSpace{}%
\AgdaDatatype{𝓛}\AgdaSpace{}%
\AgdaBound{β}\AgdaSpace{}%
\AgdaBound{Γ}\<%
\\
\>[2]\AgdaInductiveConstructor{abs}\AgdaSpace{}%
\AgdaSymbol{:}\AgdaSpace{}%
\AgdaSymbol{\{}\AgdaBound{α}\AgdaSpace{}%
\AgdaBound{β}\AgdaSpace{}%
\AgdaSymbol{:}\AgdaSpace{}%
\AgdaDatatype{𝓣}\AgdaSymbol{\}}\AgdaSpace{}%
\AgdaSymbol{\{}\AgdaBound{Γ}\AgdaSpace{}%
\AgdaSymbol{:}\AgdaSpace{}%
\AgdaDatatype{𝔽↓}\AgdaSpace{}%
\AgdaDatatype{𝓣}\AgdaSymbol{\}}\AgdaSpace{}%
\AgdaSymbol{→}\AgdaSpace{}%
\AgdaDatatype{𝓛}\AgdaSpace{}%
\AgdaBound{β}\AgdaSpace{}%
\AgdaSymbol{(}\AgdaBound{Γ}\AgdaSpace{}%
\AgdaOperator{\AgdaInductiveConstructor{::}}\AgdaSpace{}%
\AgdaBound{α}\AgdaSpace{}%
\AgdaSymbol{)}\AgdaSpace{}%
\AgdaSymbol{→}\AgdaSpace{}%
\AgdaDatatype{𝓛}\AgdaSpace{}%
\AgdaSymbol{(}\AgdaBound{α}\AgdaSpace{}%
\AgdaOperator{\AgdaInductiveConstructor{⇒}}\AgdaSpace{}%
\AgdaBound{β}\AgdaSymbol{)}\AgdaSpace{}%
\AgdaBound{Γ}\<%
\\
\>[2]\AgdaInductiveConstructor{app}\AgdaSpace{}%
\AgdaSymbol{:}\AgdaSpace{}%
\AgdaSymbol{\{}\AgdaBound{α}\AgdaSpace{}%
\AgdaBound{β}\AgdaSpace{}%
\AgdaSymbol{:}\AgdaSpace{}%
\AgdaDatatype{𝓣}\AgdaSymbol{\}}\AgdaSpace{}%
\AgdaSymbol{\{}\AgdaBound{Γ}\AgdaSpace{}%
\AgdaSymbol{:}\AgdaSpace{}%
\AgdaDatatype{𝔽↓}\AgdaSpace{}%
\AgdaDatatype{𝓣}\AgdaSymbol{\}}\AgdaSpace{}%
\AgdaSymbol{→}\AgdaSpace{}%
\AgdaDatatype{𝓛}\AgdaSpace{}%
\AgdaSymbol{(}\AgdaBound{α}\AgdaSpace{}%
\AgdaOperator{\AgdaInductiveConstructor{⇒}}\AgdaSpace{}%
\AgdaBound{β}\AgdaSymbol{)}\AgdaSpace{}%
\AgdaBound{Γ}\AgdaSpace{}%
\AgdaSymbol{→}\AgdaSpace{}%
\AgdaDatatype{𝓛}\AgdaSpace{}%
\AgdaBound{α}\AgdaSpace{}%
\AgdaBound{Γ}\AgdaSpace{}%
\AgdaSymbol{→}\AgdaSpace{}%
\AgdaDatatype{𝓛}\AgdaSpace{}%
\AgdaBound{β}\AgdaSpace{}%
\AgdaBound{Γ}\<%
\end{code}
Analogously, the abstract syntax of neutral and normal
terms~(see~(\ref{NEandNFPShsOne}) and~(\ref{NEandNFPShsTwo})) is implemented
by the following mutually-inductive families:
\begin{code}%
\>[0]\AgdaComment{-- neutral and normal terms}\<%
\\
\>[0]\AgdaKeyword{mutual}\<%
\\
\>[0][@{}l@{\AgdaIndent{0}}]%
\>[2]\AgdaKeyword{data}\AgdaSpace{}%
\AgdaDatatype{𝓜}\AgdaSpace{}%
\AgdaSymbol{:}\AgdaSpace{}%
\AgdaDatatype{𝓣}\AgdaSpace{}%
\AgdaSymbol{→}\AgdaSpace{}%
\AgdaDatatype{𝔽↓}\AgdaSpace{}%
\AgdaDatatype{𝓣}\AgdaSpace{}%
\AgdaSymbol{→}\AgdaSpace{}%
\AgdaPrimitiveType{Set}\AgdaSpace{}%
\AgdaKeyword{where}\<%
\\
\>[2][@{}l@{\AgdaIndent{0}}]%
\>[4]\AgdaInductiveConstructor{varₘ}\AgdaSpace{}%
\AgdaSymbol{:}\AgdaSpace{}%
\AgdaSymbol{\{}\AgdaBound{α}\AgdaSpace{}%
\AgdaSymbol{:}\AgdaSpace{}%
\AgdaDatatype{𝓣}\AgdaSymbol{\}}\AgdaSpace{}%
\AgdaSymbol{\{}\AgdaBound{Γ}\AgdaSpace{}%
\AgdaSymbol{:}\AgdaSpace{}%
\AgdaDatatype{𝔽↓}\AgdaSpace{}%
\AgdaDatatype{𝓣}\AgdaSymbol{\}}\AgdaSpace{}%
\AgdaSymbol{→}\AgdaSpace{}%
\AgdaDatatype{V}\AgdaSpace{}%
\AgdaBound{α}\AgdaSpace{}%
\AgdaBound{Γ}\AgdaSpace{}%
\AgdaSymbol{→}\AgdaSpace{}%
\AgdaDatatype{𝓜}\AgdaSpace{}%
\AgdaBound{α}\AgdaSpace{}%
\AgdaBound{Γ}\<%
\\
\>[4]\AgdaInductiveConstructor{fstₘ}\AgdaSpace{}%
\AgdaSymbol{:}\AgdaSpace{}%
\AgdaSymbol{\{}\AgdaBound{α}\AgdaSpace{}%
\AgdaBound{β}\AgdaSpace{}%
\AgdaSymbol{:}\AgdaSpace{}%
\AgdaDatatype{𝓣}\AgdaSymbol{\}}\AgdaSpace{}%
\AgdaSymbol{\{}\AgdaBound{Γ}\AgdaSpace{}%
\AgdaSymbol{:}\AgdaSpace{}%
\AgdaDatatype{𝔽↓}\AgdaSpace{}%
\AgdaDatatype{𝓣}\AgdaSymbol{\}}\AgdaSpace{}%
\AgdaSymbol{→}\AgdaSpace{}%
\AgdaDatatype{𝓜}\AgdaSpace{}%
\AgdaSymbol{(}\AgdaBound{α}\AgdaSpace{}%
\AgdaOperator{\AgdaInductiveConstructor{⋆}}\AgdaSpace{}%
\AgdaBound{β}\AgdaSymbol{)}\AgdaSpace{}%
\AgdaBound{Γ}\AgdaSpace{}%
\AgdaSymbol{→}\AgdaSpace{}%
\AgdaDatatype{𝓜}\AgdaSpace{}%
\AgdaBound{α}\AgdaSpace{}%
\AgdaBound{Γ}\<%
\\
\>[4]\AgdaInductiveConstructor{sndₘ}\AgdaSpace{}%
\AgdaSymbol{:}\AgdaSpace{}%
\AgdaSymbol{\{}\AgdaBound{α}\AgdaSpace{}%
\AgdaBound{β}\AgdaSpace{}%
\AgdaSymbol{:}\AgdaSpace{}%
\AgdaDatatype{𝓣}\AgdaSymbol{\}}\AgdaSpace{}%
\AgdaSymbol{\{}\AgdaBound{Γ}\AgdaSpace{}%
\AgdaSymbol{:}\AgdaSpace{}%
\AgdaDatatype{𝔽↓}\AgdaSpace{}%
\AgdaDatatype{𝓣}\AgdaSymbol{\}}\AgdaSpace{}%
\AgdaSymbol{→}\AgdaSpace{}%
\AgdaDatatype{𝓜}\AgdaSpace{}%
\AgdaSymbol{(}\AgdaBound{α}\AgdaSpace{}%
\AgdaOperator{\AgdaInductiveConstructor{⋆}}\AgdaSpace{}%
\AgdaBound{β}\AgdaSymbol{)}\AgdaSpace{}%
\AgdaBound{Γ}\AgdaSpace{}%
\AgdaSymbol{→}\AgdaSpace{}%
\AgdaDatatype{𝓜}\AgdaSpace{}%
\AgdaBound{β}\AgdaSpace{}%
\AgdaBound{Γ}\<%
\\
\>[4]\AgdaInductiveConstructor{appₘ}\AgdaSpace{}%
\AgdaSymbol{:}\AgdaSpace{}%
\AgdaSymbol{\{}\AgdaBound{α}\AgdaSpace{}%
\AgdaBound{β}\AgdaSpace{}%
\AgdaSymbol{:}\AgdaSpace{}%
\AgdaDatatype{𝓣}\AgdaSymbol{\}}\AgdaSpace{}%
\AgdaSymbol{\{}\AgdaBound{Γ}\AgdaSpace{}%
\AgdaSymbol{:}\AgdaSpace{}%
\AgdaDatatype{𝔽↓}\AgdaSpace{}%
\AgdaDatatype{𝓣}\AgdaSymbol{\}}\AgdaSpace{}%
\AgdaSymbol{→}\AgdaSpace{}%
\AgdaDatatype{𝓜}\AgdaSpace{}%
\AgdaSymbol{(}\AgdaBound{α}\AgdaSpace{}%
\AgdaOperator{\AgdaInductiveConstructor{⇒}}\AgdaSpace{}%
\AgdaBound{β}\AgdaSymbol{)}\AgdaSpace{}%
\AgdaBound{Γ}\AgdaSpace{}%
\AgdaSymbol{→}\AgdaSpace{}%
\AgdaDatatype{𝓝}\AgdaSpace{}%
\AgdaBound{α}\AgdaSpace{}%
\AgdaBound{Γ}\AgdaSpace{}%
\AgdaSymbol{→}\AgdaSpace{}%
\AgdaDatatype{𝓜}\AgdaSpace{}%
\AgdaBound{β}\AgdaSpace{}%
\AgdaBound{Γ}\<%
\\
\>[0]\<%
\\
\>[2]\AgdaKeyword{data}\AgdaSpace{}%
\AgdaDatatype{𝓝}\AgdaSpace{}%
\AgdaSymbol{:}\AgdaSpace{}%
\AgdaDatatype{𝓣}\AgdaSpace{}%
\AgdaSymbol{→}\AgdaSpace{}%
\AgdaDatatype{𝔽↓}\AgdaSpace{}%
\AgdaDatatype{𝓣}\AgdaSpace{}%
\AgdaSymbol{→}\AgdaSpace{}%
\AgdaPrimitiveType{Set}\AgdaSpace{}%
\AgdaKeyword{where}\<%
\\
\>[2][@{}l@{\AgdaIndent{0}}]%
\>[4]\AgdaInductiveConstructor{varₙ}\AgdaSpace{}%
\AgdaSymbol{:}\AgdaSpace{}%
\AgdaSymbol{\{}\AgdaBound{i}\AgdaSpace{}%
\AgdaSymbol{:}\AgdaSpace{}%
\AgdaFunction{T}\AgdaSymbol{\}}\AgdaSpace{}%
\AgdaSymbol{\{}\AgdaBound{Γ}\AgdaSpace{}%
\AgdaSymbol{:}\AgdaSpace{}%
\AgdaDatatype{𝔽↓}\AgdaSpace{}%
\AgdaDatatype{𝓣}\AgdaSymbol{\}}\AgdaSpace{}%
\AgdaSymbol{→}\AgdaSpace{}%
\AgdaDatatype{V}\AgdaSpace{}%
\AgdaSymbol{(}\AgdaInductiveConstructor{θ}\AgdaSpace{}%
\AgdaBound{i}\AgdaSymbol{)}\AgdaSpace{}%
\AgdaBound{Γ}\AgdaSpace{}%
\AgdaSymbol{→}\AgdaSpace{}%
\AgdaDatatype{𝓝}\AgdaSpace{}%
\AgdaSymbol{(}\AgdaInductiveConstructor{θ}\AgdaSpace{}%
\AgdaBound{i}\AgdaSymbol{)}\AgdaSpace{}%
\AgdaBound{Γ}\<%
\\
\>[4]\AgdaInductiveConstructor{fstₙ}\AgdaSpace{}%
\AgdaSymbol{:}\AgdaSpace{}%
\AgdaSymbol{\{}\AgdaBound{i}\AgdaSpace{}%
\AgdaSymbol{:}\AgdaSpace{}%
\AgdaFunction{T}\AgdaSymbol{\}}\AgdaSpace{}%
\AgdaSymbol{\{}\AgdaBound{β}\AgdaSpace{}%
\AgdaSymbol{:}\AgdaSpace{}%
\AgdaDatatype{𝓣}\AgdaSymbol{\}}\AgdaSpace{}%
\AgdaSymbol{\{}\AgdaBound{Γ}\AgdaSpace{}%
\AgdaSymbol{:}\AgdaSpace{}%
\AgdaDatatype{𝔽↓}\AgdaSpace{}%
\AgdaDatatype{𝓣}\AgdaSymbol{\}}\AgdaSpace{}%
\AgdaSymbol{→}\AgdaSpace{}%
\AgdaDatatype{𝓜}\AgdaSpace{}%
\AgdaSymbol{(}\AgdaInductiveConstructor{θ}\AgdaSpace{}%
\AgdaBound{i}\AgdaSpace{}%
\AgdaOperator{\AgdaInductiveConstructor{⋆}}\AgdaSpace{}%
\AgdaBound{β}\AgdaSymbol{)}\AgdaSpace{}%
\AgdaBound{Γ}\AgdaSpace{}%
\AgdaSymbol{→}\AgdaSpace{}%
\AgdaDatatype{𝓝}\AgdaSpace{}%
\AgdaSymbol{(}\AgdaInductiveConstructor{θ}\AgdaSpace{}%
\AgdaBound{i}\AgdaSymbol{)}\AgdaSpace{}%
\AgdaBound{Γ}\<%
\\
\>[4]\AgdaInductiveConstructor{sndₙ}\AgdaSpace{}%
\AgdaSymbol{:}\AgdaSpace{}%
\AgdaSymbol{\{}\AgdaBound{i}\AgdaSpace{}%
\AgdaSymbol{:}\AgdaSpace{}%
\AgdaFunction{T}\AgdaSymbol{\}}\AgdaSpace{}%
\AgdaSymbol{\{}\AgdaBound{α}\AgdaSpace{}%
\AgdaSymbol{:}\AgdaSpace{}%
\AgdaDatatype{𝓣}\AgdaSymbol{\}}\AgdaSpace{}%
\AgdaSymbol{\{}\AgdaBound{Γ}\AgdaSpace{}%
\AgdaSymbol{:}\AgdaSpace{}%
\AgdaDatatype{𝔽↓}\AgdaSpace{}%
\AgdaDatatype{𝓣}\AgdaSymbol{\}}\AgdaSpace{}%
\AgdaSymbol{→}\AgdaSpace{}%
\AgdaDatatype{𝓜}\AgdaSpace{}%
\AgdaSymbol{(}\AgdaBound{α}\AgdaSpace{}%
\AgdaOperator{\AgdaInductiveConstructor{⋆}}\AgdaSpace{}%
\AgdaInductiveConstructor{θ}\AgdaSpace{}%
\AgdaBound{i}\AgdaSymbol{)}\AgdaSpace{}%
\AgdaBound{Γ}\AgdaSpace{}%
\AgdaSymbol{→}\AgdaSpace{}%
\AgdaDatatype{𝓝}\AgdaSpace{}%
\AgdaSymbol{(}\AgdaInductiveConstructor{θ}\AgdaSpace{}%
\AgdaBound{i}\AgdaSymbol{)}\AgdaSpace{}%
\AgdaBound{Γ}\<%
\\
\>[4]\AgdaInductiveConstructor{appₙ}\AgdaSpace{}%
\AgdaSymbol{:}\AgdaSpace{}%
\AgdaSymbol{\{}\AgdaBound{i}\AgdaSpace{}%
\AgdaSymbol{:}\AgdaSpace{}%
\AgdaFunction{T}\AgdaSymbol{\}}\AgdaSpace{}%
\AgdaSymbol{\{}\AgdaBound{α}\AgdaSpace{}%
\AgdaSymbol{:}\AgdaSpace{}%
\AgdaDatatype{𝓣}\AgdaSymbol{\}}\AgdaSpace{}%
\AgdaSymbol{\{}\AgdaBound{Γ}\AgdaSpace{}%
\AgdaSymbol{:}\AgdaSpace{}%
\AgdaDatatype{𝔽↓}\AgdaSpace{}%
\AgdaDatatype{𝓣}\AgdaSymbol{\}}\AgdaSpace{}%
\AgdaSymbol{→}\AgdaSpace{}%
\AgdaDatatype{𝓜}\AgdaSpace{}%
\AgdaSymbol{(}\AgdaBound{α}\AgdaSpace{}%
\AgdaOperator{\AgdaInductiveConstructor{⇒}}\AgdaSpace{}%
\AgdaInductiveConstructor{θ}\AgdaSpace{}%
\AgdaBound{i}\AgdaSymbol{)}\AgdaSpace{}%
\AgdaBound{Γ}\AgdaSpace{}%
\AgdaSymbol{→}\AgdaSpace{}%
\AgdaDatatype{𝓝}\AgdaSpace{}%
\AgdaBound{α}\AgdaSpace{}%
\AgdaBound{Γ}\AgdaSpace{}%
\AgdaSymbol{→}\AgdaSpace{}%
\AgdaDatatype{𝓝}\AgdaSpace{}%
\AgdaSymbol{(}\AgdaInductiveConstructor{θ}\AgdaSpace{}%
\AgdaBound{i}\AgdaSymbol{)}\AgdaSpace{}%
\AgdaBound{Γ}\<%
\\
\>[4]\AgdaInductiveConstructor{unitₙ}\AgdaSpace{}%
\AgdaSymbol{:}\AgdaSpace{}%
\AgdaSymbol{\{}\AgdaBound{Γ}\AgdaSpace{}%
\AgdaSymbol{:}\AgdaSpace{}%
\AgdaDatatype{𝔽↓}\AgdaSpace{}%
\AgdaDatatype{𝓣}\AgdaSymbol{\}}\AgdaSpace{}%
\AgdaSymbol{→}\AgdaSpace{}%
\AgdaDatatype{𝓝}\AgdaSpace{}%
\AgdaInductiveConstructor{υ}\AgdaSpace{}%
\AgdaBound{Γ}\<%
\\
\>[4]\AgdaInductiveConstructor{pairₙ}\AgdaSpace{}%
\AgdaSymbol{:}\AgdaSpace{}%
\AgdaSymbol{\{}\AgdaBound{α}\AgdaSpace{}%
\AgdaBound{β}\AgdaSpace{}%
\AgdaSymbol{:}\AgdaSpace{}%
\AgdaDatatype{𝓣}\AgdaSymbol{\}}\AgdaSpace{}%
\AgdaSymbol{\{}\AgdaBound{Γ}\AgdaSpace{}%
\AgdaSymbol{:}\AgdaSpace{}%
\AgdaDatatype{𝔽↓}\AgdaSpace{}%
\AgdaDatatype{𝓣}\AgdaSymbol{\}}\AgdaSpace{}%
\AgdaSymbol{→}\AgdaSpace{}%
\AgdaDatatype{𝓝}\AgdaSpace{}%
\AgdaBound{α}\AgdaSpace{}%
\AgdaBound{Γ}\AgdaSpace{}%
\AgdaSymbol{→}\AgdaSpace{}%
\AgdaDatatype{𝓝}\AgdaSpace{}%
\AgdaBound{β}\AgdaSpace{}%
\AgdaBound{Γ}\AgdaSpace{}%
\AgdaSymbol{→}\AgdaSpace{}%
\AgdaDatatype{𝓝}\AgdaSpace{}%
\AgdaSymbol{(}\AgdaBound{α}\AgdaSpace{}%
\AgdaOperator{\AgdaInductiveConstructor{⋆}}\AgdaSpace{}%
\AgdaBound{β}\AgdaSymbol{)}\AgdaSpace{}%
\AgdaBound{Γ}\<%
\\
\>[4]\AgdaInductiveConstructor{absₙ}\AgdaSpace{}%
\AgdaSymbol{:}\AgdaSpace{}%
\AgdaSymbol{\{}\AgdaBound{α}\AgdaSpace{}%
\AgdaBound{β}\AgdaSpace{}%
\AgdaSymbol{:}\AgdaSpace{}%
\AgdaDatatype{𝓣}\AgdaSymbol{\}}\AgdaSpace{}%
\AgdaSymbol{\{}\AgdaBound{Γ}\AgdaSpace{}%
\AgdaSymbol{:}\AgdaSpace{}%
\AgdaDatatype{𝔽↓}\AgdaSpace{}%
\AgdaDatatype{𝓣}\AgdaSymbol{\}}\AgdaSpace{}%
\AgdaSymbol{→}\AgdaSpace{}%
\AgdaDatatype{𝓝}\AgdaSpace{}%
\AgdaBound{β}\AgdaSpace{}%
\AgdaSymbol{(}\AgdaBound{Γ}\AgdaSpace{}%
\AgdaOperator{\AgdaInductiveConstructor{::}}\AgdaSpace{}%
\AgdaBound{α}\AgdaSymbol{)}\AgdaSpace{}%
\AgdaSymbol{→}\AgdaSpace{}%
\AgdaDatatype{𝓝}\AgdaSpace{}%
\AgdaSymbol{(}\AgdaBound{α}\AgdaSpace{}%
\AgdaOperator{\AgdaInductiveConstructor{⇒}}\AgdaSpace{}%
\AgdaBound{β}\AgdaSymbol{)}\AgdaSpace{}%
\AgdaBound{Γ}\<%
\end{code}
Their presheaf actions~(recall~(\ref{Neutral_Renaming_Invariance})
and~(\ref{Normal_Renaming_Invariance})) will be needed:
\begin{code}%
\>[0]\AgdaComment{-- neutral and normal presheaf actions}\<%
\\
\>[0]\AgdaKeyword{mutual}\<%
\\
\>[0][@{}l@{\AgdaIndent{0}}]%
\>[2]\AgdaOperator{\AgdaFunction{\AgdaUnderscore{}[\AgdaUnderscore{}]ₘ}}\AgdaSpace{}%
\AgdaSymbol{:}\AgdaSpace{}%
\AgdaSymbol{\{}\AgdaBound{α}\AgdaSpace{}%
\AgdaSymbol{:}\AgdaSpace{}%
\AgdaDatatype{𝓣}\AgdaSymbol{\}}\AgdaSpace{}%
\AgdaSymbol{\{}\AgdaBound{Δ}\AgdaSpace{}%
\AgdaBound{Γ}\AgdaSpace{}%
\AgdaSymbol{:}\AgdaSpace{}%
\AgdaDatatype{𝔽↓}\AgdaSpace{}%
\AgdaDatatype{𝓣}\AgdaSymbol{\}}\AgdaSpace{}%
\AgdaSymbol{→}\AgdaSpace{}%
\AgdaDatatype{𝓜}\AgdaSpace{}%
\AgdaBound{α}\AgdaSpace{}%
\AgdaBound{Δ}\AgdaSpace{}%
\AgdaSymbol{→}\AgdaSpace{}%
\AgdaFunction{𝔽↓₀}\AgdaSpace{}%
\AgdaDatatype{𝓣}\AgdaSymbol{(}\AgdaSpace{}%
\AgdaBound{Δ}\AgdaSpace{}%
\AgdaOperator{\AgdaInductiveConstructor{,}}\AgdaSpace{}%
\AgdaBound{Γ}\AgdaSpace{}%
\AgdaSymbol{)}\AgdaSpace{}%
\AgdaSymbol{→}\AgdaSpace{}%
\AgdaDatatype{𝓜}\AgdaSpace{}%
\AgdaBound{α}\AgdaSpace{}%
\AgdaBound{Γ}\<%
\\
\>[2]\AgdaInductiveConstructor{varₘ}\AgdaSpace{}%
\AgdaBound{x}\AgdaSpace{}%
\AgdaOperator{\AgdaFunction{[}}\AgdaSpace{}%
\AgdaBound{ρ}\AgdaSpace{}%
\AgdaOperator{\AgdaFunction{]ₘ}}\AgdaSpace{}%
\AgdaSymbol{=}\AgdaSpace{}%
\AgdaInductiveConstructor{varₘ}\AgdaSpace{}%
\AgdaSymbol{(}\AgdaSpace{}%
\AgdaBound{ρ}\AgdaSpace{}%
\AgdaBound{x}\AgdaSpace{}%
\AgdaSymbol{)}\<%
\\
\>[2]\AgdaInductiveConstructor{fstₘ}\AgdaSpace{}%
\AgdaBound{m}\AgdaSpace{}%
\AgdaOperator{\AgdaFunction{[}}\AgdaSpace{}%
\AgdaBound{ρ}\AgdaSpace{}%
\AgdaOperator{\AgdaFunction{]ₘ}}\AgdaSpace{}%
\AgdaSymbol{=}\AgdaSpace{}%
\AgdaInductiveConstructor{fstₘ}\AgdaSpace{}%
\AgdaSymbol{(}\AgdaSpace{}%
\AgdaBound{m}\AgdaSpace{}%
\AgdaOperator{\AgdaFunction{[}}\AgdaSpace{}%
\AgdaBound{ρ}\AgdaSpace{}%
\AgdaOperator{\AgdaFunction{]ₘ}}\AgdaSpace{}%
\AgdaSymbol{)}\<%
\\
\>[2]\AgdaInductiveConstructor{sndₘ}\AgdaSpace{}%
\AgdaBound{m}\AgdaSpace{}%
\AgdaOperator{\AgdaFunction{[}}\AgdaSpace{}%
\AgdaBound{ρ}\AgdaSpace{}%
\AgdaOperator{\AgdaFunction{]ₘ}}\AgdaSpace{}%
\AgdaSymbol{=}\AgdaSpace{}%
\AgdaInductiveConstructor{sndₘ}\AgdaSpace{}%
\AgdaSymbol{(}\AgdaSpace{}%
\AgdaBound{m}\AgdaSpace{}%
\AgdaOperator{\AgdaFunction{[}}\AgdaSpace{}%
\AgdaBound{ρ}\AgdaSpace{}%
\AgdaOperator{\AgdaFunction{]ₘ}}\AgdaSpace{}%
\AgdaSymbol{)}\<%
\\
\>[2]\AgdaInductiveConstructor{appₘ}\AgdaSpace{}%
\AgdaBound{m}\AgdaSpace{}%
\AgdaBound{n}\AgdaSpace{}%
\AgdaOperator{\AgdaFunction{[}}\AgdaSpace{}%
\AgdaBound{ρ}\AgdaSpace{}%
\AgdaOperator{\AgdaFunction{]ₘ}}\AgdaSpace{}%
\AgdaSymbol{=}\AgdaSpace{}%
\AgdaInductiveConstructor{appₘ}\AgdaSpace{}%
\AgdaSymbol{(}\AgdaSpace{}%
\AgdaBound{m}\AgdaSpace{}%
\AgdaOperator{\AgdaFunction{[}}\AgdaSpace{}%
\AgdaBound{ρ}\AgdaSpace{}%
\AgdaOperator{\AgdaFunction{]ₘ}}\AgdaSpace{}%
\AgdaSymbol{)}\AgdaSpace{}%
\AgdaSymbol{(}\AgdaSpace{}%
\AgdaBound{n}\AgdaSpace{}%
\AgdaOperator{\AgdaFunction{[}}\AgdaSpace{}%
\AgdaBound{ρ}\AgdaSpace{}%
\AgdaOperator{\AgdaFunction{]ₙ}}\AgdaSpace{}%
\AgdaSymbol{)}\<%
\\
\\[\AgdaEmptyExtraSkip]%
\>[2]\AgdaOperator{\AgdaFunction{\AgdaUnderscore{}[\AgdaUnderscore{}]ₙ}}\AgdaSpace{}%
\AgdaSymbol{:}\AgdaSpace{}%
\AgdaSymbol{\{}\AgdaBound{α}\AgdaSpace{}%
\AgdaSymbol{:}\AgdaSpace{}%
\AgdaDatatype{𝓣}\AgdaSymbol{\}}\AgdaSpace{}%
\AgdaSymbol{\{}\AgdaBound{Δ}\AgdaSpace{}%
\AgdaBound{Γ}\AgdaSpace{}%
\AgdaSymbol{:}\AgdaSpace{}%
\AgdaDatatype{𝔽↓}\AgdaSpace{}%
\AgdaDatatype{𝓣}\AgdaSymbol{\}}\AgdaSpace{}%
\AgdaSymbol{→}\AgdaSpace{}%
\AgdaDatatype{𝓝}\AgdaSpace{}%
\AgdaBound{α}\AgdaSpace{}%
\AgdaBound{Δ}\AgdaSpace{}%
\AgdaSymbol{→}\AgdaSpace{}%
\AgdaFunction{𝔽↓₀}\AgdaSpace{}%
\AgdaDatatype{𝓣}\AgdaSymbol{(}\AgdaSpace{}%
\AgdaBound{Δ}\AgdaSpace{}%
\AgdaOperator{\AgdaInductiveConstructor{,}}\AgdaSpace{}%
\AgdaBound{Γ}\AgdaSpace{}%
\AgdaSymbol{)}\AgdaSpace{}%
\AgdaSymbol{→}\AgdaSpace{}%
\AgdaDatatype{𝓝}\AgdaSpace{}%
\AgdaBound{α}\AgdaSpace{}%
\AgdaBound{Γ}\<%
\\
\>[2]\AgdaInductiveConstructor{varₙ}\AgdaSpace{}%
\AgdaBound{x}\AgdaSpace{}%
\AgdaOperator{\AgdaFunction{[}}\AgdaSpace{}%
\AgdaBound{ρ}\AgdaSpace{}%
\AgdaOperator{\AgdaFunction{]ₙ}}\AgdaSpace{}%
\AgdaSymbol{=}\AgdaSpace{}%
\AgdaInductiveConstructor{varₙ}\AgdaSpace{}%
\AgdaSymbol{(}\AgdaSpace{}%
\AgdaBound{ρ}\AgdaSpace{}%
\AgdaBound{x}\AgdaSpace{}%
\AgdaSymbol{)}\<%
\\
\>[2]\AgdaInductiveConstructor{fstₙ}\AgdaSpace{}%
\AgdaBound{m}\AgdaSpace{}%
\AgdaOperator{\AgdaFunction{[}}\AgdaSpace{}%
\AgdaBound{ρ}\AgdaSpace{}%
\AgdaOperator{\AgdaFunction{]ₙ}}\AgdaSpace{}%
\AgdaSymbol{=}\AgdaSpace{}%
\AgdaInductiveConstructor{fstₙ}\AgdaSpace{}%
\AgdaSymbol{(}\AgdaSpace{}%
\AgdaBound{m}\AgdaSpace{}%
\AgdaOperator{\AgdaFunction{[}}\AgdaSpace{}%
\AgdaBound{ρ}\AgdaSpace{}%
\AgdaOperator{\AgdaFunction{]ₘ}}\AgdaSpace{}%
\AgdaSymbol{)}\<%
\\
\>[2]\AgdaInductiveConstructor{sndₙ}\AgdaSpace{}%
\AgdaBound{m}\AgdaSpace{}%
\AgdaOperator{\AgdaFunction{[}}\AgdaSpace{}%
\AgdaBound{ρ}\AgdaSpace{}%
\AgdaOperator{\AgdaFunction{]ₙ}}\AgdaSpace{}%
\AgdaSymbol{=}\AgdaSpace{}%
\AgdaInductiveConstructor{sndₙ}\AgdaSpace{}%
\AgdaSymbol{(}\AgdaSpace{}%
\AgdaBound{m}\AgdaSpace{}%
\AgdaOperator{\AgdaFunction{[}}\AgdaSpace{}%
\AgdaBound{ρ}\AgdaSpace{}%
\AgdaOperator{\AgdaFunction{]ₘ}}\AgdaSpace{}%
\AgdaSymbol{)}\<%
\\
\>[2]\AgdaInductiveConstructor{appₙ}\AgdaSpace{}%
\AgdaBound{m}\AgdaSpace{}%
\AgdaBound{n}\AgdaSpace{}%
\AgdaOperator{\AgdaFunction{[}}\AgdaSpace{}%
\AgdaBound{ρ}\AgdaSpace{}%
\AgdaOperator{\AgdaFunction{]ₙ}}\AgdaSpace{}%
\AgdaSymbol{=}\AgdaSpace{}%
\AgdaInductiveConstructor{appₙ}\AgdaSpace{}%
\AgdaSymbol{(}\AgdaSpace{}%
\AgdaBound{m}\AgdaSpace{}%
\AgdaOperator{\AgdaFunction{[}}\AgdaSpace{}%
\AgdaBound{ρ}\AgdaSpace{}%
\AgdaOperator{\AgdaFunction{]ₘ}}\AgdaSpace{}%
\AgdaSymbol{)}\AgdaSpace{}%
\AgdaSymbol{(}\AgdaSpace{}%
\AgdaBound{n}\AgdaSpace{}%
\AgdaOperator{\AgdaFunction{[}}\AgdaSpace{}%
\AgdaBound{ρ}\AgdaSpace{}%
\AgdaOperator{\AgdaFunction{]ₙ}}\AgdaSpace{}%
\AgdaSymbol{)}\<%
\\
\>[2]\AgdaInductiveConstructor{unitₙ}\AgdaSpace{}%
\AgdaOperator{\AgdaFunction{[}}\AgdaSpace{}%
\AgdaBound{ρ}\AgdaSpace{}%
\AgdaOperator{\AgdaFunction{]ₙ}}\AgdaSpace{}%
\AgdaSymbol{=}\AgdaSpace{}%
\AgdaInductiveConstructor{unitₙ}\<%
\\
\>[2]\AgdaInductiveConstructor{pairₙ}\AgdaSpace{}%
\AgdaBound{n₁}\AgdaSpace{}%
\AgdaBound{n₂}\AgdaSpace{}%
\AgdaOperator{\AgdaFunction{[}}\AgdaSpace{}%
\AgdaBound{ρ}\AgdaSpace{}%
\AgdaOperator{\AgdaFunction{]ₙ}}\AgdaSpace{}%
\AgdaSymbol{=}\AgdaSpace{}%
\AgdaInductiveConstructor{pairₙ}\AgdaSpace{}%
\AgdaSymbol{(}\AgdaSpace{}%
\AgdaBound{n₁}\AgdaSpace{}%
\AgdaOperator{\AgdaFunction{[}}\AgdaSpace{}%
\AgdaBound{ρ}\AgdaSpace{}%
\AgdaOperator{\AgdaFunction{]ₙ}}\AgdaSpace{}%
\AgdaSymbol{)}\AgdaSpace{}%
\AgdaSymbol{(}\AgdaSpace{}%
\AgdaBound{n₂}\AgdaSpace{}%
\AgdaOperator{\AgdaFunction{[}}\AgdaSpace{}%
\AgdaBound{ρ}\AgdaSpace{}%
\AgdaOperator{\AgdaFunction{]ₙ}}\AgdaSpace{}%
\AgdaSymbol{)}\<%
\\
\>[2]\AgdaInductiveConstructor{absₙ}\AgdaSpace{}%
\AgdaBound{n}\AgdaSpace{}%
\AgdaOperator{\AgdaFunction{[}}\AgdaSpace{}%
\AgdaBound{ρ}\AgdaSpace{}%
\AgdaOperator{\AgdaFunction{]ₙ}}\AgdaSpace{}%
\AgdaSymbol{=}%
\>[804I]\AgdaInductiveConstructor{absₙ}\AgdaSpace{}%
\AgdaSymbol{(}\AgdaSpace{}%
\AgdaBound{n}\AgdaSpace{}%
\AgdaOperator{\AgdaFunction{[}}\AgdaSpace{}%
\AgdaFunction{lift}\AgdaSpace{}%
\AgdaBound{ρ}\AgdaSpace{}%
\AgdaOperator{\AgdaFunction{]ₙ}}\AgdaSpace{}%
\AgdaSymbol{)}\<%
\\
\>[.][@{}l@{}]\<[804I]%
\>[18]\AgdaKeyword{where}\<%
\\
\>[18][@{}l@{\AgdaIndent{0}}]%
\>[20]\AgdaFunction{lift}\AgdaSpace{}%
\AgdaSymbol{:}\AgdaSpace{}%
\AgdaSymbol{\{}\AgdaBound{α}\AgdaSpace{}%
\AgdaSymbol{:}\AgdaSpace{}%
\AgdaDatatype{𝓣}\AgdaSymbol{\}}\AgdaSpace{}%
\AgdaSymbol{\{}\AgdaBound{Δ}\AgdaSpace{}%
\AgdaBound{Γ}\AgdaSpace{}%
\AgdaSymbol{:}\AgdaSpace{}%
\AgdaDatatype{𝔽↓}\AgdaSpace{}%
\AgdaDatatype{𝓣}\AgdaSymbol{\}}\AgdaSpace{}%
\AgdaSymbol{→}\AgdaSpace{}%
\AgdaFunction{𝔽↓₀}\AgdaSpace{}%
\AgdaDatatype{𝓣}\AgdaSymbol{(}\AgdaSpace{}%
\AgdaBound{Δ}\AgdaSpace{}%
\AgdaOperator{\AgdaInductiveConstructor{,}}\AgdaSpace{}%
\AgdaBound{Γ}\AgdaSpace{}%
\AgdaSymbol{)}\AgdaSpace{}%
\AgdaSymbol{→}\AgdaSpace{}%
\AgdaFunction{𝔽↓₀}\AgdaSpace{}%
\AgdaDatatype{𝓣}\AgdaSymbol{(}\AgdaSpace{}%
\AgdaBound{Δ}\AgdaSpace{}%
\AgdaOperator{\AgdaInductiveConstructor{::}}\AgdaSpace{}%
\AgdaBound{α}\AgdaSpace{}%
\AgdaOperator{\AgdaInductiveConstructor{,}}\AgdaSpace{}%
\AgdaBound{Γ}\AgdaSpace{}%
\AgdaOperator{\AgdaInductiveConstructor{::}}\AgdaSpace{}%
\AgdaBound{α}\AgdaSpace{}%
\AgdaSymbol{)}\<%
\\
\>[20]\AgdaFunction{lift}\AgdaSpace{}%
\AgdaSymbol{\AgdaUnderscore{}}\AgdaSpace{}%
\AgdaInductiveConstructor{•}\AgdaSpace{}%
\AgdaSymbol{=}\AgdaSpace{}%
\AgdaInductiveConstructor{•}\<%
\\
\>[20]\AgdaFunction{lift}\AgdaSpace{}%
\AgdaBound{ρ}\AgdaSpace{}%
\AgdaSymbol{(}\AgdaInductiveConstructor{↑}\AgdaSpace{}%
\AgdaBound{x}\AgdaSymbol{)}\AgdaSpace{}%
\AgdaSymbol{=}\AgdaSpace{}%
\AgdaInductiveConstructor{↑}\AgdaSymbol{(}\AgdaBound{ρ}\AgdaSpace{}%
\AgdaBound{x}\AgdaSymbol{)}\<%
\end{code}
Note the treatment of abstraction guaranteeing fresh bindings.

\medskip
\mysubparagraph{Semantics.}
We implement the presheaf semantics of types induced by the interpretation of
base types as neutral terms~(see~(\ref{mu_Interpretation})).  Note that for
higher types the implementation glosses over the naturality condition required
of presheaf exponentials~(see~(\ref{PShExponentialNaturality})).
\begin{code}%
\>[0]\AgdaComment{-- type semantics}\<%
\\
\>[0]\AgdaOperator{\AgdaFunction{⟦\AgdaUnderscore{}⟧}}\AgdaSpace{}%
\AgdaSymbol{:}\AgdaSpace{}%
\AgdaDatatype{𝓣}\AgdaSpace{}%
\AgdaSymbol{→}\AgdaSpace{}%
\AgdaDatatype{𝔽↓}\AgdaSpace{}%
\AgdaDatatype{𝓣}\AgdaSpace{}%
\AgdaSymbol{→}\AgdaSpace{}%
\AgdaPrimitiveType{Set}\<%
\\
\>[0]\AgdaOperator{\AgdaFunction{⟦}}\AgdaSpace{}%
\AgdaInductiveConstructor{θ}\AgdaSpace{}%
\AgdaBound{i}\AgdaSpace{}%
\AgdaOperator{\AgdaFunction{⟧}}\AgdaSpace{}%
\AgdaBound{Γ}\AgdaSpace{}%
\AgdaSymbol{=}\AgdaSpace{}%
\AgdaDatatype{𝓜}\AgdaSpace{}%
\AgdaSymbol{(}\AgdaInductiveConstructor{θ}\AgdaSpace{}%
\AgdaBound{i}\AgdaSymbol{)}\AgdaSpace{}%
\AgdaBound{Γ}\<%
\\
\>[0]\AgdaOperator{\AgdaFunction{⟦}}\AgdaSpace{}%
\AgdaInductiveConstructor{υ}\AgdaSpace{}%
\AgdaOperator{\AgdaFunction{⟧}}\AgdaSpace{}%
\AgdaSymbol{\AgdaUnderscore{}}\AgdaSpace{}%
\AgdaSymbol{=}\AgdaSpace{}%
\AgdaRecord{⊤}\<%
\\
\>[0]\AgdaOperator{\AgdaFunction{⟦}}\AgdaSpace{}%
\AgdaBound{α}\AgdaSpace{}%
\AgdaOperator{\AgdaInductiveConstructor{⋆}}\AgdaSpace{}%
\AgdaBound{β}\AgdaSpace{}%
\AgdaOperator{\AgdaFunction{⟧}}\AgdaSpace{}%
\AgdaBound{Γ}\AgdaSpace{}%
\AgdaSymbol{=}\AgdaSpace{}%
\AgdaOperator{\AgdaFunction{⟦}}\AgdaSpace{}%
\AgdaBound{α}\AgdaSpace{}%
\AgdaOperator{\AgdaFunction{⟧}}\AgdaSpace{}%
\AgdaBound{Γ}\AgdaSpace{}%
\AgdaOperator{\AgdaFunction{×}}\AgdaSpace{}%
\AgdaOperator{\AgdaFunction{⟦}}\AgdaSpace{}%
\AgdaBound{β}\AgdaSpace{}%
\AgdaOperator{\AgdaFunction{⟧}}\AgdaSpace{}%
\AgdaBound{Γ}\<%
\\
\>[0]\AgdaOperator{\AgdaFunction{⟦}}\AgdaSpace{}%
\AgdaBound{α}\AgdaSpace{}%
\AgdaOperator{\AgdaInductiveConstructor{⇒}}\AgdaSpace{}%
\AgdaBound{β}\AgdaSpace{}%
\AgdaOperator{\AgdaFunction{⟧}}\AgdaSpace{}%
\AgdaBound{Γ}\AgdaSpace{}%
\AgdaSymbol{=}\AgdaSpace{}%
\AgdaSymbol{\{}\AgdaBound{Δ}\AgdaSpace{}%
\AgdaSymbol{:}\AgdaSpace{}%
\AgdaDatatype{𝔽↓}\AgdaSpace{}%
\AgdaDatatype{𝓣}\AgdaSymbol{\}}\AgdaSpace{}%
\AgdaSymbol{→}\AgdaSpace{}%
\AgdaFunction{𝔽↓₀}\AgdaSpace{}%
\AgdaDatatype{𝓣}\AgdaSymbol{(}\AgdaSpace{}%
\AgdaBound{Γ}\AgdaSpace{}%
\AgdaOperator{\AgdaInductiveConstructor{,}}\AgdaSpace{}%
\AgdaBound{Δ}\AgdaSpace{}%
\AgdaSymbol{)}\AgdaSpace{}%
\AgdaSymbol{→}\AgdaSpace{}%
\AgdaOperator{\AgdaFunction{⟦}}\AgdaSpace{}%
\AgdaBound{α}\AgdaSpace{}%
\AgdaOperator{\AgdaFunction{⟧}}\AgdaSpace{}%
\AgdaBound{Δ}\AgdaSpace{}%
\AgdaSymbol{→}\AgdaSpace{}%
\AgdaOperator{\AgdaFunction{⟦}}\AgdaSpace{}%
\AgdaBound{β}\AgdaSpace{}%
\AgdaOperator{\AgdaFunction{⟧}}\AgdaSpace{}%
\AgdaBound{Δ}\<%
\\
\\[\AgdaEmptyExtraSkip]%
\>[0]\AgdaOperator{\AgdaFunction{\AgdaUnderscore{}[\AgdaUnderscore{}]}}\AgdaSpace{}%
\AgdaSymbol{:}\AgdaSpace{}%
\AgdaSymbol{\{}\AgdaBound{α}\AgdaSpace{}%
\AgdaSymbol{:}\AgdaSpace{}%
\AgdaDatatype{𝓣}\AgdaSymbol{\}}\AgdaSpace{}%
\AgdaSymbol{\{}\AgdaBound{Δ}\AgdaSpace{}%
\AgdaBound{Γ}\AgdaSpace{}%
\AgdaSymbol{:}\AgdaSpace{}%
\AgdaDatatype{𝔽↓}\AgdaSpace{}%
\AgdaDatatype{𝓣}\AgdaSymbol{\}}\AgdaSpace{}%
\AgdaSymbol{→}\AgdaSpace{}%
\AgdaOperator{\AgdaFunction{⟦}}\AgdaSpace{}%
\AgdaBound{α}\AgdaSpace{}%
\AgdaOperator{\AgdaFunction{⟧}}\AgdaSpace{}%
\AgdaBound{Δ}\AgdaSpace{}%
\AgdaSymbol{→}\AgdaSpace{}%
\AgdaFunction{𝔽↓₀}\AgdaSpace{}%
\AgdaDatatype{𝓣}\AgdaSymbol{(}\AgdaSpace{}%
\AgdaBound{Δ}\AgdaSpace{}%
\AgdaOperator{\AgdaInductiveConstructor{,}}\AgdaSpace{}%
\AgdaBound{Γ}\AgdaSpace{}%
\AgdaSymbol{)}\AgdaSpace{}%
\AgdaSymbol{→}\AgdaSpace{}%
\AgdaOperator{\AgdaFunction{⟦}}\AgdaSpace{}%
\AgdaBound{α}\AgdaSpace{}%
\AgdaOperator{\AgdaFunction{⟧}}\AgdaSpace{}%
\AgdaBound{Γ}\<%
\\
\>[0]\AgdaOperator{\AgdaFunction{\AgdaUnderscore{}[\AgdaUnderscore{}]}}\AgdaSpace{}%
\AgdaSymbol{\{}\AgdaInductiveConstructor{θ}\AgdaSpace{}%
\AgdaSymbol{\AgdaUnderscore{}\}}\AgdaSpace{}%
\AgdaSymbol{=}\AgdaSpace{}%
\AgdaOperator{\AgdaFunction{\AgdaUnderscore{}[\AgdaUnderscore{}]ₘ}}\<%
\\
\>[0]\AgdaOperator{\AgdaFunction{\AgdaUnderscore{}[\AgdaUnderscore{}]}}\AgdaSpace{}%
\AgdaSymbol{\{}\AgdaInductiveConstructor{υ}\AgdaSymbol{\}}\AgdaSpace{}%
\AgdaSymbol{=}\AgdaSpace{}%
\AgdaSymbol{\AgdaUnderscore{}}\<%
\\
\>[0]\AgdaOperator{\AgdaFunction{\AgdaUnderscore{}[\AgdaUnderscore{}]}}\AgdaSpace{}%
\AgdaSymbol{\{\AgdaUnderscore{}}\AgdaSpace{}%
\AgdaOperator{\AgdaInductiveConstructor{⋆}}\AgdaSpace{}%
\AgdaSymbol{\AgdaUnderscore{}\}}\AgdaSpace{}%
\AgdaSymbol{(}\AgdaSpace{}%
\AgdaBound{x₁}\AgdaSpace{}%
\AgdaOperator{\AgdaInductiveConstructor{,}}\AgdaSpace{}%
\AgdaBound{x₂}\AgdaSpace{}%
\AgdaSymbol{)}\AgdaSpace{}%
\AgdaBound{ρ}\AgdaSpace{}%
\AgdaSymbol{=}\AgdaSpace{}%
\AgdaSymbol{(}\AgdaSpace{}%
\AgdaBound{x₁}\AgdaSpace{}%
\AgdaOperator{\AgdaFunction{[}}\AgdaSpace{}%
\AgdaBound{ρ}\AgdaSpace{}%
\AgdaOperator{\AgdaFunction{]}}\AgdaSpace{}%
\AgdaOperator{\AgdaInductiveConstructor{,}}\AgdaSpace{}%
\AgdaBound{x₂}\AgdaSpace{}%
\AgdaOperator{\AgdaFunction{[}}\AgdaSpace{}%
\AgdaBound{ρ}\AgdaSpace{}%
\AgdaOperator{\AgdaFunction{]}}\AgdaSpace{}%
\AgdaSymbol{)}\<%
\\
\>[0]\AgdaOperator{\AgdaFunction{\AgdaUnderscore{}[\AgdaUnderscore{}]}}\AgdaSpace{}%
\AgdaSymbol{\{\AgdaUnderscore{}}\AgdaSpace{}%
\AgdaOperator{\AgdaInductiveConstructor{⇒}}\AgdaSpace{}%
\AgdaSymbol{\AgdaUnderscore{}\}}\AgdaSpace{}%
\AgdaBound{f}\AgdaSpace{}%
\AgdaBound{ρ}\AgdaSpace{}%
\AgdaBound{ρ′}\AgdaSpace{}%
\AgdaSymbol{=}\AgdaSpace{}%
\AgdaBound{f}\AgdaSpace{}%
\AgdaSymbol{(}\AgdaBound{ρ′}\AgdaSpace{}%
\AgdaOperator{\AgdaFunction{∘}}\AgdaSpace{}%
\AgdaBound{ρ}\AgdaSymbol{)}\<%
\end{code}

The semantic interpretation of
terms~(see~(\ref{StandardSemanticInterpretation})) follows:
\begin{code}%
\>[0]\AgdaComment{-- term semantics}\<%
\\
\>[0]\AgdaFunction{Π}\AgdaSpace{}%
\AgdaSymbol{:}\AgdaSpace{}%
\AgdaDatatype{𝔽↓}\AgdaSpace{}%
\AgdaDatatype{𝓣}\AgdaSpace{}%
\AgdaSymbol{→}\AgdaSpace{}%
\AgdaSymbol{(}\AgdaDatatype{𝓣}\AgdaSpace{}%
\AgdaSymbol{→}\AgdaSpace{}%
\AgdaDatatype{𝔽↓}\AgdaSpace{}%
\AgdaDatatype{𝓣}\AgdaSpace{}%
\AgdaSymbol{→}\AgdaSpace{}%
\AgdaPrimitiveType{Set}\AgdaSymbol{)}\AgdaSpace{}%
\AgdaSymbol{→}\AgdaSpace{}%
\AgdaDatatype{𝔽↓}\AgdaSpace{}%
\AgdaDatatype{𝓣}\AgdaSpace{}%
\AgdaSymbol{→}\AgdaSpace{}%
\AgdaPrimitiveType{Set}\<%
\\
\>[0]\AgdaFunction{Π}\AgdaSpace{}%
\AgdaInductiveConstructor{·}\AgdaSpace{}%
\AgdaSymbol{\AgdaUnderscore{}}\AgdaSpace{}%
\AgdaSymbol{\AgdaUnderscore{}}\AgdaSpace{}%
\AgdaSymbol{=}\AgdaSpace{}%
\AgdaRecord{⊤}\<%
\\
\>[0]\AgdaFunction{Π}\AgdaSpace{}%
\AgdaSymbol{(}\AgdaBound{Γ}\AgdaSpace{}%
\AgdaOperator{\AgdaInductiveConstructor{::}}\AgdaSpace{}%
\AgdaBound{α}\AgdaSymbol{)}\AgdaSpace{}%
\AgdaBound{P}\AgdaSpace{}%
\AgdaBound{Δ}\AgdaSpace{}%
\AgdaSymbol{=}\AgdaSpace{}%
\AgdaSymbol{(}\AgdaFunction{Π}\AgdaSpace{}%
\AgdaBound{Γ}\AgdaSpace{}%
\AgdaBound{P}\AgdaSpace{}%
\AgdaBound{Δ}\AgdaSymbol{)}\AgdaSpace{}%
\AgdaOperator{\AgdaFunction{×}}\AgdaSpace{}%
\AgdaSymbol{(}\AgdaBound{P}\AgdaSpace{}%
\AgdaBound{α}\AgdaSpace{}%
\AgdaBound{Δ}\AgdaSymbol{)}\<%
\\
\\[\AgdaEmptyExtraSkip]%
\>[0]\AgdaOperator{\AgdaFunction{⟦\AgdaUnderscore{}⟧₀}}\AgdaSpace{}%
\AgdaSymbol{:}\AgdaSpace{}%
\AgdaSymbol{\{}\AgdaBound{α}\AgdaSpace{}%
\AgdaSymbol{:}\AgdaSpace{}%
\AgdaDatatype{𝓣}\AgdaSymbol{\}}\AgdaSpace{}%
\AgdaSymbol{\{}\AgdaBound{Γ}\AgdaSpace{}%
\AgdaSymbol{:}\AgdaSpace{}%
\AgdaDatatype{𝔽↓}\AgdaSpace{}%
\AgdaDatatype{𝓣}\AgdaSymbol{\}}\AgdaSpace{}%
\AgdaSymbol{→}\AgdaSpace{}%
\AgdaDatatype{𝓛}\AgdaSpace{}%
\AgdaBound{α}\AgdaSpace{}%
\AgdaBound{Γ}\AgdaSpace{}%
\AgdaSymbol{→}\AgdaSpace{}%
\AgdaSymbol{\{}\AgdaBound{Δ}\AgdaSpace{}%
\AgdaSymbol{:}\AgdaSpace{}%
\AgdaDatatype{𝔽↓}\AgdaSpace{}%
\AgdaDatatype{𝓣}\AgdaSymbol{\}}\AgdaSpace{}%
\AgdaSymbol{→}\AgdaSpace{}%
\AgdaFunction{Π}\AgdaSpace{}%
\AgdaBound{Γ}\AgdaSpace{}%
\AgdaOperator{\AgdaFunction{⟦\AgdaUnderscore{}⟧}}\AgdaSpace{}%
\AgdaBound{Δ}\AgdaSpace{}%
\AgdaSymbol{→}\AgdaSpace{}%
\AgdaOperator{\AgdaFunction{⟦}}\AgdaSpace{}%
\AgdaBound{α}\AgdaSpace{}%
\AgdaOperator{\AgdaFunction{⟧}}\AgdaSpace{}%
\AgdaBound{Δ}\<%
\\
\>[0]\AgdaOperator{\AgdaFunction{⟦}}\AgdaSpace{}%
\AgdaInductiveConstructor{var}\AgdaSpace{}%
\AgdaBound{x}\AgdaSpace{}%
\AgdaOperator{\AgdaFunction{⟧₀}}\AgdaSpace{}%
\AgdaBound{ε}%
\>[1042I]\AgdaSymbol{=}\AgdaSpace{}%
\AgdaBound{ε}\AgdaSpace{}%
\AgdaOperator{\AgdaFunction{≪}}\AgdaSpace{}%
\AgdaBound{x}\AgdaSpace{}%
\AgdaOperator{\AgdaFunction{≫}}\<%
\\
\>[.][@{}l@{}]\<[1042I]%
\>[13]\AgdaKeyword{where}\<%
\\
\>[13][@{}l@{\AgdaIndent{0}}]%
\>[15]\AgdaOperator{\AgdaFunction{\AgdaUnderscore{}≪\AgdaUnderscore{}≫}}\AgdaSpace{}%
\AgdaSymbol{:}\AgdaSpace{}%
\AgdaSymbol{\{}\AgdaBound{α}\AgdaSpace{}%
\AgdaSymbol{:}\AgdaSpace{}%
\AgdaDatatype{𝓣}\AgdaSymbol{\}}\AgdaSpace{}%
\AgdaSymbol{\{}\AgdaBound{Γ}\AgdaSpace{}%
\AgdaBound{Δ}\AgdaSpace{}%
\AgdaSymbol{:}\AgdaSpace{}%
\AgdaDatatype{𝔽↓}\AgdaSpace{}%
\AgdaDatatype{𝓣}\AgdaSymbol{\}}\AgdaSpace{}%
\AgdaSymbol{→}\AgdaSpace{}%
\AgdaFunction{Π}\AgdaSpace{}%
\AgdaBound{Γ}\AgdaSpace{}%
\AgdaOperator{\AgdaFunction{⟦\AgdaUnderscore{}⟧}}\AgdaSpace{}%
\AgdaBound{Δ}\AgdaSpace{}%
\AgdaSymbol{→}\AgdaSpace{}%
\AgdaDatatype{V}\AgdaSpace{}%
\AgdaBound{α}\AgdaSpace{}%
\AgdaBound{Γ}\AgdaSpace{}%
\AgdaSymbol{→}\AgdaSpace{}%
\AgdaOperator{\AgdaFunction{⟦}}\AgdaSpace{}%
\AgdaBound{α}\AgdaSpace{}%
\AgdaOperator{\AgdaFunction{⟧}}\AgdaSpace{}%
\AgdaBound{Δ}\<%
\\
\>[15]\AgdaBound{ε}\AgdaSpace{}%
\AgdaOperator{\AgdaFunction{≪}}\AgdaSpace{}%
\AgdaInductiveConstructor{•}\AgdaSpace{}%
\AgdaOperator{\AgdaFunction{≫}}\AgdaSpace{}%
\AgdaSymbol{=}\AgdaSpace{}%
\AgdaFunction{π₂}\AgdaSpace{}%
\AgdaBound{ε}\<%
\\
\>[15]\AgdaBound{ε}\AgdaSpace{}%
\AgdaOperator{\AgdaFunction{≪}}\AgdaSpace{}%
\AgdaInductiveConstructor{↑}\AgdaSpace{}%
\AgdaBound{x}\AgdaSpace{}%
\AgdaOperator{\AgdaFunction{≫}}\AgdaSpace{}%
\AgdaSymbol{=}\AgdaSpace{}%
\AgdaFunction{π₁}\AgdaSpace{}%
\AgdaBound{ε}\AgdaSpace{}%
\AgdaOperator{\AgdaFunction{≪}}\AgdaSpace{}%
\AgdaBound{x}\AgdaSpace{}%
\AgdaOperator{\AgdaFunction{≫}}\<%
\\
\>[0]\AgdaOperator{\AgdaFunction{⟦}}\AgdaSpace{}%
\AgdaInductiveConstructor{unit}\AgdaSpace{}%
\AgdaOperator{\AgdaFunction{⟧₀}}\AgdaSpace{}%
\AgdaSymbol{\AgdaUnderscore{}}\AgdaSpace{}%
\AgdaSymbol{=}\AgdaSpace{}%
\AgdaSymbol{\AgdaUnderscore{}}\<%
\\
\>[0]\AgdaOperator{\AgdaFunction{⟦}}\AgdaSpace{}%
\AgdaInductiveConstructor{pair}\AgdaSpace{}%
\AgdaBound{t₁}\AgdaSpace{}%
\AgdaBound{t₂}\AgdaSpace{}%
\AgdaOperator{\AgdaFunction{⟧₀}}\AgdaSpace{}%
\AgdaBound{ε}\AgdaSpace{}%
\AgdaSymbol{=}\AgdaSpace{}%
\AgdaSymbol{(}\AgdaSpace{}%
\AgdaOperator{\AgdaFunction{⟦}}\AgdaSpace{}%
\AgdaBound{t₁}\AgdaSpace{}%
\AgdaOperator{\AgdaFunction{⟧₀}}\AgdaSpace{}%
\AgdaBound{ε}\AgdaSpace{}%
\AgdaOperator{\AgdaInductiveConstructor{,}}\AgdaSpace{}%
\AgdaOperator{\AgdaFunction{⟦}}\AgdaSpace{}%
\AgdaBound{t₂}\AgdaSpace{}%
\AgdaOperator{\AgdaFunction{⟧₀}}\AgdaSpace{}%
\AgdaBound{ε}\AgdaSpace{}%
\AgdaSymbol{)}\<%
\\
\>[0]\AgdaOperator{\AgdaFunction{⟦}}\AgdaSpace{}%
\AgdaInductiveConstructor{fst₀}\AgdaSpace{}%
\AgdaBound{t}\AgdaSpace{}%
\AgdaOperator{\AgdaFunction{⟧₀}}\AgdaSpace{}%
\AgdaBound{ε}\AgdaSpace{}%
\AgdaSymbol{=}\AgdaSpace{}%
\AgdaFunction{π₁}\AgdaSpace{}%
\AgdaSymbol{(}\AgdaSpace{}%
\AgdaOperator{\AgdaFunction{⟦}}\AgdaSpace{}%
\AgdaBound{t}\AgdaSpace{}%
\AgdaOperator{\AgdaFunction{⟧₀}}\AgdaSpace{}%
\AgdaBound{ε}\AgdaSpace{}%
\AgdaSymbol{)}\<%
\\
\>[0]\AgdaOperator{\AgdaFunction{⟦}}\AgdaSpace{}%
\AgdaInductiveConstructor{snd₀}\AgdaSpace{}%
\AgdaBound{t}\AgdaSpace{}%
\AgdaOperator{\AgdaFunction{⟧₀}}\AgdaSpace{}%
\AgdaBound{ε}\AgdaSpace{}%
\AgdaSymbol{=}\AgdaSpace{}%
\AgdaFunction{π₂}\AgdaSpace{}%
\AgdaSymbol{(}\AgdaSpace{}%
\AgdaOperator{\AgdaFunction{⟦}}\AgdaSpace{}%
\AgdaBound{t}\AgdaSpace{}%
\AgdaOperator{\AgdaFunction{⟧₀}}\AgdaSpace{}%
\AgdaBound{ε}\AgdaSpace{}%
\AgdaSymbol{)}\<%
\\
\>[0]\AgdaOperator{\AgdaFunction{⟦}}\AgdaSpace{}%
\AgdaInductiveConstructor{abs}\AgdaSpace{}%
\AgdaBound{t}\AgdaSpace{}%
\AgdaOperator{\AgdaFunction{⟧₀}}\AgdaSpace{}%
\AgdaBound{ε}\AgdaSpace{}%
\AgdaBound{f}\AgdaSpace{}%
\AgdaBound{x}%
\>[1138I]\AgdaSymbol{=}\AgdaSpace{}%
\AgdaOperator{\AgdaFunction{⟦}}\AgdaSpace{}%
\AgdaBound{t}\AgdaSpace{}%
\AgdaOperator{\AgdaFunction{⟧₀}}\AgdaSpace{}%
\AgdaSymbol{(}\AgdaSpace{}%
\AgdaBound{ε}\AgdaSpace{}%
\AgdaOperator{\AgdaFunction{[}}\AgdaSpace{}%
\AgdaBound{f}\AgdaSpace{}%
\AgdaOperator{\AgdaFunction{]ₚ}}\AgdaSpace{}%
\AgdaOperator{\AgdaInductiveConstructor{,}}\AgdaSpace{}%
\AgdaBound{x}\AgdaSpace{}%
\AgdaSymbol{)}\<%
\\
\>[.][@{}l@{}]\<[1138I]%
\>[17]\AgdaKeyword{where}\<%
\\
\>[17][@{}l@{\AgdaIndent{0}}]%
\>[19]\AgdaOperator{\AgdaFunction{\AgdaUnderscore{}[\AgdaUnderscore{}]ₚ}}\AgdaSpace{}%
\AgdaSymbol{:}\AgdaSpace{}%
\AgdaSymbol{\{}\AgdaBound{Ξ}\AgdaSpace{}%
\AgdaBound{Δ}\AgdaSpace{}%
\AgdaBound{Γ}\AgdaSpace{}%
\AgdaSymbol{:}\AgdaSpace{}%
\AgdaDatatype{𝔽↓}\AgdaSpace{}%
\AgdaDatatype{𝓣}\AgdaSymbol{\}}\AgdaSpace{}%
\AgdaSymbol{→}\AgdaSpace{}%
\AgdaFunction{Π}\AgdaSpace{}%
\AgdaBound{Ξ}\AgdaSpace{}%
\AgdaOperator{\AgdaFunction{⟦\AgdaUnderscore{}⟧}}\AgdaSpace{}%
\AgdaBound{Δ}\AgdaSpace{}%
\AgdaSymbol{→}\AgdaSpace{}%
\AgdaFunction{𝔽↓₀}\AgdaSpace{}%
\AgdaDatatype{𝓣}\AgdaSymbol{(}\AgdaSpace{}%
\AgdaBound{Δ}\AgdaSpace{}%
\AgdaOperator{\AgdaInductiveConstructor{,}}\AgdaSpace{}%
\AgdaBound{Γ}\AgdaSpace{}%
\AgdaSymbol{)}\AgdaSpace{}%
\AgdaSymbol{→}\AgdaSpace{}%
\AgdaFunction{Π}\AgdaSpace{}%
\AgdaBound{Ξ}\AgdaSpace{}%
\AgdaOperator{\AgdaFunction{⟦\AgdaUnderscore{}⟧}}\AgdaSpace{}%
\AgdaBound{Γ}\<%
\\
\>[19]\AgdaOperator{\AgdaFunction{\AgdaUnderscore{}[\AgdaUnderscore{}]ₚ}}\AgdaSpace{}%
\AgdaSymbol{\{}\AgdaInductiveConstructor{·}\AgdaSymbol{\}}\AgdaSpace{}%
\AgdaSymbol{=}\AgdaSpace{}%
\AgdaSymbol{\AgdaUnderscore{}}\<%
\\
\>[19]\AgdaOperator{\AgdaFunction{\AgdaUnderscore{}[\AgdaUnderscore{}]ₚ}}\AgdaSpace{}%
\AgdaSymbol{\{\AgdaUnderscore{}}\AgdaSpace{}%
\AgdaOperator{\AgdaInductiveConstructor{::}}\AgdaSpace{}%
\AgdaSymbol{\AgdaUnderscore{}\}}\AgdaSpace{}%
\AgdaSymbol{(}\AgdaSpace{}%
\AgdaBound{ε}\AgdaSpace{}%
\AgdaOperator{\AgdaInductiveConstructor{,}}\AgdaSpace{}%
\AgdaBound{x}\AgdaSpace{}%
\AgdaSymbol{)}\AgdaSpace{}%
\AgdaBound{ρ}\AgdaSpace{}%
\AgdaSymbol{=}\AgdaSpace{}%
\AgdaSymbol{(}\AgdaSpace{}%
\AgdaBound{ε}\AgdaSpace{}%
\AgdaOperator{\AgdaFunction{[}}\AgdaSpace{}%
\AgdaBound{ρ}\AgdaSpace{}%
\AgdaOperator{\AgdaFunction{]ₚ}}\AgdaSpace{}%
\AgdaOperator{\AgdaInductiveConstructor{,}}\AgdaSpace{}%
\AgdaBound{x}\AgdaSpace{}%
\AgdaOperator{\AgdaFunction{[}}\AgdaSpace{}%
\AgdaBound{ρ}\AgdaSpace{}%
\AgdaOperator{\AgdaFunction{]}}\AgdaSpace{}%
\AgdaSymbol{)}\<%
\\
\>[0]\AgdaOperator{\AgdaFunction{⟦}}\AgdaSpace{}%
\AgdaInductiveConstructor{app}\AgdaSpace{}%
\AgdaBound{t₁}\AgdaSpace{}%
\AgdaBound{t₂}\AgdaSpace{}%
\AgdaOperator{\AgdaFunction{⟧₀}}\AgdaSpace{}%
\AgdaBound{ε}\AgdaSpace{}%
\AgdaSymbol{=}\AgdaSpace{}%
\AgdaOperator{\AgdaFunction{⟦}}\AgdaSpace{}%
\AgdaBound{t₁}\AgdaSpace{}%
\AgdaOperator{\AgdaFunction{⟧₀}}\AgdaSpace{}%
\AgdaBound{ε}\AgdaSpace{}%
\AgdaSymbol{(λ}\AgdaSpace{}%
\AgdaBound{x}\AgdaSpace{}%
\AgdaSymbol{→}\AgdaSpace{}%
\AgdaBound{x}\AgdaSymbol{)}\AgdaSpace{}%
\AgdaSymbol{(}\AgdaSpace{}%
\AgdaOperator{\AgdaFunction{⟦}}\AgdaSpace{}%
\AgdaBound{t₂}\AgdaSpace{}%
\AgdaOperator{\AgdaFunction{⟧₀}}\AgdaSpace{}%
\AgdaBound{ε}\AgdaSpace{}%
\AgdaSymbol{)}\<%
\end{code}

\medskip
\mysubparagraph{Normalisation by evaluation.}
The unquote and quote functions~(see~(\ref{UnquoteQuoteMaps})) are
implemented:
\begin{code}%
\>[0]\AgdaComment{-- unquote and quote}\<%
\\
\>[0]\AgdaKeyword{mutual}\<%
\\
\>[0][@{}l@{\AgdaIndent{0}}]%
\>[2]\AgdaFunction{u}\AgdaSpace{}%
\AgdaSymbol{:}\AgdaSpace{}%
\AgdaSymbol{\{}\AgdaBound{α}\AgdaSpace{}%
\AgdaSymbol{:}\AgdaSpace{}%
\AgdaDatatype{𝓣}\AgdaSymbol{\}}\AgdaSpace{}%
\AgdaSymbol{\{}\AgdaBound{Γ}\AgdaSpace{}%
\AgdaSymbol{:}\AgdaSpace{}%
\AgdaDatatype{𝔽↓}\AgdaSpace{}%
\AgdaDatatype{𝓣}\AgdaSymbol{\}}\AgdaSpace{}%
\AgdaSymbol{→}\AgdaSpace{}%
\AgdaDatatype{𝓜}\AgdaSpace{}%
\AgdaBound{α}\AgdaSpace{}%
\AgdaBound{Γ}\AgdaSpace{}%
\AgdaSymbol{→}\AgdaSpace{}%
\AgdaOperator{\AgdaFunction{⟦}}\AgdaSpace{}%
\AgdaBound{α}\AgdaSpace{}%
\AgdaOperator{\AgdaFunction{⟧}}\AgdaSpace{}%
\AgdaBound{Γ}\<%
\\
\>[2]\AgdaFunction{u}\AgdaSpace{}%
\AgdaSymbol{\{}\AgdaInductiveConstructor{θ}\AgdaSpace{}%
\AgdaSymbol{\AgdaUnderscore{}\}}\AgdaSpace{}%
\AgdaBound{m}\AgdaSpace{}%
\AgdaSymbol{=}\AgdaSpace{}%
\AgdaBound{m}\<%
\\
\>[2]\AgdaFunction{u}\AgdaSpace{}%
\AgdaSymbol{\{}\AgdaInductiveConstructor{υ}\AgdaSymbol{\}}\AgdaSpace{}%
\AgdaSymbol{\AgdaUnderscore{}}\AgdaSpace{}%
\AgdaSymbol{=}\AgdaSpace{}%
\AgdaSymbol{\AgdaUnderscore{}}\<%
\\
\>[2]\AgdaFunction{u}\AgdaSpace{}%
\AgdaSymbol{\{\AgdaUnderscore{}}\AgdaSpace{}%
\AgdaOperator{\AgdaInductiveConstructor{⋆}}\AgdaSpace{}%
\AgdaSymbol{\AgdaUnderscore{}\}}\AgdaSpace{}%
\AgdaBound{m}\AgdaSpace{}%
\AgdaSymbol{=}\AgdaSpace{}%
\AgdaSymbol{(}\AgdaSpace{}%
\AgdaFunction{u}\AgdaSpace{}%
\AgdaSymbol{(}\AgdaInductiveConstructor{fstₘ}\AgdaSpace{}%
\AgdaBound{m}\AgdaSymbol{)}\AgdaSpace{}%
\AgdaOperator{\AgdaInductiveConstructor{,}}\AgdaSpace{}%
\AgdaFunction{u}\AgdaSpace{}%
\AgdaSymbol{(}\AgdaInductiveConstructor{sndₘ}\AgdaSpace{}%
\AgdaBound{m}\AgdaSymbol{)}\AgdaSpace{}%
\AgdaSymbol{)}\<%
\\
\>[2]\AgdaFunction{u}\AgdaSpace{}%
\AgdaSymbol{\{\AgdaUnderscore{}}\AgdaSpace{}%
\AgdaOperator{\AgdaInductiveConstructor{⇒}}\AgdaSpace{}%
\AgdaSymbol{\AgdaUnderscore{}\}}\AgdaSpace{}%
\AgdaBound{m}\AgdaSpace{}%
\AgdaBound{ρ}\AgdaSpace{}%
\AgdaBound{x}\AgdaSpace{}%
\AgdaSymbol{=}\AgdaSpace{}%
\AgdaFunction{u}\AgdaSpace{}%
\AgdaSymbol{(}\AgdaSpace{}%
\AgdaInductiveConstructor{appₘ}\AgdaSpace{}%
\AgdaSymbol{(}\AgdaBound{m}\AgdaSpace{}%
\AgdaOperator{\AgdaFunction{[}}\AgdaSpace{}%
\AgdaBound{ρ}\AgdaSpace{}%
\AgdaOperator{\AgdaFunction{]ₘ}}\AgdaSymbol{)}\AgdaSpace{}%
\AgdaSymbol{(}\AgdaFunction{q}\AgdaSpace{}%
\AgdaBound{x}\AgdaSymbol{)}\AgdaSpace{}%
\AgdaSymbol{)}\<%
\\
\>[0]\<%
\\
\>[2]\AgdaFunction{q}\AgdaSpace{}%
\AgdaSymbol{:}\AgdaSpace{}%
\AgdaSymbol{\{}\AgdaBound{α}\AgdaSpace{}%
\AgdaSymbol{:}\AgdaSpace{}%
\AgdaDatatype{𝓣}\AgdaSymbol{\}}\AgdaSpace{}%
\AgdaSymbol{\{}\AgdaBound{Γ}\AgdaSpace{}%
\AgdaSymbol{:}\AgdaSpace{}%
\AgdaDatatype{𝔽↓}\AgdaSpace{}%
\AgdaDatatype{𝓣}\AgdaSymbol{\}}\AgdaSpace{}%
\AgdaSymbol{→}\AgdaSpace{}%
\AgdaOperator{\AgdaFunction{⟦}}\AgdaSpace{}%
\AgdaBound{α}\AgdaSpace{}%
\AgdaOperator{\AgdaFunction{⟧}}\AgdaSpace{}%
\AgdaBound{Γ}\AgdaSpace{}%
\AgdaSymbol{→}\AgdaSpace{}%
\AgdaDatatype{𝓝}\AgdaSpace{}%
\AgdaBound{α}\AgdaSpace{}%
\AgdaBound{Γ}\<%
\\
\>[2]\AgdaFunction{q}\AgdaSpace{}%
\AgdaSymbol{\{}\AgdaInductiveConstructor{θ}\AgdaSpace{}%
\AgdaSymbol{\AgdaUnderscore{}\}}\AgdaSpace{}%
\AgdaSymbol{(}\AgdaInductiveConstructor{varₘ}\AgdaSpace{}%
\AgdaBound{x}\AgdaSymbol{)}\AgdaSpace{}%
\AgdaSymbol{=}\AgdaSpace{}%
\AgdaInductiveConstructor{varₙ}\AgdaSpace{}%
\AgdaBound{x}\<%
\\
\>[2]\AgdaFunction{q}\AgdaSpace{}%
\AgdaSymbol{\{}\AgdaInductiveConstructor{θ}\AgdaSpace{}%
\AgdaSymbol{\AgdaUnderscore{}\}}\AgdaSpace{}%
\AgdaSymbol{(}\AgdaInductiveConstructor{fstₘ}\AgdaSpace{}%
\AgdaBound{m}\AgdaSymbol{)}\AgdaSpace{}%
\AgdaSymbol{=}\AgdaSpace{}%
\AgdaInductiveConstructor{fstₙ}\AgdaSpace{}%
\AgdaBound{m}\<%
\\
\>[2]\AgdaFunction{q}\AgdaSpace{}%
\AgdaSymbol{\{}\AgdaInductiveConstructor{θ}\AgdaSpace{}%
\AgdaSymbol{\AgdaUnderscore{}\}}\AgdaSpace{}%
\AgdaSymbol{(}\AgdaInductiveConstructor{sndₘ}\AgdaSpace{}%
\AgdaBound{m}\AgdaSymbol{)}\AgdaSpace{}%
\AgdaSymbol{=}\AgdaSpace{}%
\AgdaInductiveConstructor{sndₙ}\AgdaSpace{}%
\AgdaBound{m}\<%
\\
\>[2]\AgdaFunction{q}\AgdaSpace{}%
\AgdaSymbol{\{}\AgdaInductiveConstructor{θ}\AgdaSpace{}%
\AgdaSymbol{\AgdaUnderscore{}\}}\AgdaSpace{}%
\AgdaSymbol{(}\AgdaInductiveConstructor{appₘ}\AgdaSpace{}%
\AgdaBound{m}\AgdaSpace{}%
\AgdaBound{n}\AgdaSymbol{)}\AgdaSpace{}%
\AgdaSymbol{=}\AgdaSpace{}%
\AgdaInductiveConstructor{appₙ}\AgdaSpace{}%
\AgdaBound{m}\AgdaSpace{}%
\AgdaBound{n}\<%
\\
\>[2]\AgdaFunction{q}\AgdaSpace{}%
\AgdaSymbol{\{}\AgdaInductiveConstructor{υ}\AgdaSymbol{\}}\AgdaSpace{}%
\AgdaSymbol{\AgdaUnderscore{}}\AgdaSpace{}%
\AgdaSymbol{=}\AgdaSpace{}%
\AgdaInductiveConstructor{unitₙ}\<%
\\
\>[2]\AgdaFunction{q}\AgdaSpace{}%
\AgdaSymbol{\{\AgdaUnderscore{}}\AgdaSpace{}%
\AgdaOperator{\AgdaInductiveConstructor{⋆}}\AgdaSpace{}%
\AgdaSymbol{\AgdaUnderscore{}\}}\AgdaSpace{}%
\AgdaSymbol{(}\AgdaBound{x₁}\AgdaSpace{}%
\AgdaOperator{\AgdaInductiveConstructor{,}}\AgdaSpace{}%
\AgdaBound{x₂}\AgdaSymbol{)}\AgdaSpace{}%
\AgdaSymbol{=}\AgdaSpace{}%
\AgdaInductiveConstructor{pairₙ}\AgdaSpace{}%
\AgdaSymbol{(}\AgdaFunction{q}\AgdaSpace{}%
\AgdaBound{x₁}\AgdaSymbol{)}\AgdaSpace{}%
\AgdaSymbol{(}\AgdaFunction{q}\AgdaSpace{}%
\AgdaBound{x₂}\AgdaSymbol{)}\<%
\\
\>[2]\AgdaFunction{q}\AgdaSpace{}%
\AgdaSymbol{\{\AgdaUnderscore{}}\AgdaSpace{}%
\AgdaOperator{\AgdaInductiveConstructor{⇒}}\AgdaSpace{}%
\AgdaSymbol{\AgdaUnderscore{}\}}\AgdaSpace{}%
\AgdaBound{f}\AgdaSpace{}%
\AgdaSymbol{=}\AgdaSpace{}%
\AgdaInductiveConstructor{absₙ}\AgdaSymbol{(}\AgdaSpace{}%
\AgdaFunction{q}\AgdaSymbol{(}\AgdaSpace{}%
\AgdaBound{f}\AgdaSpace{}%
\AgdaInductiveConstructor{↑}\AgdaSpace{}%
\AgdaSymbol{(}\AgdaFunction{u}\AgdaSpace{}%
\AgdaSymbol{(}\AgdaInductiveConstructor{varₘ}\AgdaSpace{}%
\AgdaInductiveConstructor{•}\AgdaSymbol{))}\AgdaSpace{}%
\AgdaSymbol{)}\AgdaSpace{}%
\AgdaSymbol{)}\<%
\end{code}
\begin{remark}
A technical point to note is that the implementation of
\AgdaFunction{q}\AgdaSpace{}\AgdaSymbol{\{\AgdaUnderscore{}}\AgdaSpace{}%
\AgdaOperator{\AgdaInductiveConstructor{⇒}}\AgdaSpace{}%
\AgdaSymbol{\AgdaUnderscore{}\}}\AgdaSpace{}\AgdaBound{f}\AgdaSpace{}
arises from the functorial action of presheaf
exponentiation~(in particular with respect to the evaluation
map~(\ref{RepresentableExponentialEvaluationMap})),
which in this case instatiates to the equivalent expression
\mbox{%
\AgdaInductiveConstructor{absₙ}\AgdaSymbol{(}\AgdaSpace{}%
\AgdaFunction{q}\AgdaSymbol{(}\AgdaSpace{}%
\AgdaBound{f}\AgdaSpace{}\AgdaFunction{[}\AgdaSpace{}%
\AgdaInductiveConstructor{↑}\AgdaSpace{}\AgdaFunction{]}\AgdaSpace{}%
\AgdaSymbol{(λ}\AgdaSpace{}\AgdaBound{x}\AgdaSpace{}%
\AgdaSymbol{→}\AgdaSpace{}\AgdaBound{x}\AgdaSymbol{)}\AgdaSpace{}%
\AgdaSymbol{(}\AgdaFunction{u}\AgdaSpace{}%
\AgdaSymbol{(}\AgdaInductiveConstructor{varₘ}\AgdaSpace{}%
\AgdaInductiveConstructor{•}\AgdaSymbol{))}\AgdaSpace{}%
\AgdaSymbol{)}\AgdaSpace{}\AgdaSymbol{)}%
}.
\end{remark}

Finally, the normalisation function~(see~(\ref{NormalisationFunction})) is:
\begin{code}%
\>[0]\AgdaComment{-- nbe}\<%
\\
\>[0]\AgdaFunction{nf}\AgdaSpace{}%
\AgdaSymbol{:}\AgdaSpace{}%
\AgdaSymbol{\{}\AgdaBound{α}\AgdaSpace{}%
\AgdaSymbol{:}\AgdaSpace{}%
\AgdaDatatype{𝓣}\AgdaSymbol{\}}\AgdaSpace{}%
\AgdaSymbol{\{}\AgdaBound{Γ}\AgdaSpace{}%
\AgdaSymbol{:}\AgdaSpace{}%
\AgdaDatatype{𝔽↓}\AgdaSpace{}%
\AgdaDatatype{𝓣}\AgdaSymbol{\}}\AgdaSpace{}%
\AgdaSymbol{→}\AgdaSpace{}%
\AgdaDatatype{𝓛}\AgdaSpace{}%
\AgdaBound{α}\AgdaSpace{}%
\AgdaBound{Γ}\AgdaSpace{}%
\AgdaSymbol{→}\AgdaSpace{}%
\AgdaDatatype{𝓝}\AgdaSpace{}%
\AgdaBound{α}\AgdaSpace{}%
\AgdaBound{Γ}\<%
\\
\>[0]\AgdaFunction{nf}\AgdaSpace{}%
\AgdaBound{t}%
\>[1369I]\AgdaSymbol{=}\AgdaSpace{}%
\AgdaFunction{q}\AgdaSpace{}%
\AgdaSymbol{(}\AgdaSpace{}%
\AgdaOperator{\AgdaFunction{⟦}}\AgdaSpace{}%
\AgdaBound{t}\AgdaSpace{}%
\AgdaOperator{\AgdaFunction{⟧₀}}\AgdaSpace{}%
\AgdaSymbol{(}\AgdaSpace{}%
\AgdaFunction{Π₀}\AgdaSpace{}%
\AgdaSymbol{(}\AgdaFunction{u}\AgdaSpace{}%
\AgdaOperator{\AgdaFunction{∘}}\AgdaSpace{}%
\AgdaInductiveConstructor{varₘ}\AgdaSymbol{)}\AgdaSpace{}%
\AgdaFunction{xs}\AgdaSpace{}%
\AgdaSymbol{)}\AgdaSpace{}%
\AgdaSymbol{)}\<%
\\
\>[.][@{}l@{}]\<[1369I]%
\>[5]\AgdaKeyword{where}\<%
\\
\>[5][@{}l@{\AgdaIndent{0}}]%
\>[7]\AgdaFunction{Π₀}\AgdaSpace{}%
\AgdaSymbol{:}\AgdaSpace{}%
\AgdaSymbol{(}\AgdaBound{f}\AgdaSpace{}%
\AgdaSymbol{:}\AgdaSpace{}%
\AgdaSymbol{\{}\AgdaBound{α}\AgdaSpace{}%
\AgdaSymbol{:}\AgdaSpace{}%
\AgdaDatatype{𝓣}\AgdaSymbol{\}}\AgdaSpace{}%
\AgdaSymbol{\{}\AgdaBound{Δ}\AgdaSpace{}%
\AgdaSymbol{:}\AgdaSpace{}%
\AgdaDatatype{𝔽↓}\AgdaSpace{}%
\AgdaDatatype{𝓣}\AgdaSymbol{\}}\AgdaSpace{}%
\AgdaSymbol{→}\AgdaSpace{}%
\AgdaDatatype{V}\AgdaSpace{}%
\AgdaBound{α}\AgdaSpace{}%
\AgdaBound{Δ}\AgdaSpace{}%
\AgdaSymbol{→}\AgdaSpace{}%
\AgdaOperator{\AgdaFunction{⟦\AgdaUnderscore{}⟧}}\AgdaSpace{}%
\AgdaBound{α}\AgdaSpace{}%
\AgdaBound{Δ}\AgdaSymbol{)}\AgdaSpace{}%
\AgdaSymbol{\{}\AgdaBound{Γ}\AgdaSpace{}%
\AgdaBound{Δ}\AgdaSpace{}%
\AgdaSymbol{:}\AgdaSpace{}%
\AgdaDatatype{𝔽↓}\AgdaSpace{}%
\AgdaDatatype{𝓣}\AgdaSymbol{\}}\AgdaSpace{}%
\AgdaSymbol{→}\AgdaSpace{}%
\AgdaFunction{Π}\AgdaSpace{}%
\AgdaBound{Γ}\AgdaSpace{}%
\AgdaDatatype{V}\AgdaSpace{}%
\AgdaBound{Δ}\AgdaSpace{}%
\AgdaSymbol{→}\AgdaSpace{}%
\AgdaFunction{Π}\AgdaSpace{}%
\AgdaBound{Γ}\AgdaSpace{}%
\AgdaOperator{\AgdaFunction{⟦\AgdaUnderscore{}⟧}}\AgdaSpace{}%
\AgdaBound{Δ}\<%
\\
\>[7]\AgdaFunction{Π₀}\AgdaSpace{}%
\AgdaSymbol{\AgdaUnderscore{}}\AgdaSpace{}%
\AgdaSymbol{\{}\AgdaInductiveConstructor{·}\AgdaSymbol{\}}\AgdaSpace{}%
\AgdaSymbol{\AgdaUnderscore{}}\AgdaSpace{}%
\AgdaSymbol{=}\AgdaSpace{}%
\AgdaSymbol{\AgdaUnderscore{}}\<%
\\
\>[7]\AgdaFunction{Π₀}\AgdaSpace{}%
\AgdaBound{f}\AgdaSpace{}%
\AgdaSymbol{\{\AgdaUnderscore{}}\AgdaSpace{}%
\AgdaOperator{\AgdaInductiveConstructor{::}}\AgdaSpace{}%
\AgdaSymbol{\AgdaUnderscore{}\}}\AgdaSpace{}%
\AgdaSymbol{(}\AgdaSpace{}%
\AgdaBound{xs}\AgdaSpace{}%
\AgdaOperator{\AgdaInductiveConstructor{,}}\AgdaSpace{}%
\AgdaBound{x}\AgdaSpace{}%
\AgdaSymbol{)}\AgdaSpace{}%
\AgdaSymbol{=}\AgdaSpace{}%
\AgdaSymbol{(}\AgdaSpace{}%
\AgdaFunction{Π₀}\AgdaSpace{}%
\AgdaBound{f}\AgdaSpace{}%
\AgdaBound{xs}\AgdaSpace{}%
\AgdaOperator{\AgdaInductiveConstructor{,}}\AgdaSpace{}%
\AgdaBound{f}\AgdaSpace{}%
\AgdaBound{x}\AgdaSpace{}%
\AgdaSymbol{)}\<%
\\
\>[0]\<%
\\
\>[7]\AgdaFunction{xs}\AgdaSpace{}%
\AgdaSymbol{:}\AgdaSpace{}%
\AgdaSymbol{\{}\AgdaBound{Γ}\AgdaSpace{}%
\AgdaSymbol{:}\AgdaSpace{}%
\AgdaDatatype{𝔽↓}\AgdaSpace{}%
\AgdaDatatype{𝓣}\AgdaSymbol{\}}\AgdaSpace{}%
\AgdaSymbol{→}\AgdaSpace{}%
\AgdaFunction{Π}\AgdaSpace{}%
\AgdaBound{Γ}\AgdaSpace{}%
\AgdaDatatype{V}\AgdaSpace{}%
\AgdaBound{Γ}\<%
\\
\>[7]\AgdaFunction{xs}\AgdaSpace{}%
\AgdaSymbol{\{}\AgdaInductiveConstructor{·}\AgdaSymbol{\}}\AgdaSpace{}%
\AgdaSymbol{=}\AgdaSpace{}%
\AgdaSymbol{\AgdaUnderscore{}}\<%
\\
\>[7]\AgdaFunction{xs}\AgdaSpace{}%
\AgdaSymbol{\{\AgdaUnderscore{}}\AgdaSpace{}%
\AgdaOperator{\AgdaInductiveConstructor{::}}%
\>[1454I]\AgdaSymbol{\AgdaUnderscore{}\}}\AgdaSpace{}%
\AgdaSymbol{=}\AgdaSpace{}%
\AgdaSymbol{(}%
\>[24]\AgdaFunction{xs}\AgdaSpace{}%
\AgdaOperator{\AgdaFunction{[}}\AgdaSpace{}%
\AgdaInductiveConstructor{↑}\AgdaSpace{}%
\AgdaOperator{\AgdaFunction{]ᵥ}}\AgdaSpace{}%
\AgdaOperator{\AgdaInductiveConstructor{,}}\AgdaSpace{}%
\AgdaInductiveConstructor{•}\AgdaSpace{}%
\AgdaSymbol{)}\<%
\\
\>[1454I][@{}l@{\AgdaIndent{0}}]%
\>[17]\AgdaKeyword{where}\<%
\\
\>[17][@{}l@{\AgdaIndent{0}}]%
\>[19]\AgdaOperator{\AgdaFunction{\AgdaUnderscore{}[\AgdaUnderscore{}]ᵥ}}\AgdaSpace{}%
\AgdaSymbol{:}\AgdaSpace{}%
\AgdaSymbol{\{}\AgdaBound{Ξ}\AgdaSpace{}%
\AgdaBound{Δ}\AgdaSpace{}%
\AgdaBound{Γ}\AgdaSpace{}%
\AgdaSymbol{:}\AgdaSpace{}%
\AgdaDatatype{𝔽↓}\AgdaSpace{}%
\AgdaDatatype{𝓣}\AgdaSymbol{\}}\AgdaSpace{}%
\AgdaSymbol{→}\AgdaSpace{}%
\AgdaFunction{Π}\AgdaSpace{}%
\AgdaBound{Ξ}\AgdaSpace{}%
\AgdaDatatype{V}\AgdaSpace{}%
\AgdaBound{Δ}\AgdaSpace{}%
\AgdaSymbol{→}\AgdaSpace{}%
\AgdaFunction{𝔽↓₀}\AgdaSpace{}%
\AgdaDatatype{𝓣}\AgdaSymbol{(}\AgdaSpace{}%
\AgdaBound{Δ}\AgdaSpace{}%
\AgdaOperator{\AgdaInductiveConstructor{,}}\AgdaSpace{}%
\AgdaBound{Γ}\AgdaSpace{}%
\AgdaSymbol{)}\AgdaSpace{}%
\AgdaSymbol{→}\AgdaSpace{}%
\AgdaFunction{Π}\AgdaSpace{}%
\AgdaBound{Ξ}\AgdaSpace{}%
\AgdaDatatype{V}\AgdaSpace{}%
\AgdaBound{Γ}\<%
\\
\>[19]\AgdaOperator{\AgdaFunction{\AgdaUnderscore{}[\AgdaUnderscore{}]ᵥ}}\AgdaSpace{}%
\AgdaSymbol{\{}\AgdaInductiveConstructor{·}\AgdaSymbol{\}}\AgdaSpace{}%
\AgdaSymbol{=}\AgdaSpace{}%
\AgdaSymbol{\AgdaUnderscore{}}\<%
\\
\>[19]\AgdaOperator{\AgdaFunction{\AgdaUnderscore{}[\AgdaUnderscore{}]ᵥ}}\AgdaSpace{}%
\AgdaSymbol{\{\AgdaUnderscore{}}\AgdaSpace{}%
\AgdaOperator{\AgdaInductiveConstructor{::}}\AgdaSpace{}%
\AgdaSymbol{\AgdaUnderscore{}\}}\AgdaSpace{}%
\AgdaSymbol{(}\AgdaSpace{}%
\AgdaBound{xs}\AgdaSpace{}%
\AgdaOperator{\AgdaInductiveConstructor{,}}\AgdaSpace{}%
\AgdaBound{x}\AgdaSpace{}%
\AgdaSymbol{)}\AgdaSpace{}%
\AgdaBound{ρ}\AgdaSpace{}%
\AgdaSymbol{=}\AgdaSpace{}%
\AgdaSymbol{(}\AgdaSpace{}%
\AgdaBound{xs}\AgdaSpace{}%
\AgdaOperator{\AgdaFunction{[}}\AgdaSpace{}%
\AgdaBound{ρ}\AgdaSpace{}%
\AgdaOperator{\AgdaFunction{]ᵥ}}\AgdaSpace{}%
\AgdaOperator{\AgdaInductiveConstructor{,}}\AgdaSpace{}%
\AgdaBound{ρ}\AgdaSpace{}%
\AgdaBound{x}\AgdaSpace{}%
\AgdaSymbol{)}\<%
\end{code}

\begin{code}[hide]%
\>[0]\AgdaComment{-- examples}\<%
\\
\>[0]\AgdaFunction{A}\AgdaSpace{}%
\AgdaSymbol{:}\AgdaSpace{}%
\AgdaDatatype{𝓣}\<%
\\
\>[0]\AgdaFunction{A}\AgdaSpace{}%
\AgdaSymbol{=}\AgdaSpace{}%
\AgdaInductiveConstructor{θ}\AgdaSpace{}%
\AgdaInductiveConstructor{zero}\AgdaSpace{}%
\AgdaOperator{\AgdaInductiveConstructor{⋆}}\AgdaSpace{}%
\AgdaSymbol{(}\AgdaInductiveConstructor{θ}\AgdaSpace{}%
\AgdaInductiveConstructor{zero}\AgdaSpace{}%
\AgdaOperator{\AgdaInductiveConstructor{⋆}}\AgdaSpace{}%
\AgdaInductiveConstructor{υ}\AgdaSpace{}%
\AgdaSymbol{)}\<%
\\
\\[\AgdaEmptyExtraSkip]%
\>[0]\AgdaFunction{s}\AgdaSpace{}%
\AgdaSymbol{:}\AgdaSpace{}%
\AgdaDatatype{𝓛}\AgdaSpace{}%
\AgdaSymbol{(}\AgdaSpace{}%
\AgdaFunction{A}\AgdaSpace{}%
\AgdaOperator{\AgdaInductiveConstructor{⇒}}\AgdaSpace{}%
\AgdaFunction{A}\AgdaSpace{}%
\AgdaSymbol{)}\AgdaSpace{}%
\AgdaInductiveConstructor{·}\<%
\\
\>[0]\AgdaFunction{s}\AgdaSpace{}%
\AgdaSymbol{=}\AgdaSpace{}%
\AgdaInductiveConstructor{abs}\AgdaSpace{}%
\AgdaSymbol{(}\AgdaInductiveConstructor{var}\AgdaSpace{}%
\AgdaInductiveConstructor{•}\AgdaSymbol{)}\<%
\\
\\[\AgdaEmptyExtraSkip]%
\>[0]\AgdaFunction{B}\AgdaSpace{}%
\AgdaSymbol{:}\AgdaSpace{}%
\AgdaDatatype{𝓣}\<%
\\
\>[0]\AgdaFunction{B}\AgdaSpace{}%
\AgdaSymbol{=}\AgdaSpace{}%
\AgdaInductiveConstructor{θ}\AgdaSpace{}%
\AgdaInductiveConstructor{zero}\AgdaSpace{}%
\AgdaOperator{\AgdaInductiveConstructor{⋆}}\AgdaSpace{}%
\AgdaSymbol{(}\AgdaInductiveConstructor{θ}\AgdaSpace{}%
\AgdaInductiveConstructor{zero}\AgdaSpace{}%
\AgdaOperator{\AgdaInductiveConstructor{⇒}}\AgdaSpace{}%
\AgdaInductiveConstructor{θ}\AgdaSpace{}%
\AgdaInductiveConstructor{zero}\AgdaSpace{}%
\AgdaSymbol{)}\<%
\\
\\[\AgdaEmptyExtraSkip]%
\>[0]\AgdaFunction{t1}\AgdaSpace{}%
\AgdaSymbol{:}\AgdaSpace{}%
\AgdaDatatype{𝓛}\AgdaSpace{}%
\AgdaSymbol{(}\AgdaSpace{}%
\AgdaFunction{B}\AgdaSpace{}%
\AgdaOperator{\AgdaInductiveConstructor{⇒}}\AgdaSpace{}%
\AgdaFunction{B}\AgdaSpace{}%
\AgdaSymbol{)}\AgdaSpace{}%
\AgdaInductiveConstructor{·}\<%
\\
\>[0]\AgdaFunction{t1}\AgdaSpace{}%
\AgdaSymbol{=}\AgdaSpace{}%
\AgdaInductiveConstructor{abs}\AgdaSpace{}%
\AgdaSymbol{(}\AgdaInductiveConstructor{var}\AgdaSpace{}%
\AgdaInductiveConstructor{•}\AgdaSymbol{)}\<%
\\
\\[\AgdaEmptyExtraSkip]%
\>[0]\AgdaFunction{t2}\AgdaSpace{}%
\AgdaSymbol{:}\AgdaSpace{}%
\AgdaDatatype{𝓛}\AgdaSpace{}%
\AgdaSymbol{(}\AgdaSpace{}%
\AgdaSymbol{(}\AgdaSpace{}%
\AgdaSymbol{(}\AgdaSpace{}%
\AgdaSymbol{(}\AgdaFunction{B}\AgdaSpace{}%
\AgdaOperator{\AgdaInductiveConstructor{⇒}}\AgdaSpace{}%
\AgdaFunction{B}\AgdaSymbol{)}\AgdaSpace{}%
\AgdaOperator{\AgdaInductiveConstructor{⇒}}\AgdaSpace{}%
\AgdaSymbol{(}\AgdaFunction{B}\AgdaSpace{}%
\AgdaOperator{\AgdaInductiveConstructor{⇒}}\AgdaSpace{}%
\AgdaFunction{B}\AgdaSymbol{)}\AgdaSpace{}%
\AgdaSymbol{)}\AgdaSpace{}%
\AgdaOperator{\AgdaInductiveConstructor{⇒}}\AgdaSpace{}%
\AgdaSymbol{(}\AgdaSpace{}%
\AgdaSymbol{(}\AgdaFunction{B}\AgdaSpace{}%
\AgdaOperator{\AgdaInductiveConstructor{⇒}}\AgdaSpace{}%
\AgdaFunction{B}\AgdaSymbol{)}\AgdaSpace{}%
\AgdaOperator{\AgdaInductiveConstructor{⇒}}\AgdaSpace{}%
\AgdaSymbol{(}\AgdaFunction{B}\AgdaSpace{}%
\AgdaOperator{\AgdaInductiveConstructor{⇒}}\AgdaSpace{}%
\AgdaFunction{B}\AgdaSymbol{)}\AgdaSpace{}%
\AgdaSymbol{)}\AgdaSpace{}%
\AgdaSymbol{)}\AgdaSpace{}%
\AgdaOperator{\AgdaInductiveConstructor{⇒}}\AgdaSpace{}%
\AgdaSymbol{(}\AgdaSpace{}%
\AgdaSymbol{(}\AgdaSpace{}%
\AgdaSymbol{(}\AgdaFunction{B}\AgdaSpace{}%
\AgdaOperator{\AgdaInductiveConstructor{⇒}}\AgdaSpace{}%
\AgdaFunction{B}\AgdaSymbol{)}\AgdaSpace{}%
\AgdaOperator{\AgdaInductiveConstructor{⇒}}\AgdaSpace{}%
\AgdaSymbol{(}\AgdaFunction{B}\AgdaSpace{}%
\AgdaOperator{\AgdaInductiveConstructor{⇒}}\AgdaSpace{}%
\AgdaFunction{B}\AgdaSymbol{)}\AgdaSpace{}%
\AgdaSymbol{)}\AgdaSpace{}%
\AgdaOperator{\AgdaInductiveConstructor{⇒}}\AgdaSpace{}%
\AgdaSymbol{(}\AgdaSpace{}%
\AgdaSymbol{(}\AgdaFunction{B}\AgdaSpace{}%
\AgdaOperator{\AgdaInductiveConstructor{⇒}}\AgdaSpace{}%
\AgdaFunction{B}\AgdaSymbol{)}\AgdaSpace{}%
\AgdaOperator{\AgdaInductiveConstructor{⇒}}\AgdaSpace{}%
\AgdaSymbol{(}\AgdaFunction{B}\AgdaSpace{}%
\AgdaOperator{\AgdaInductiveConstructor{⇒}}\AgdaSpace{}%
\AgdaFunction{B}\AgdaSymbol{)}\AgdaSpace{}%
\AgdaSymbol{)}\AgdaSpace{}%
\AgdaSymbol{)}\AgdaSpace{}%
\AgdaSymbol{)}\AgdaSpace{}%
\AgdaInductiveConstructor{·}\<%
\\
\>[0]\AgdaFunction{t2}\AgdaSpace{}%
\AgdaSymbol{=}\AgdaSpace{}%
\AgdaInductiveConstructor{abs}\AgdaSpace{}%
\AgdaSymbol{(}\AgdaInductiveConstructor{var}\AgdaSpace{}%
\AgdaInductiveConstructor{•}\AgdaSymbol{)}\<%
\end{code}

\section*{\Large Conclusion}

We have given a new categorical view of normalisation by evaluation for typed
lambda calculus, both for extensional 
and intensional 
normalisation problems.

\medskip
Extensional normalisation was obtained from a basic lemma unifying
definability and normalisation.  Our analysis has the important methodological
consequence of providing guidance when looking for normal forms.  Indeed, a
basic lemma based on the definability result of Fiore and
Simpson~\cite{FioreSimpson} via Grothendiek logical relations led to syntactic
counterparts of the normal forms of Altenkirch, Dybjer, Hofmann, and
Scott~\cite{ADHS} and has been applied to establish extensional normalisation
for the typed lambda calculus with empty and sum
types~\cite{FioreDiCosmoBalat}.  Along this line of research, one can study
normalisation for other calculi for which definability results based on
Kripke relations have been obtained ---as classical linear
logic~\cite{Streicher}, for instance. 

\medskip
The approach to normalisation by evaluation presented in the paper is novel,
chiefly, in the following respects.  
\begin{itemize}
  \item
    The refinement from the extensional setting to the intensional one
    leading to the formalisation of normalisation by evaluation via
    categorical glueing.

  \item
    The use of an algebraic framework to structure both the development and
    proofs culminating in the definition of the normalisation function within
    a simply typed metatheory. 

  \item
    The synthesis of a normalisation-by-evaluation program in a
    dependently-typed functional programming language.
\end{itemize}

The obtained abstract normalisation algorithm synthesises various concrete
implementations.  Its specialisation to particular implementations of abstract
syntax directly yields normalisation programs for concrete syntactic 
representations.
In particular, we have provided a normalisation-by-evaluation program for the
type-and-scope safe, intrinsically-typed encoding of typed lambda
terms~\cite{AltenkirchReus,BentonEtAl,AllaisEtAl}.
How the abstract setting is related to representations of binding based on
generating globally unique identifiers, say as in~\cite{Filinski01}, needs to
be investigated.  

The role of categorical glueing in our analysis is reminiscent of
realisability.  It would be interesting to understand whether there are
connections to the modified realisability approach of Berger~\cite{Berger}. 

\paragraph{Acknowledgements.}
The basis for this work, which was motivated by a question of Roberto
Di~Cosmo, was done during a visit to PPS, Universit\'e Paris~7 in July~2001
organised by Paul-Andr\'e Melli\`es and supported by the CNRS.
Discussions with Pierre-Louis Curien and Vincent Danos are gratefully
acknowledged.

\appendix

\section{Homomorphism property of $\Termsem{}:\TermPSh{}\to\SemPSh{}$}
\label{Appendix_l}

$$\xymatrix{
& \ar[dl]_-{\varmap_\atype} \VarPSh\atype
\ar[dr]^-{\mbox{\scriptsize$\semfun\sembracket\simplefstarg$}} & \\
\TermPSh\atype \ar[rr]_-{\Termsem\atype} & & 
\Semcat(\semfun\sembracket\fstarg,\semfun\sembracket\atype)
}$$
$$\xymatrix{
& \ar[dl]_-{\unitmap_\UnitType} \TerminalPSh \ar[dr]|-{\iso} & \\
\TermPSh\UnitType \ar[rr]_-{\Termsem\UnitType} & & 
\Semcat(\semfun\sembracket\fstarg,\semfun\sembracket\UnitType)
}$$
$$\begin{array}{ccc}
\xymatrix{
\TermPSh{\atype\ProductType\atype'} \ar[d]_-{\fstmap^{(\atype')}_\atype}
\ar[r]^-{\Termsem{\atype\ProductType\atype'}} & 
\Semcat(\semfun\sembracket\fstarg,
        \semfun\sembracket{\atype\ProductType\atype'})
\ar[d]^-{\proj_1 \comp \simplefstarg}
\\
\TermPSh{\atype} \ar[r]_-{\Termsem\atype} & 
\Semcat(\semfun\sembracket\fstarg,\semfun\sembracket{\atype})
}
& \qquad &
\xymatrix{
\TermPSh{\atype'\ProductType\atype} \ar[d]_-{\sndmap^{(\atype')}_\atype}
\ar[r]^-{\Termsem{\atype'\ProductType\atype}} & 
\Semcat(\semfun\sembracket\fstarg,
        \semfun\sembracket{\atype'\ProductType\atype})
\ar[d]^-{\proj_2 \comp \simplefstarg}
\\
\TermPSh{\atype} \ar[r]_-{\Termsem\atype} & 
\Semcat(\semfun\sembracket\fstarg,\semfun\sembracket{\atype})
}
\end{array}$$
$$\xymatrix{
\TermPSh\tau \CatProduct \TermPSh{\tau'} 
\ar[d]_-{\pairmap_{\tau\ProductType\tau'}} 
\ar[r]^-{\Termsem\tau\CatProduct\Termsem{\tau'}} & 
\Semcat(\semfun\sembracket\fstarg,\semfun\sembracket\tau)
\CatProduct
\Semcat(\semfun\sembracket\fstarg,\semfun\sembracket{\tau'})
\ar[d]|-{\iso}
\\
\TermPSh{\tau\ProductType\tau'}
\ar[r]_-{\Termsem{\tau\ProductType\tau'}} & 
\Semcat(\semfun\sembracket\fstarg,\semfun\sembracket{\tau\ProductType\tau'})
}$$
$$\xymatrix{
\TermPSh{\tau'\ArrowType\tau} \CatProduct \TermPSh{\tau'}
\ar[dd]_-{\appmap^{(\tau')}_\tau} 
\ar[rr]^-{\Termsem{\tau'\ArrowType\tau}\times\Termsem{\tau'}}
&& 
\Semcat(\semfun\sembracket\fstarg,
        \semfun\sembracket{\tau'\ArrowType\tau})
\times
\Semcat(\semfun\sembracket\fstarg,
        \semfun\sembracket{\tau'})
\ar[d]|-{\iso}
\\
&& 
\Semcat(\semfun\sembracket\fstarg,
        (\semfun\sembracket{\tau'} \CatArrow \semfun\sembracket{\tau})
          \CatProduct
        \semfun\sembracket{\tau'}
        )
\ar[d]^-{\eval \comp \simplefstarg}
\\
\TermPSh\tau 
\ar[rr]_-{\Termsem\tau}
&& \Semcat(\semfun\sembracket\fstarg,\semfun\sembracket\tau)
}$$
$$\xymatrix{
(\TermPSh{\tau'})^{\VarPSh\tau} 
\ar[d]_-{\absmap_{\tau\ArrowType\tau'}}
\ar[rr]^-{(\Termsem{\tau'})^{\VarPSh\tau}}
&& 
(\Semcat(\semfun\sembracket\fstarg,\semfun\sembracket{\tau'}))^{\VarPSh\tau}
\ar[d]|-\iso
\\
\TermPSh{\tau\ArrowType\tau'}
\ar[rr]_-{\Termsem{\tau\ArrowType\tau'}}
&&
\Semcat(\semfun\sembracket\fstarg,\semfun\sembracket{\tau\ArrowType\tau'})
}$$

\section{Homomorphism property of
$(\NEsem{},\NFsem{}):(\NEPSh{},\NFPSh{})\to(\SemPSh{},\SemPSh{})$}
\label{Appendix_(m,n)}

\begin{equation} \label{Proof_Of_mu_Operation_One}
\begin{array}{c}\xymatrix{
& \VarPSh\tau 
\ar[ld]_-{\varmap_\tau}
\ar[rd]^-{\mbox{\scriptsize$\semfun\sembracket\simplefstarg$}}
&
\\
\NEPSh\tau 
\ar[rr]_-{\NEsem\tau} 
& & 
\Semcat(\semfun\sembracket\fstarg,\semfun\sembracket\tau)
}\end{array}
\end{equation}
\begin{equation} \label{Proof_Of_mu_Operation_Two}
\begin{array}{c}\xymatrix{
\NEPSh{\tau\ProductType\tau'} 
\ar[d]_-{\fstmap^{(\tau')}_\tau}
\ar[rr]^-{\NEsem{\tau\ProductType\tau'}}
&&
\Semcat(\semfun\sembracket\fstarg,\semfun\sembracket{\tau\ProductType\tau'})
\ar[d]^-{\proj_1 \comp \simplefstarg}
\\
\NEPSh{\tau} 
\ar[rr]_-{\NEsem\tau} 
&&
\Semcat(\semfun\sembracket\fstarg,\semfun\sembracket\tau)
}\end{array}
\end{equation}
\begin{equation} \label{Proof_Of_mu_Operation_Three}
\begin{array}{c}\xymatrix{
\NEPSh{\tau'\ProductType\tau} 
\ar[d]_-{\sndmap^{(\tau')}_\tau}
\ar[rr]^-{\NEsem{\tau'\ProductType\tau}} 
&&
\Semcat(\semfun\sembracket\fstarg,\semfun\sembracket{\tau'\ProductType\tau})
\ar[d]^-{\proj_2 \comp \fstarg}
\\
\NEPSh{\tau}
\ar[rr]_-{\NEsem\tau} 
&&
\Semcat(\semfun\sembracket\fstarg,\semfun\sembracket\tau)
}\end{array}
\end{equation}
\begin{equation} \label{Proof_Of_mu_Operation_Four} 
\begin{array}{c}\xymatrix{
\NEPSh{\tau'\ArrowType\tau} \CatProduct \NFPSh{\tau'}
\ar[dd]_-{\appmap^{(\tau')}_\tau}
\ar[rrr]^-{\NEsem{\tau'\ArrowType\tau}\CatProduct\NFsem{\tau'}}
&&&
\Semcat(\semfun\sembracket\fstarg,
        \semfun\sembracket{\tau'\ArrowType\tau})
\CatProduct
\Semcat(\semfun\sembracket\fstarg,
        \semfun\sembracket{\tau'})
\ar[d]|-{\iso}
\\
&&&
\Semcat(\semfun\sembracket\fstarg,
        (\semfun\sembracket{\tau'}\CatArrow\semfun\sembracket{\tau})
        \CatProduct
        \semfun\sembracket{\tau'})
\ar[d]^-{\eval \comp \fstarg}
\\
\NEPSh\tau
\ar[rrr]_-{\NEsem\tau}
&&&
\Semcat(\semfun\sembracket\fstarg,\semfun\sembracket\tau)
}\end{array}
\end{equation}
\begin{equation} \label{Proof_Of_nu_Operation_One(i)}
\begin{array}{c}\xymatrix{
& \VarPSh\abasetype 
\ar[dl]_-{\varmap_\abasetype}
\ar[rd]^-{\mbox{\scriptsize$\semfun\sembracket\simplefstarg$}}
& 
\\
\NFPSh\abasetype \ar[rr]_-{\NFsem\abasetype} & & 
\Semcat(\semfun\sembracket\fstarg,\semfun\sembracket\abasetype)
}\end{array}
\end{equation}
\begin{equation} \label{Proof_Of_nu_Operation_One(ii)}
\begin{array}{c}\xymatrix{
\NEPSh{\abasetype\ProductType\tau'} 
\ar[d]_-{\fstmap^{(\tau')}_\abasetype}
\ar[rr]^-{\NEsem{\abasetype\ProductType\tau'}}
&&
\Semcat(\semfun\sembracket\fstarg,\semfun\sembracket{\abasetype\ProductType\tau'})
\ar[d]^-{\proj_1 \comp \simplefstarg}
\\
\NFPSh{\abasetype}
\ar[rr]_-{\NFsem\abasetype} 
&&
\Semcat(\semfun\sembracket\fstarg,\semfun\sembracket\abasetype)
}\end{array}
\end{equation}
\begin{equation} \label{Proof_Of_nu_Operation_One(iii)}
\begin{array}{c}\xymatrix{
\NEPSh{\tau'\ProductType\abasetype} 
\ar[d]_-{\sndmap^{(\tau')}_\abasetype}
\ar[rr]^-{\NEsem{\tau'\ProductType\abasetype}} 
&&
\Semcat(\semfun\sembracket\fstarg,
        \semfun\sembracket{\tau'\ProductType\abasetype})
\ar[d]^-{\proj_2 \comp \fstarg}
\\
\NFPSh{\abasetype}
\ar[rr]_-{\NFsem\abasetype} 
&&
\Semcat(\semfun\sembracket\fstarg,\semfun\sembracket\abasetype)
}\end{array}
\end{equation}
\begin{equation} \label{Proof_Of_nu_Operation_One(iv)}
\begin{array}{c}\xymatrix{
\NEPSh{\tau'\ArrowType\abasetype} \CatProduct \NFPSh{\tau'}
\ar[dd]_-{\appmap^{(\tau')}_\abasetype }
\ar[rrr]^-{\NEsem{\tau'\ArrowType\abasetype}\CatProduct\NFsem{\tau'}}
&&&
\Semcat(\semfun\sembracket\fstarg,
        \semfun\sembracket{\tau'\ArrowType\abasetype})
\CatProduct
\Semcat(\semfun\sembracket\fstarg,
        \semfun\sembracket{\tau'})
\ar[d]|-{\iso}
\\
&&&
\Semcat(\semfun\sembracket\fstarg,
        (\semfun\sembracket{\tau'}\CatArrow\semfun\sembracket{\abasetype})
        \CatProduct
        \semfun\sembracket{\tau'})
\ar[d]^-{\eval \comp \fstarg}
\\
\NFPSh\abasetype
\ar[rrr]_-{\NFsem\abasetype}
&&&
\Semcat(\semfun\sembracket\fstarg,\semfun\sembracket\abasetype)
}\end{array}
\end{equation}
\begin{equation} \label{Proof_Of_nu_Operation_Two}
\begin{array}{c}\xymatrix{
& \TerminalPSh 
\ar[dl]_-{\!\!\!\!\!\unitmap_\UnitType}|-{\iso}
\ar[dr]|-{\iso}
& 
\\
\NFPSh\UnitType 
\ar[rr]_-{\NFsem\UnitType}
&&
\Semcat(\semfun\sembracket\fstarg,\semfun\sembracket\UnitType)
}\end{array}
\end{equation}
\begin{equation} \label{Proof_Of_nu_Operation_Three} 
\begin{array}{c}\xymatrix{
\NFPSh\tau \CatProduct \NFPSh{\tau'} 
\ar[d]_-{\pairmap_{\tau\ProductType\tau'}}|-{\iso}
\ar[rr]^-{\NFsem\tau \CatProduct \NFsem{\tau'}}
&&
\Semcat(\semfun\sembracket\fstarg,\semfun\sembracket\tau)
\CatProduct
\Semcat(\semfun\sembracket\fstarg,\semfun\sembracket{\tau'})
\ar[d]|-{\iso}
\\
\NFPSh{\tau\ProductType\tau'}
\ar[rr]_-{\NFsem{\tau\ProductType\tau'}}
&& 
\Semcat(\semfun\sembracket\fstarg,\semfun\sembracket{\tau\ProductType\tau'})
}\end{array}
\end{equation}
\begin{equation} \label{Proof_Of_nu_Operation_Four} 
\begin{array}{c}\xymatrix{
(\NFPSh{\tau'})^{\VarPSh\tau} 
\ar[d]_-{\absmap_{\tau\ArrowType\tau'}}|-{\iso}
\ar[rr]^-{(\NFsem{\tau'})^{\VarPSh\tau}}
&&
(\Semcat(\semfun\sembracket\fstarg,\semfun\sembracket{\tau'}))^{\VarPSh\tau}
\ar[d]|-{\iso}
\\
\NFPSh{\tau\ArrowType\tau'}
\ar[rr]_-{\NFsem{\tau\ArrowType\tau'}}
&&
\Semcat(\semfun\sembracket\fstarg,\semfun\sembracket{\tau\ArrowType\tau'})
}\end{array}
\end{equation}

\newcommand{\boldpar}[1]{\mbox{\boldmath$($}#1\mbox{\boldmath$)$}}
\newcommand{\mymark}[1]{\mathbf{#1}}
\newcommand{\Hmark}
  {\mbox{\scriptsize\boldmath$($}\mymark{H}\mbox{\scriptsize\boldmath$)$}}
\newcommand{\Imark}
  {\mbox{\scriptsize\boldmath$($}\mymark{I}\mbox{\scriptsize\boldmath$)$}}
\newcommand{\Jmark}
{\mbox{\scriptsize\boldmath$($}\mymark{J}\mbox{\scriptsize\boldmath$)$}}
\newcommand{\Qmark}
{\mbox{\scriptsize\boldmath$($}\mymark{Q}\mbox{\scriptsize\boldmath$)$}}
\newcommand{\Umark}
{\mbox{\scriptsize\boldmath$($}\mymark{U}\mbox{\scriptsize\boldmath$)$}}

\section{Proof of Theorem~\protect\ref{THEOREM}}\label{PROOF}

For
$\SubNEmap\atype: \SubNEPSh\atype \rightmono \NEPSh\atype$ and
$\SubNFmap\atype: \SubNFPSh\atype \rightmono \NFPSh\atype$ 
the equalisers of~(\ref{NE_eqn}) and~(\ref{NF_eqn}) 
respectively, we 
show that  
$( \setof{ \SubNEPSh\atype }_{\atype \in \TypeClosure\BaseTypeSet} 
   ,
   \setof{ \SubNFPSh\atype }_{\atype \in \TypeClosure\BaseTypeSet}
   )$ 
is a sub~$\pair{\SSSNE,\SSSNF}$-algebra of $(\NEPSh{},\NFPSh{})$. 
That is, that we have the following situation
$$\begin{array}{ccc}
\xymatrix{
\raisebox{1.5mm}{$\SSSNE(\SubNEPSh{},\SubNFPSh{})$} \ar@{-->}[r] 
\ar@{>->}[d]_-{\SSSNE(\SubNEmap{},\SubNFmap{})}
& \raisebox{1.5mm}{$\SubNEPSh{}$} \ar@{>->}[d]^-{\SubNEmap{}}
\\
\SSSNE(\NEPSh{},\NFPSh{}) \ar[r]|-\iso 
& \NEPSh{}
}
& \qquad\qquad &
\xymatrix{
\raisebox{1.5mm}{$\SSSNF(\SubNEPSh{},\SubNFPSh{})$} \ar@{-->}[r]
\ar@{>->}[d]_-{\SSSNF(\SubNEmap{},\SubNFmap{})}
& \raisebox{1.5mm}{$\SubNFPSh{}$} \ar@{>->}[d]^-{\SubNFmap{}}
\\
\SSSNF(\NEPSh{},\NFPSh{}) \ar[r]|-\iso
& \NFPSh{}
}
\end{array}$$

\medskip
Below, we will use the following conventions: $\boldpar{\mymark{H}}$
indicates commutativity by the homomorphism property;
$\boldpar{\mymark{I}}$ and $\boldpar{\mymark{J}}$ respectively indicate
commutativity by the definition of $\SubNEmap{}$ and $\SubNFmap{}$; and
$\boldpar{\mymark{Q}}$ and $\boldpar{\mymark{U}}$ respectively indicate
commutativity by the definition of $\quotemap{}$ and $\unquotemap{}$. 
\begin{enumerate}
  \item
    For $\atype \in \TypeClosure\BaseTypeSet$, the map 
    $\VarPSh\atype \longstackrightarrow{\varmap_\atype} \NEPSh\atype$ 
    equalises diagram~(\ref{NE_eqn}), and hence factors through 
    ${\SubNEPSh\atype \stackrightmono{\SubNEmap\atype} \NEPSh\atype}$, because
    the diagram 
\begin{center}$\begin{array}{c}\xymatrix@C=85pt{
\terminalobj 
\ar[rd]|-{\varmap_{\atype,\ctxtemb\atype}(\id_{\ctxtemb\atype})}
\ar@/^5pt/[rrd]^-{\vmap_\atype}
\ar@/_5pt/[ddr]_-{\id_{\mbox{\scriptsize$\gsemfun\sembracket\atype$}}}
\ar@{}[ddr]|(.525){\quad\qquad\Hmark}
\\
& \ar@{}[rd]|-\puncture
\NEPSh\atype\ctxtemb\atype \ar[r]^-\iso
\ar[d]_-{\extNEsem{\atype,\ctxtemb\atype}} &
\Hom{\commayon\ctxtemb\atype,\NEobj\atype}
\ar[d]^-{\arHom{\id,\unquotemap\atype}} 
\\
& \Hom{\gsemfun\sembracket\atype,\gsemfun\sembracket\atype}
\ar[r]_-{\arHom{\unquotemap\atype \vmap_\atype,\id}} &
\Hom{\commayon\ctxtemb\atype,\gsemfun\sembracket\atype}
}\end{array}$
\quad in $\Set$
\end{center}
    commutes.

  \item \label{Third_Case} 
    For $\atype, \atype' \in \TypeClosure\BaseTypeSet$, the map
    $\SubNEPSh{\atype\producttype\atype'}
    \longstackrightmono{\SubNEmap{\atype\producttype\atype'}} 
     \NEPSh{\atype\producttype\atype'}
       \longstackrightarrow{\fstmap_\atype^{(\atype')}}
     \NEPSh{\atype}$
    equalises diagram~(\ref{NE_eqn}), and hence factors through 
    $\SubNEPSh\atype \stackrightmono{\SubNEmap\atype} \NEPSh\atype$, as shown
    by the diagram below.
$$\xymatrix{
\SubNEPSh{\atype\producttype\atype'}
\ar[rrr]^-{\SubNEmap{\atype\producttype\atype'}} 
\ar[dd]_- {\SubNEmap{\atype\producttype\atype'}} 
\ar@{}[rrrdd]|(.5){\!\!\!\!\!\!\!\!\!\!\!\!\!\!\!\Imark}
&&& 
\NEPSh{\atype\producttype\atype'} \ar[r]^{\fstmap_\atype^{(\atype')}}
\ar[d]_-{\extNEsem{\atype\producttype\atype}} 
\ar@{}[rd]|(.5){\Hmark}
& \NEPSh\atype
\ar[d]^-{\extNEsem\atype}
\\
&&& \Hom{\gsemfun\sembracket\simplefstarg,
        \gsemfun\sembracket\atype\times\gsemfun\sembracket{\atype'}}
\ar[r]_-{\arHom{\id,\proj_1}} \ar[d]_-{\arHom{\unquotemap{}\vmap,\id}} & 
\Hom{\gsemfun\sembracket\simplefstarg,\gsemfun\sembracket\atype} 
\ar[dd]^-{\arHom{\unquotemap{}\vmap,\id}}
\\
\NEPSh{\atype\producttype\atype'} \ar[d]_-{\fstmap_\atype^{(\atype')}} 
\ar[r]|-\iso
& \Hom{\commayon(\fstarg),\NEobj{\atype\producttype\atype'}} 
\ar[d]|-{\arHom{\id,(\fstmap_\atype^{(\atype')},\proj_1)}}
\ar[rr]^-{\arHom{\id,\unquotemap{\atype\producttype\atype'}}} 
\ar@{}[drr]|(.5){\qquad\qquad\Umark}
&& 
\Hom{\commayon(\fstarg),
     \gsemfun\sembracket\atype\times\gsemfun\sembracket{\atype'}} 
\ar[rd]|-{\arHom{\id,\proj_1}} &
\\
\NEPSh\atype \ar[r]|-\iso
& \Hom{\commayon(\fstarg), \NEobj\atype} 
\ar[rrr]_-{\arHom{\id,\unquotemap\atype}} && & 
\Hom{\commayon(\fstarg),\gsemfun\sembracket\atype}
}$$
    Analogously, for $\atype, \atype' \in \BaseTypeSet$, the map 
    $\SubNEPSh{\atype'\producttype\atype} 
       \longstackrightmono{\SubNEmap{\atype'\producttype\atype}}
     \NEPSh{\atype'\producttype\atype}
       \longstackrightarrow{\sndmap_\atype^{(\atype')}}
     \NEPSh{\atype}$
    equalises diagram~(\ref{NE_eqn}), and hence also factors through 
    $\SubNEPSh\atype \stackrightmono{\SubNEmap\atype} \NEPSh\atype$. 

  \item \label{Second_Case}
    For $\atype, \atype' \in \TypeClosure\BaseTypeSet$, the map 
    $\SubNEPSh{\atype' \arrowtype \atype} \times \SubNFPSh{\atype'}
       \longlongstackrightmono
         {\SubNEmap{\atype' \arrowtype \atype} \times \SubNFmap{\atype'}}
     \NEPSh{\atype' \arrowtype \atype} \times \NFPSh\atype'
       \longlongstackrightarrow{\appmap_\atype^{(\atype')}}
     \NEPSh\atype$
    equalises diagram~(\ref{NE_eqn}), and hence factors through 
    $\SubNEPSh\atype \stackrightmono{\SubNEmap\atype} \NEPSh\atype$, as shown
    by the diagram below.
{\scriptsize$$
\!\!\!\!\!\!\!\!\!\!\!\!\!\!\!\!\!\!\!\!\!\!\!\!\!\!\!\!\!\!\!\!\!\!\!\!
\xymatrix{
\SubNEPSh{\atype' \arrowtype \atype} \times \SubNFPSh{\atype'}
\ar[rr]
  ^-{\SubNEmap{\atype' \arrowtype \atype} \times \SubNFmap{\atype'}} 
\ar[dr]|-{\id \times (\extNFsem{\atype'} \, \SubNFmap{\atype'})}
\ar[dd]|-{\SubNEmap{\atype'\arrowtype\atype} \times \id}
& & 
\NEPSh{\atype' \arrowtype \atype} \times \NFPSh{\atype'}
\ar[rr]^-{\appmap_\atype^{(\atype')}}
\ar[d]|-{\extNEsem{\atype'\arrowtype\atype} \times \extNFsem{\atype'}}
\ar@{}[rrd]|-{\Hmark}
& 
& 
\NEPSh\atype \ar[d]|-{\extNEsem{\atype}} 
&
\\ 
& 
\SubNEPSh{\atype' \arrowtype \atype} \times
\arHom{\gsemfun\sembracket\simplefstarg,\gsemfun\sembracket{\atype'}}
\ar[r]
_-{\raisebox{-3mm}{\scriptsize$(\extNEsem{\atype'\arrowtype\atype} \SubNEmap{\atype'\arrowtype\atype}) \times \id$}} 
\ar[d]|-{\SubNEmap{\atype'\arrowtype\atype} \times \id} 
& 
\arHom{\gsemfun\sembracket\simplefstarg,\gsemfun\sembracket\atype^{\gsemfun\sembracket{\atype'}}}
\times \arHom{\gsemfun\sembracket\simplefstarg,\gsemfun\sembracket{\atype'}}
\ar[dd]|-{\arHom{\unquotemap{}\vmap,\id} \times \id} \ar[r]|-\iso 
&
\arHom{\gsemfun\sembracket\simplefstarg,
     \gsemfun\sembracket\atype^{\gsemfun\sembracket{\atype'}}
       \times \gsemfun\sembracket{\atype'}}
\ar[r]^-{\arHom{\id,\ev}}
\ar[ddd]|-{\arHom{\unquotemap{}\vmap,\id}}
& 
\arHom{\gsemfun\sembracket\simplefstarg,\gsemfun\sembracket\atype} 
\ar[dddddd]|-{\arHom{\unquotemap{}\vmap,\id}}
\\ 
\NEPSh{\atype' \arrowtype \atype} \times \SubNFPSh{\atype'}
\ar[r]^-{\id \times (\extNFsem{\atype'} \SubNFmap{\atype'})}
\ar[dddd]|-{\id \times \SubNFmap{\atype'}}
& 
\NEPSh{\atype' \arrowtype \atype} \times 
\arHom{\gsemfun\sembracket\simplefstarg,\gsemfun\sembracket{\atype'}}
\ar[d]|-{\iso\times\id}
\ar@{}[r]|(.55){\Imark}
&
& & 
&
\\ 
\ar@{}[dr]|-{\Jmark}
&
\arHom{\commayon(\fstarg),\NEobj{\atype' \arrowtype \atype}} \times
\arHom{\gsemfun\sembracket\simplefstarg,\gsemfun\sembracket{\atype'}}
\ar[r]^-{\arHom{\id,\unquotemap{\atype'\arrowtype\atype}} \times \id}
\ar[d]|-{\id \times \arHom{\unquotemap{}\vmap,\id}}
&  
\arHom{\commayon(\fstarg),
       \gsemfun\sembracket{\atype}^{\gsemfun\sembracket{\atype'}}}
\times \arHom{\gsemfun\sembracket\simplefstarg,\gsemfun\sembracket{\atype'}}
\ar[d]|-{\id \times \arHom{\unquotemap{}\vmap,\id}} 
&
& 
\\ 
&
\arHom{\commayon(\fstarg),\NEobj{\atype' \arrowtype \atype}} 
\times
\arHom{\commayon(\fstarg),\gsemfun\sembracket{\atype'}}
\ar[r]^-{\arHom{\id,\unquotemap{\atype'\arrowtype\atype}}\times \id}
\ar[dr]|-\iso
\ar[d]|-{\id \times \arHom{\id,\quotemap{\atype'}}}
&
\arHom{\commayon(\fstarg),
       \gsemfun\sembracket{\atype}^{\gsemfun\sembracket{\atype'}}}
\times 
\arHom{\commayon(\fstarg),\gsemfun\sembracket{\atype'}}
\ar[r]|-\iso
& 
\arHom{\commayon(\fstarg),
     \gsemfun\sembracket{\atype}^{\gsemfun\sembracket{\atype'}}
       \times \gsemfun\sembracket{\atype'}}
\ar[dddr]|-{\arHom{\id,\ev}}
&
\\ 
& 
\arHom{\commayon(\fstarg),\NEobj{\atype'\arrowtype\atype}} \times
\arHom{\commayon(\fstarg),\NFobj{\atype'}}
\ar[rd]|-\iso
& 
\arHom{\commayon(\fstarg),
       \NEobj{\atype'\arrowtype\atype}\times\gsemfun\sembracket{\atype'}} 
\ar[d]^-{\arHom{\id,\id\times\quotemap{\atype'}}}
\ar[ur]|-{\arHom{\id,\unquotemap{\atype'\arrowtype\atype}\times\id}}
\ar@{}[rrd]|(.45){\Umark}
& 
&
\\ 
\NEPSh{\atype' \arrowtype \atype} \times \NFPSh{\atype'}
\ar[d]|-{\appmap_\atype^{(\atype')}}
\ar[ur]|-{\iso\times\iso}
\ar[rr]|-\iso
& 
& 
\arHom{\commayon(\fstarg),
       \NEobj{\atype'\arrowtype\atype} \times \NFobj{\atype'}} 
\ar[d]|-{\arHom{\id,(\appmap_\atype^{(\atype')},\ev)}} 
& &
\\ 
\NEPSh\atype \ar[rr]|-\iso
& 
& 
\arHom{\commayon(\fstarg),\NEobj\atype} 
\ar[rr]_-{\arHom{\id,\unquotemap\atype}} 
& 
& 
\arHom{\commayon(\fstarg),\gsemfun\sembracket{\atype}}
}$$}

  \item
    For $\abasetype \in \BaseTypeSet$, the map 
    $\VarPSh\abasetype 
       \longstackrightarrow{\varmap_\abasetype} 
     \NFPSh\abasetype$ 
    equalises diagram~(\ref{NF_eqn}) with $\atype = \abasetype$, and hence 
    factors through 
    $\SubNFPSh\abasetype 
       \stackrightmono{\SubNFmap\abasetype} 
     \NFPSh\abasetype$
    because the diagram
\begin{center}
$\begin{array}{c}\xymatrix@C=65pt{
\terminalobj
\ar[dr]|-{\id_{\mbox{\scriptsize$\gsemfun\sembracket\abasetype$}}} 
\ar[r]^-{\varmap_{\abasetype,\ctxtemb\abasetype}(\id_{\ctxtemb\abasetype})} 
\ar@/^40pt/[rrd]<1ex>
  ^-{(\varmap_\abasetype,
      \id_{\mbox{\scriptsize$\semfun\sembracket\abasetype$}})}
\ar@/_43pt/[rrd]<-1ex>
  _-{\quotemap\abasetype \unquotemap\abasetype \vmap_\abasetype}
\ar@{}[rd]|(.4){\;\;\;\;\;\qquad\qquad\Hmark}
& \ar@{}[d]|(.625){\qquad\qquad\quad\puncture} 
\NFPSh\abasetype\ctxtemb\abasetype \ar[rd]|-\iso 
\ar[d]|-{\extNFsem{\abasetype,\ctxtemb\abasetype}}
\\
& \Hom{\gsemfun\sembracket\abasetype,\gsemfun\sembracket\abasetype} 
\ar[r]_-{\arHom{\unquotemap\abasetype
    \vmap_\abasetype,\quotemap\abasetype}_{\ctxtemb\abasetype}} 
& \Hom{\commayon\ctxtemb\abasetype,\NFobj\abasetype}
}\end{array}$
\quad in $\Set$
\end{center}
    commutes. 

  \item
    For $\abasetype \in \BaseTypeSet$ and 
    $\atype' \in \TypeClosure\BaseTypeSet$, the map 
    $\SubNEPSh{\abasetype\producttype\atype'} 
       \longstackrightmono{\SubNEmap{\abasetype\producttype\atype'}}
     \NEPSh{\abasetype\producttype\atype'}
       \longstackrightarrow{\fstmap_\abasetype^{(\atype')}}
     \NFPSh{\abasetype}$
    equalises diagram~(\ref{NF_eqn}) with $\atype = \abasetype$, and hence
    factors through 
    $\SubNFPSh\abasetype 
       \stackrightmono{\SubNFmap\abasetype} 
     \NFPSh\abasetype$, 
    as shown by the diagram below (which depends on the diagram 
    in item~\ref{Third_Case} above with $\atype = \abasetype$).
$$\xymatrix{
& 
\ar[dl]_-{\SubNEmap{\abasetype\producttype\atype'}} 
\SubNEPSh{\abasetype\producttype\atype'}
\ar[d]|-{\SubNEmap{\abasetype\producttype\atype'}}
\ar[dr]^-{\SubNEmap{\abasetype\producttype\atype'}} 
& 
\\ 
\NEPSh{\abasetype\producttype\atype'} 
\ar[d]_-{\fstmap_\abasetype^{(\atype')}}
&  
\NEPSh{\abasetype\producttype\atype'} 
\ar[d]|-{\fstmap_\abasetype^{(\atype')}}
&  
\NEPSh{\abasetype\producttype\atype'} 
\ar[d]^-{\fstmap_\abasetype^{(\atype')}}
\\ 
\NFPSh{\abasetype} 
\ar[d]_-{\extNFsem\abasetype}
& 
\ar[l]|-\iso \NEPSh{\abasetype} \ar[r]|-\iso
\ar[d]|-\iso \ar[dl]|-{\extNEsem\abasetype}
& 
\NFPSh{\abasetype} 
\ar@/^30pt/[lddd]|-\iso
\\ 
\Hom{\gsemfun\sembracket\simplefstarg,\gsemfun\sembracket\abasetype}
\ar[dr]_-{\arHom{\unquotemap{}\vmap,\id}}
\ar@{}[r]|-\puncture
\ar@/_30pt/[rdd]_-{\arHom{\unquotemap{}\vmap,\quotemap\abasetype}}
& 
\Hom{\commayon(\fstarg),\NEobj\abasetype}
\ar[d]_-\iso^-{\arHom{\id,\unquotemap\abasetype}}
\ar@{}[r]|(.625){\raisebox{-11.5mm}{\scriptsize$\Qmark\Umark$}}
& 
\\ 
& 
\Hom{\commayon(\fstarg),\gsemfun\sembracket\abasetype}
\ar[d]|-{\arHom{\id,\quotemap\abasetype}}
& 
\\ 
& 
\Hom{\commayon(\fstarg),\NFobj\abasetype}
& 
}$$
    Analogously, for $\abasetype \in \BaseTypeSet$ and 
    $\atype' \in \TypeClosure\BaseTypeSet$, the map 
    $\SubNEPSh{\atype'\producttype\abasetype} 
       \longstackrightmono{\SubNEmap{\atype'\producttype\abasetype}}
     \NEPSh{\atype'\producttype\abasetype}
       \longstackrightarrow{\sndmap_\abasetype^{(\atype')}}
     \NFPSh{\abasetype}$
    equalises diagram~(\ref{NF_eqn}) for $\atype = \abasetype$, and hence also
    factors through  
    $\SubNFPSh\abasetype 
       \stackrightmono{\SubNFmap\abasetype} 
     \NFPSh\abasetype$. 

  \item
    For $\abasetype \in \BaseTypeSet$ and 
    $\atype' \in \TypeClosure\BaseTypeSet$, the map 
    $\SubNEPSh{\atype'\arrowtype\abasetype} 
     \times
     \SubNFPSh{\atype'}
       \longlongstackrightmono
         {\SubNEmap{\atype'\arrowtype\abasetype} \times \SubNFmap{\atype'}}
     \NEPSh{\atype'\arrowtype\abasetype} \times \NFPSh{\atype'}
       \longstackrightarrow{\appmap_\abasetype^{(\atype')}}
     \NFPSh{\abasetype}$
    equalises diagram~(\ref{NF_eqn}) with $\atype = \abasetype$, and hence
    factors through 
    $\SubNFPSh\abasetype 
       \stackrightmono{\SubNFmap\abasetype} 
     \NFPSh\abasetype$, 
    as shown by the diagram below (which depends on the diagram 
    in item~\ref{Second_Case} above with $\atype = \abasetype$).
$$\xymatrix{
& 
\ar[dl]_-{\SubNEmap{\atype'\arrowtype\abasetype} \times \SubNFmap{\atype'}} 
\SubNEPSh{\atype'\arrowtype\abasetype}
\times
\SubNFPSh{\atype'}
\ar[d]|-{\SubNEmap{\atype'\arrowtype\abasetype} \times \SubNFmap{\atype'}} 
\ar[dr]^-{\SubNEmap{\atype'\arrowtype\abasetype} \times \SubNFmap{\atype'}} 
& 
\\ 
\NEPSh{\atype'\arrowtype\abasetype} 
\times
\NFPSh{\atype'}
\ar[d]_-{\appmap_\abasetype^{(\atype')}}
&  
\NEPSh{\atype'\arrowtype\abasetype} 
\times
\NFPSh{\atype'}
\ar[d]|-{\appmap_\abasetype^{(\atype')}}
&  
\NEPSh{\atype'\arrowtype\abasetype} 
\times
\NFPSh{\atype'}
\ar[d]^-{\appmap_\abasetype^{(\atype')}}
\\ 
\NFPSh{\abasetype} 
\ar[d]_-{\extNFsem\abasetype}
& 
\ar[l]|-\iso \NEPSh{\abasetype} \ar[r]|-\iso
\ar[d]|-\iso \ar[dl]|-{\extNEsem\abasetype}
& 
\NFPSh{\abasetype} 
\ar@/^30pt/[lddd]|-\iso
\\ 
\Hom{\gsemfun\sembracket\simplefstarg,\gsemfun\sembracket\abasetype}
\ar[dr]_-{\arHom{\unquotemap{}\vmap,\id}}
\ar@{}[r]|-\puncture
\ar@/_30pt/[rdd]_-{\arHom{\unquotemap{}\vmap,\quotemap\abasetype}}
& 
\Hom{\commayon(\fstarg),\NEobj\abasetype}
\ar[d]_-\iso^-{\arHom{\id,\unquotemap\abasetype}}
\ar@{}[r]|(.55){\raisebox{-11.5mm}{\scriptsize$\Qmark\Umark$}}
& 
\\ 
& 
\Hom{\commayon(\fstarg),\gsemfun\sembracket\abasetype}
\ar[d]|-{\arHom{\id,\quotemap\abasetype}}
& 
\\ 
& 
\Hom{\commayon(\fstarg),\NFobj\abasetype}
& 
}$$

  \item
    Diagram~(\ref{NF_eqn}) with $\atype = \unittype$ commutes, and hence the
    map 
    $\terminalobj 
       \myarrow{\longarrowlength}{\unitmap_\unittype}{\iso}
     \NFPSh\unittype$
    factors through the equaliser
    $\SubNFPSh\unittype 
       \doublestackrightmono{\SubNFmap\unittype}{\iso}
     \NFPSh\unittype$.

  \item
    For $\atype, \atype' \in \TypeClosure\BaseTypeSet$, the map 
    $\SubNFPSh\atype \times \SubNFPSh{\atype'}
       \longstackrightmono{\SubNFmap\atype\times\SubNFmap{\atype'}}
     \NFPSh{\atype} \times \NFPSh{\atype'}
       \myarrow{\longlongarrowlength}
               {\pairmap_{\atype\producttype\atype'}}
               {\iso}
     \NFPSh{\atype\producttype\atype'}$
    equalises diagram~(\ref{NF_eqn}), and hence factors through 
    $\SubNFPSh{\atype\producttype\atype'} 
       \longstackrightmono{\SubNFmap{\atype\producttype\atype'}} 
     \NFPSh{\atype\producttype\atype'}$, 
    as shown by the diagram below.
$$\xymatrix{
\SubNFPSh{\atype} \times \SubNFPSh{\atype'}
\ar[r]^-{\SubNFmap{\atype} \times \SubNFmap{\atype'}}
\ar[ddd]_-{\SubNFmap{\atype} \times \SubNFmap{\atype'}}
\ar@{}[ddr]|(.3){\raisebox{-22mm}{\scriptsize$\Jmark$}}
& 
\NFPSh{\atype} \times \NFPSh{\atype'}
\ar[rr]^-{\pairmap_{\atype\producttype\atype'}}_-\iso
\ar[d]^{\extNFsem{\atype}\times\extNFsem{\atype'}}
\ar@{}[drr]|(.5){\Hmark}
&
&
\NFPSh{\atype\producttype\atype'}
\ar[d]^{\extNFsem{\atype\producttype\atype'}}
\\ 
& 
\Hom{\gsemfun\sembracket\simplefstarg,\gsemfun\sembracket\atype} 
\times 
\Hom{\gsemfun\sembracket\simplefstarg,\gsemfun\sembracket{\atype'}}
\ar[rr]|-\iso 
\ar[d]_-{\arHom{\unquotemap{}\vmap,\quotemap{\atype}}
         \times 
         \arHom{\unquotemap{}\vmap,\quotemap{\atype'}}} 
& 
&
\ar[d]^-{\arHom{\unquotemap{}\vmap,\id}}
\Hom{\gsemfun\sembracket\simplefstarg,
     \gsemfun\sembracket\atype \times \gsemfun\sembracket{\atype'}}
\\ 
& 
\Hom{\commayon(\fstarg),\NFobj{\atype}} 
  \times \Hom{\commayon(\fstarg),\NFobj{\atype'}}
\ar[dr]|-\iso
& 
& 
\ar[dl]_-{\arHom{\id,\quotemap\atype\times\quotemap{\atype'}}}
\ar[dd]^-{\arHom{\id,\quotemap{\atype\producttype\atype'}}} 
\Hom{\commayon(\fstarg),
     \gsemfun\sembracket\atype\times\gsemfun\sembracket{\atype'}}
\\ 
\NFPSh{\atype}\times\NFPSh{\atype'}
\ar[rr]|-\iso \ar[ru]|-{\iso\times\iso} 
\ar[d]_-{\pairmap_{\atype\producttype\atype'}}^-\iso & & 
\Hom{\commayon(\fstarg),\NFobj{\atype}\times\NFobj{\atype'}}
\ar[dr]
  _(.3){\arHom{\id,(\pairmap_{\atype\producttype\atype'},\id)}\qquad}|-\iso 
\ar@{}[r]|(.65){\Qmark}
&
\\ 
\NFPSh{\atype\producttype\atype'}
\ar[rrr]|-\iso & & &
\Hom{\commayon(\fstarg),\NFobj{\atype\producttype\atype'}}
}$$

  \item
    For $\atype, \atype' \in \TypeClosure\BaseTypeSet$, the map
    $(\SubNFPSh{\atype'})^{\VarPSh\atype}
       \longstackrightmono{(\SubNFmap{\atype'})^{\VarPSh\atype}}
     (\NFPSh{\atype'})^{\VarPSh\atype}
       \myarrow{\longlongarrowlength}{\absmap_{\atype\arrowtype\atype'}}{\iso}
     \NFPSh{\atype\arrowtype\atype'}$
    equalises diagram~(\ref{NF_eqn}), and hence factors through
    $\SubNFPSh{\atype\arrowtype\atype'} 
       \longstackrightmono{\SubNFmap{\atype\arrowtype\atype'}} 
     \NFPSh{\atype\arrowtype\atype'}$, 
    as shown by the diagram below.
$$\xymatrix{
(\SubNFPSh{\atype'})^{\VarPSh\atype} 
\ar[r]^-{(\SubNFmap{\atype'})^{\VarPSh\atype}}
\ar[ddd]_-{(\SubNFmap{\atype'})^{\VarPSh\atype}}
\ar@{}[ddr]|(.3){\raisebox{-22mm}{\scriptsize$\Jmark$}}
& 
(\NFPSh{\atype'})^{\VarPSh\atype} 
\ar[rr]^-{\absmap_{\atype\arrowtype\atype'}}_-\iso
\ar[d]^{(\extNFsem{\atype'})^{\VarPSh\atype}}
\ar@{}[drr]|(.5){\Hmark}
&
&
\NFPSh{\atype\arrowtype\atype'}
\ar[d]^{\extNFsem{\atype\arrowtype\atype'}}
\\
& \Hom{\gsemfun\sembracket\fstarg,\gsemfun\sembracket{\atype'}
       }^{\VarPSh\atype} 
\ar[rr]|-\iso 
\ar[d]_-{\arHom{\unquotemap{}\vmap,\quotemap{\atype'}}^{\VarPSh\atype}} & &
\ar[d]^-{\arHom{\unquotemap{}\vmap,\id}}
\Hom{\gsemfun\sembracket\simplefstarg,
     \gsemfun\sembracket{\atype'}^{\gsemfun\sembracket\atype}}
\\
& \Hom{\commayon(\fstarg),\NFobj{\atype'}}^{\VarPSh\atype} \ar[dr]|-\iso
& 
& 
\ar[dl]_-{\arHom{\id,{\quotemap{\atype'}}^{\unquotemap\atype\vmap_\atype}}}
\ar[dd]^-{\arHom{\id,\quotemap{\atype\arrowtype\atype'}}} 
\Hom{\commayon(\fstarg),
     \gsemfun\sembracket{\atype'}^{\gsemfun\sembracket\atype}}
\\
(\NFPSh{\atype'})^{\VarPSh\atype}
\ar[rr]|-\iso \ar[ru]|-{(\iso)^{\VarPSh\atype}} 
\ar[d]_-{\absmap_{\atype\arrowtype\atype'}}^-\iso & & 
\Hom{\commayon(\fstarg),(\NFobj{\atype'})^{\Varobj\atype}}
\ar[dr]_(.3){\arHom{\id,(\absmap_{\atype\arrowtype\atype'},\id)}\qquad}|-\iso 
\ar@{}[r]|(.65){\Qmark}
&
\\
\NFPSh{\atype\arrowtype\atype'}
\ar[rrr]|-\iso & & &
\Hom{\commayon(\fstarg),\NFobj{\atype\arrowtype\atype'}}
}$$
\end{enumerate}

\end{document}